\documentclass[US-letter,USenglish,authorcolumns,autoref]{lipics-v2019}
\usepackage[noend]{algpseudocode}
\usepackage{algorithm}
\usepackage{times}
\usepackage{amssymb}
\usepackage{amsmath}
\usepackage{latexsym}
\usepackage{graphics}
\usepackage{url}
\usepackage{verbatim}
\usepackage{color}
\usepackage{setspace}
\usepackage{multirow}
\usepackage{lineno}
\usepackage{booktabs}
\usepackage{graphicx}
\usepackage{caption}

\algnewcommand\And{\textbf{and}}
\algnewcommand\Or{\textbf{or}}
\algnewcommand\Not{\textbf{not}}
\algnewcommand\In{\textbf{in}}
\algnewcommand\Each{\textbf{each}}

\newtheorem{conjecture}[theorem]{Conjecture}  
\newtheorem{observation}[theorem]{Observation}      


\newcommand{\squishlist}{
 \begin{list}{$\bullet$}
  { \setlength{\itemsep}{0pt}
     \setlength{\parsep}{3pt}
     \setlength{\topsep}{3pt}
     \setlength{\partopsep}{0pt}
     \setlength{\leftmargin}{2.5em}
     \setlength{\labelwidth}{1em}
     \setlength{\labelsep}{0.5em} } }

\newcommand{\squishlisttwo}{
 \begin{list}{$\triangleright$}
  { \setlength{\itemsep}{0pt}
     \setlength{\parsep}{0pt}
    \setlength{\topsep}{0pt}
    \setlength{\partopsep}{0pt}
    \setlength{\leftmargin}{2em}
    \setlength{\labelwidth}{1.5em}
    \setlength{\labelsep}{0.5em} } }

\newcommand{\squishend}{
  \end{list}  }


\usepackage{listings}

\definecolor{verbgray}{gray}{0.9}

\lstnewenvironment{code}{%
  \lstset{backgroundcolor=\color{verbgray},
 frame=single,
  language=C,
  framerule=0pt,
  basicstyle=\ttfamily,
  showstringspaces=false,
  commentstyle=\color{blue}\textit,
  keywordstyle=\color{black}\bf,
  numbers=none,
  keepspaces=true,
  columns=fullflexible}}{}


\newcommand{\same}[1]{\mathcal{S}_{#1}}
\newcommand{\opposite}[1]{\mathcal{O}_{#1}}

\newcommand{\LC}[1]{\mathcal{L}_{#1}}

\newcommand{\PCR}{\textrm{PCR}} 
\newcommand{\CCR}{\textrm{CCR}}
\newcommand{\PRR}{\textrm{PRR}}

\newcommand{\RLrep}[1]{\mathbf{{RL}}({#1})}
\newcommand{\LCrep}[1]{\mathbf{{LC}}({#1})}
\newcommand{\Samerep}[1]{\mathbf{{Same}}({#1})}

\newcommand{\RLLrep}[1]{\mathbf{{RL2}}({#1})}
\newcommand{\LCCrep}[1]{\mathbf{{LC2}}({#1})}
\newcommand{\OPPrep}[1]{\mathbf{{OPP}}({#1})}

\newcommand{\Special}{\mathbf{SP}}
\newcommand{\SpecialO}{\mathbf{SP2}}


\newcommand{\bB}{\mathbf{B}}

\newcommand{\bP}{\mathbf{P}}
\newcommand{\bC}{\mathbf{C}}

\newcommand{\bR}{\mathbf{R}}



\definecolor{shadecolor}{rgb}{.91, .91, .91}
\definecolor{bordercolor}{rgb}{.8, .8, .6}

\definecolor{ultramarine}{rgb}{0, 0.125, 0.376}

 \definecolor{arsenic}{rgb}{0.23, 0.27, 0.29}
 \definecolor{beige}{rgb}{0.96, 0.96, 0.86}

\definecolor{amber}{rgb}{1.0, 0.75, 0.0}
\definecolor{orange}{rgb}{1.0, 0.49, 0.0}
\definecolor{dandelion}{rgb}{0.94, 0.88, 0.19}

  \definecolor{indiagreen}{rgb}{0.07, 0.53, 0.03}
  \definecolor{huntergreen}{rgb}{0.21, 0.37, 0.23}

\newcommand{\blue}[1] {\textcolor{blue}{#1}}
\newcommand{\red}[1] {\textcolor{red}{#1}}

\newcommand{\bblue}[1] {{\bf \textcolor{blue}{#1}}}

\newcommand{\defo}[1] {\emph{\textcolor{blue}{#1}}}


\definecolor{shadecolor}{rgb}{.9, .9, .9}

 \usepackage{framed}


 \setlength\FrameRule{1.5pt}

    {\endMakeFramed}

    \newenvironment{frshaded*}{%
    \MakeFramed {\advance\hsize-\width \FrameRestore}}%
    {\endMakeFramed}
    
    \newcounter{examplecounter}
\newenvironment{exam}{
 \begin{frshaded*}
    \refstepcounter{examplecounter}%
    \noindent
  \textbf{Example \arabic{examplecounter}}%
  \quad
}{%
\end{frshaded*}
}


%
{\endMakeFramed}

\newenvironment{frshaded2*}{%
    \MakeFramed {\advance\hsize-\width \FrameRestore}}%
    {\endMakeFramed}

\newenvironment{result}{
 \begin{frshaded2*}
}{%
\end{frshaded2*}

}


%
{\endMakeFramed}

\newenvironment{frshaded3*}{%
    \MakeFramed {\advance\hsize-\width \FrameRestore}}%
    {\endMakeFramed}

\usepackage[normalem]{ulem}
\usepackage{changepage}

\definecolor{winered}{rgb}{0.5,0.2,0}

\usepackage{hyperref}
\hypersetup{
    colorlinks = true,
    allcolors={winered},
}

\title{Efficient constructions of the Prefer-same and Prefer-opposite de Bruijn sequences}

\titlerunning{Efficient constructions of the Prefer-same and Prefer-opposite de Bruijn sequences}


\author{Evan Sala}{School of Computer Science, University of Guelph, Canada}{}{}{}
\author{Joe Sawada}{School of Computer Science, University of Guelph, Canada}{}{}{}
\author{Abbas Alhakim}{Department of Mathematics, American University of Beirut, Lebanon}{}{}{}

\authorrunning{E. Sala, J. Sawada and A. Alhakim} 

\Copyright{Evan Sala, Joe Sawada and Abbas Alhakim}

\ccsdesc[500]{Mathematics of computing~Discrete mathematics~Combinatorics~Combinatorial algorithms}

\keywords{de Bruijn sequence, prefer-same, prefer-opposite, greedy algorithm, pure run-length register, Euler cycle, lexicographic compositions}

\nolinenumbers

\begin{document}

\maketitle

\begin{abstract}
The greedy Prefer-same de Bruijn sequence construction was first presented by Eldert et al.~[\emph{AIEE Transactions} 77 (1958)].   As a  greedy algorithm, it has one major downside: it requires an exponential amount of space to store the length $2^n$ de Bruijn sequence.  Though de Bruijn sequences have been heavily studied over the last 60 years, finding an efficient construction for the Prefer-same de Bruijn sequence has remained a tantalizing open problem.  In this paper, we unveil the underlying structure of the Prefer-same de Bruijn sequence and solve the open problem by presenting an efficient algorithm to construct it using $O(n)$ time per bit and only $O(n)$ space.  Following a similar approach, we also present an efficient algorithm to construct the Prefer-opposite de Bruijn sequence.
\end{abstract}

\vspace{-0.2in}
\section{Introduction}

Greedy algorithms often provide some of the nicest algorithms to exhaustively generate combinatorial objects, especially in terms of the simplicity of their descriptions.  An excellent discussion of such algorithms is given by Williams~\cite{williams} with examples given for a wide range of combinatorial objects
including permutations, set partitions,  binary trees,  and de Bruijn sequences.  
A downside to greedy constructions is that they generally require exponential space to keep track of which objects have already been visited.  Fortunately, most greedy constructions can also be constructed efficiently by either an iterative successor-rule approach, or by applying a recursive technique.  Such efficient constructions often provide extra underlying insight into both the combinatorial objects and the actual listing of the object being generated.

A \defo{de Bruijn sequence} of order $n$ is a sequence of bits that when considered cyclicly contains every length $n$ binary string as a substring exactly once; each such sequence has length $2^n$.   They have been studied as far back as 1894 with the work by Flye Sainte-Marie~\cite{flye}, receiving more significant attention starting in 1946  with the work of de Bruijn~\cite{DB}.   
Since then, many different de Bruijn sequence constructions have been presented in the literature (see surveys in \cite{fred-nfsr} and \cite{framework}). Generally, they fall into one of the following categories:  (i) greedy approaches (ii) iterative successor-rule based approaches which includes linear (and non-linear) feedback shift registers  (iii) string concatenation approaches  (iv) recursive approaches.  Underlying all of these algorithms is the fact that every de Bruijn sequence is in 1-1 correspondence with an Euler cycle in a related de Bruijn graph.

Perhaps the most well-known de Bruijn sequence is the one that is the lexicographically largest.  It has the following greedy Prefer-1 construction~\cite{martin}.

\begin{result} \noindent  \footnotesize {  \bf Prefer-1  construction}

\smallskip

\begin{enumerate}
\item Seed with $0^{n-1}$
\item {\bf Repeat} until no new bit is added:  Append 1 if it does not create a duplicate length $n$ substring; otherwise append 0 if it does not create a duplicate length $n$ substring
\item Remove the seed 
\end{enumerate}

\vspace{-0.1in}
\end{result}
\noindent
For example, applying this construction for $n=4$ we obtain the string:
$\mbox{\sout{\red{000}}}~1111011001010000. $
Like all greedy de Bruijn sequence constructions, this algorithm has a major downside: it requires an exponential amount of space to remember which substrings have already been visited.   Fortunately, the resulting sequence can also be constructed efficiently by applying an $O(n)$ time per bit  successor-rule which requires $O(n)$ space~\cite{fred-succ}.   By applying a  necklace concatenation approach, it can even be generated in amortized $O(1)$ time per bit and $O(n)$ space~\cite{fkm2}.

Two other interesting greedy constructions take into account the last bit generated.  They are known as the Prefer-same and Prefer-opposite constructions and their resulting sequences are, respectively, the lexicographically largest and smallest with respect to a run-length encoding\footnote{The run-length encoding of a string is discussed formally in Section~\ref{sec:rle}.}~\cite{revisit}.
The Prefer-same construction  was first presented by Eldert et al.~\cite{eldert} in 1958 and was revisited with a proof of correctness by Fredricksen~\cite{fred-nfsr} in 1982.  Recently, the description of the algorithm was simplified~\cite{revisit} as follows:
%
\begin{result} \noindent  \footnotesize {  \bf Prefer-same  construction}

\smallskip

\begin{enumerate}
\item Seed with length $n{-}1$ string $\cdots 01010$
\item Append 1
\item {\bf Repeat} until no new bit is added:  Append the {\bf same} bit as the last if it does not create a duplicate length $n$ substring; otherwise append the opposite bit as the last if it does not create a duplicate length $n$ substring
\item Remove the seed 
\end{enumerate}

\vspace{-0.1in}
\end{result}

\noindent
For $n=4$, the sequence generated by this Prefer-same  construction is
$ \mbox{\sout{\red{010}} }  1111000011010010.$  It has run-length encoding 44211211 which is the lexicographically largest amongst all de Bruijn sequences for $n=4$.

The Prefer-opposite construction is not greedy in the strictest sense since there is a special case when the current suffix is $1^{n-1}$.  Details about this special case are provided in the next section.  The construction presented below produces a shift of the sequence produced by the original presentation in~\cite{pref-opposite}.  Here, the initial seed of 
$0^{n-1}$ is rotated to the end so the resulting sequence is the lexicographically smallest with respect to a run-length encoding.
%
\begin{result} \noindent  \footnotesize {  \bf Prefer-opposite  construction}

\smallskip

\begin{enumerate}
\item Seed with $0^{n-1}$
\item Append 0
\item {\bf Repeat} until no new bit is added:  
  \begin{itemize}
  	\item {\bf If} current suffix is $1^{n-1}$ {\bf then}: append 1 if it is the first time $1^{n-1}$ has been seen; otherwise append 0 
	\item {\bf Otherwise}: append the {\bf opposite} bit as the last if it does not create a duplicate length $n$ substring; otherwise append the same bit as the last
 \end{itemize}
\item Remove the seed 
\end{enumerate}

\vspace{-0.1in}
\end{result}

\noindent
For $n=4$, the sequence generated by this Prefer-opposite  construction is
$ \mbox{\sout{\red{000}} }  0101001101111000.$  The run-length encoding of this sequence is given by 111122143.

To simplify our discussion, let:

\begin{itemize}  
\item $\same{n} = $ the de Bruijn sequence of order $n$ generated by the Prefer-same construction,  and
\item $\opposite{n} = $ the de Bruijn sequence of order $n$ generated by the Prefer-opposite construction. 
\end{itemize}

\noindent
Unlike the Prefer-1 sequence,  and despite the vast research on de Bruijn sequences, $\same{n}$ and $\opposite{n}$ have no known efficient construction.  For $\same{n}$,  finding an efficient construction has remained an elusive open problem for over 60 years.   
The closest attempt came in 1977 when Fredricksen and Kessler  devised a construction based on lexicographic compositions~\cite{lexcomp} that we discuss further in Section~\ref{sec:LC}.

The main results of this paper are to solve these open problems by providing  successor-rule based constructions for  $\same{n}$ and $\opposite{n}$.  They generate the respective sequences in $O(n)$ time per bit using only $O(n)$ space.  
The discovery of these efficient constructions hinged on the following idea: 
\begin{quote}
Most \emph{interesting} de Bruijn sequence are the result of joining together smaller cycles induced by \emph{simple} feedback shift registers.
\end{quote}
The initial challenge was to find such a simple underlying feedback function.  After careful study, the following function was revealed:
\[ f(w_1w_2\cdots w_n) = w_1 \oplus w_2 \oplus w_n,\]  
where $\oplus$ denotes addition modulo 2.   We demonstrate this feedback function has nice run-length properties when used to partition the set of all binary strings of length $n$ in Section~\ref{sec:PRR}.  The next challenge was to find appropriate representatives for each cycle induced by $f$ in order to apply the framework from~\cite{framework} to join the cycles together.   

\medskip

\noindent
{\bf Outline of paper.}
Before introducing our main results, we first provide an insight into greedy constructions for de Bruijn sequences that we feel has not been properly emphasized in the recent literature.  In particular, we demonstrate how all such constructions, which are generalized by the notion of preference or look-up tables~\cite{alhakim-span,xie}, are in fact just special cases of a standard Euler cycle algorithm on the de Bruijn graph.   
This discussion is found in Section~\ref{sec:euler} which also outlines a second Euler cycle algorithm underlying the cycle joining approach applied in our main result.
In Section~\ref{sec:rle}, we present background on run-length encodings.  
 In Section~\ref{sec:feedback}, we discuss feedback functions and de Bruijn successors and introduce the function $f(w_1w_2\cdots w_n) = w_1 \oplus w_2 \oplus w_n$  critical to our main results.
  In Section~\ref{sec:generic}, we present two generic de Bruijn successors based on the framework from~\cite{framework}.
In Section~\ref{sec:same} we present our first main result: an efficient successor-rule to generate $\same{n}$.  
 In Section~\ref{sec:opposite} we present our second main result: an efficient successor-rule to generate $\opposite{n}$.  
  In Section~\ref{sec:LC} we discuss the lexicographic composition algorithm from~\cite{lexcomp} and a related open problem.
   In Section~\ref{sec:implement} we discuss implementation details and analyze the efficiency of our algorithms.
In Section~\ref{sec:proof1} and Section~\ref{sec:proof2} we detail the technical aspects required to prove our main results.
We conclude by presenting directions for future research in Section~\ref{sec:fut}.  
Implementation of our algorithms, written in C,  presented in this paper can be found in the appendices and are available for download at~\url{http://debruijnsequence.org}.

\medskip

\noindent
{\bf Applications.}
One of the first instances of de Bruijn sequences is found in works of Sanskrit prosody by the ancient mathematician Pingala dating back to the 2nd century BCE.   
Since then, de Bruijn sequences and their related theory have a rich history of application.   
One of their more prominent applications, due to their random-like properties~\cite{golomb}, is in the generation of pseudo-random bit sequences which are used in stream ciphers~\cite{cipher}.  
In particular,  linear feedback shift register constructions (that omit the string of all 0s) allow for efficient hardware embeddings which have been classically applied to represent different maps in video games including Pitfall~\cite{archeo}.  
Another application uses de Bruijn sequences to crack cipher locks in an efficient manner~\cite{fred-nfsr}. 
More recently, the related de Bruijn graph has been applied to genome assembly~\cite{nature,euler}.  
Given the vast literature on de Bruijn sequences and their various methods of construction, the more interesting new results may relate to sequences with specific properties.  
This makes the de Bruijn sequences $\same{n}$ and $\opposite{n}$ of special interest since they are, respectively,  the lexicographically largest and smallest sequences with respect to a run-length encoding~\cite{revisit}.  Moreover, recently it was noted they have a relatively small discrepancy, which is the maximum absolute difference between the number of 0s and 1s in any substring, when compared to the sequences generated by the Prefer-1 construction~\cite{discrep}.

\section{Euler cycle algorithms and the de Bruijn graph} \label{sec:euler}

The \defo{de Bruijn graph} of order $n$ is the directed graph 
$G(n) = (V,E)$ where  $V$  is the set of all  binary  strings of length  $n$ and there is a directed edge from
$u = u_1u_2\cdots u_n$ to $v=v_1v_2\cdots v_n$ if $u_2\cdots u_n = v_1\cdots v_{n-1}$. Each edge $e$ is labeled by $v_n$.  Outputting the edge labels in a Hamilton cycle of $G(n)$ produces a de Bruijn 
sequence. Figure~\ref{fig:ham}(a) illustrates a Hamilton cycle in the de Bruijn graph $G(3)$.  Starting from 000, its corresponding de Bruijn sequence is 10111000.
\begin{figure}[h]
\begin{center}
\resizebox{4.3in}{!}{\includegraphics{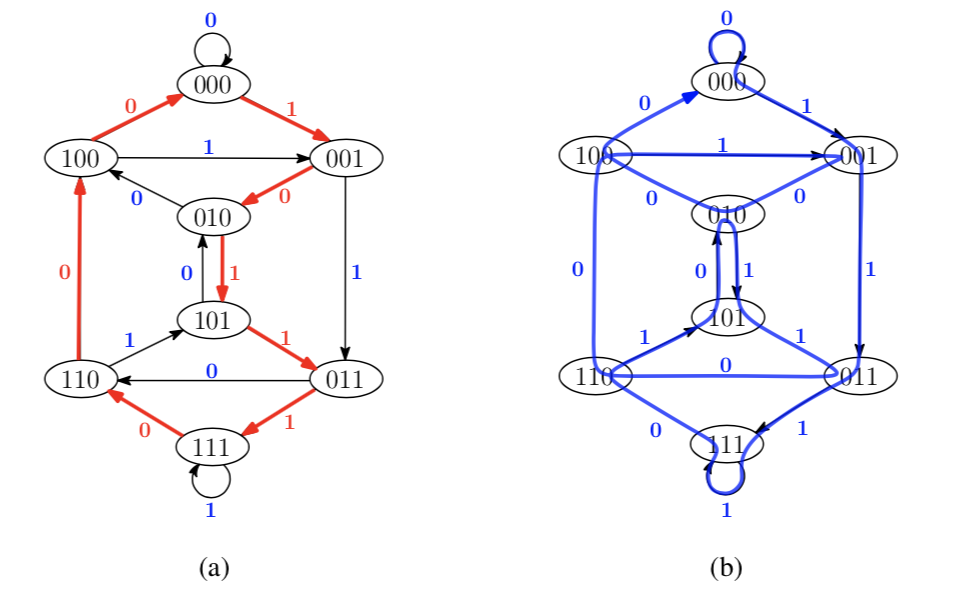}}
\end{center}

\vspace{-0.25in}
\caption{  \small
(a) A Hamilton cycle in $G(3)$ starting from 000 corresponding to the de Bruijn sequence 10111000 of order 3.
(b) An Euler cycle in $G(3)$ starting from 000 corresponding to the de Bruijn sequence 0111101011001000 of order 4.
}
\label{fig:ham}
\end{figure}

Each de Bruijn graph is connected and the in-degree and the out-degree of each vertex is two; the graph $G(n)$ is Eulerian.  $G(n)$ is the line graph of $G(n{-}1)$ which means an Euler cycle in $G(n{-}1)$ corresponds to a Hamilton cycle in $G(n)$. Thus, the sequence of edge labels visited in an Euler cycle is a de Bruijn sequence. Figure~\ref{fig:ham}(b) illustrates an Euler cycle in $G(3)$. The corresponding de Bruijn sequence of order four when starting from the vertex 000 is  0111101011001000.

Finding an Euler cycle in an Eulerian graph is linear-time solvable with respect to the size of the graph.  However, since the graph must be stored, applying such an algorithm to find a de Bruijn sequence requires $O(2^n)$ space.   One of the most well-known Euler cycle algorithms for directed graphs is the following due to Fleury~\cite{fleury} with details in~\cite{fred-nfsr}.  The basic idea is to not burn bridges;  in other words, do not visit (and use up) an edge if it leaves the remaining graph disconnected.

\begin{result}  \small \noindent {  \bf Fleury's Euler cycle algorithm (do not burn bridges) }

\smallskip

\begin{enumerate}
\item Pick a root vertex and compute a spanning in-tree $T$
\item Make each edge of $T$ (the bridges) the last edge on the adjacency list of the corresponding vertex
\item Starting from the root, traverse edges in a depth-first manner by visiting the first unused edge in the current vertex's adjacency list
\end{enumerate}

\vspace{-0.1in}

\end{result}

\noindent
Finding a spanning in-tree $T$ can be done by reversing the direction of the edges in the Eulerian graph and computing a spanning out-tree with a standard depth first search on the resulting graph.  The corresponding edges in the original graph will be a spanning in-tree.  Using this approach, all  de Bruijn sequences can be generated by considering all possible spanning in-trees (see BEST Theorem in~\cite{fred-nfsr}). 

Although not well documented, this  algorithm is the basis for all greedy de Bruijn sequence constructions along with their generalizations using  preference tables~\cite{alhakim-span} or look-up tables~\cite{xie}.  Specifically, a preference table specifies the precise order that the edges are visited for each vertex when performing Step 3 in Fleury's Euler cycle algorithm.  Thus given a preference table and a root vertex, Step 3 in the  algorithm  can be applied to construct a de Bruijn sequence if combining the last edge from each non-root vertex forms a spanning in-tree to the root.  For example, the preference tables and corresponding spanning in-trees for the  Prefer-1 (rooted at 000), the Prefer-same (rooted at 010), and  the Prefer-opposite (rooted at 000) constructions are given in Figure~\ref{fig:trees} for $G(3)$.  For the Prefer-1, the only valid root is 000.  For the Prefer-same, either 010 or 101 could be chosen as root.   The Prefer-opposite has a small nuance.  By a strict greedy definition, the edges will not create a spanning in-tree for any root. But by changing the preference for the single string 111, a spanning in-tree is created when rooted at 000.  This accounts for the special case required in the Prefer-opposite algorithm.
Notice how these strings relate to the seeds in their respective greedy constructions.  For the Prefer-same, a root of 101 could also have been chosen, and doing so will yield the complement of the Prefer-same sequence when applying this Euler cycle algorithm.
Relationships between various preference related constructions have recently been studied in~\cite{Jiang2023}, generalizing the work in~\cite{Rubin2017} which focused on the Prefer-opposite and Prefer-1 constructions.

\begin{figure}[h]
\begin{center}
\resizebox{4.7in}{!}{\includegraphics{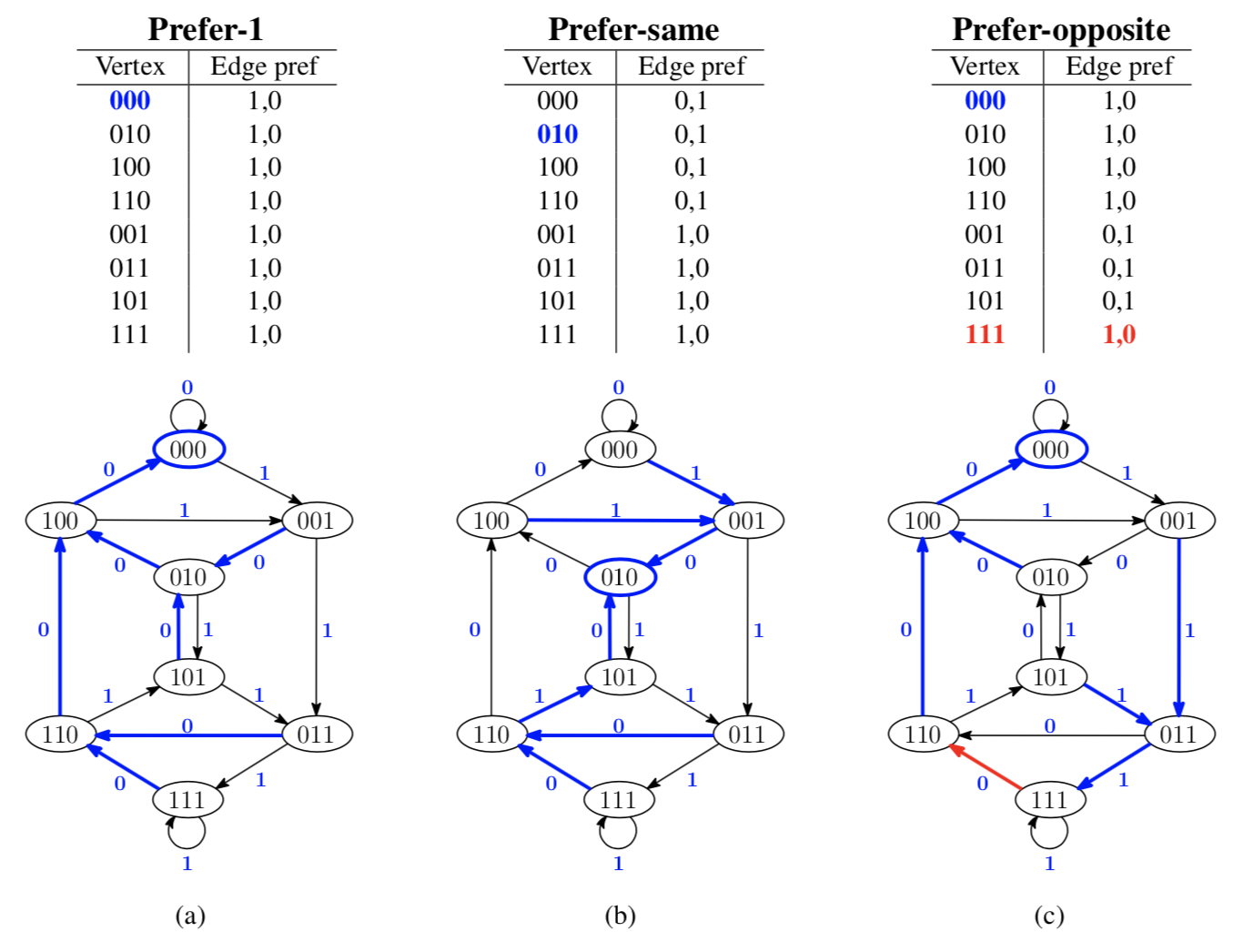}}
\end{center}

\vspace{-0.25in}
\caption{  \small
(a) A preference table corresponding to the Prefer-1 greedy construction along with its corresponding spanning in-tree rooted at 000.
(b) A preference table corresponding to the Prefer-same greedy construction along with its corresponding spanning in-tree rooted at 010.
(c) A preference table corresponding to the Prefer-opposite greedy construction along with its corresponding spanning in-tree rooted at 000.
}
 \label{fig:trees}
\end{figure}

A second well-known Euler cycle algorithm for directed graphs, attributed to Hierholzer~\cite{hierholzer}, is as follows: 

\begin{result}  \small { \noindent  \bf Hierholzer's Euler cycle algorithm (cycle joining) }

\smallskip

\begin{enumerate}
\item Start at an arbitrary vertex $v$ visiting edges in a depth-first manner until returning to $v$, creating a cycle.
\item {\bf Repeat until all edges are visited:}  Start from any vertex $u$ on the current cycle and visit remaining edges in a DFS manner until returning to $u$, creating a new cycle.  Join the two cycles together.
\end{enumerate}

\vspace{-0.1in}

\end{result}

\noindent
This cycle-joining approach is  the basis for all successor-rule constructions of de Bruijn sequences.  A general framework for joining smaller cycles together based on an underlying feedback shift register is given for the binary case in~\cite{framework}, and then more generally for larger alphabets in~\cite{karyframework}.  It is the basis for the efficient algorithm presented in this paper, where the initial cycles are induced by a specific feedback function.

\section{Run-length encoding} \label{sec:rle}

The sequences $\same{n}$ and $\opposite{n}$  both have properties based on a run-length encoding of binary strings.  
The \defo{run-length encoding} (RLE) of a string $\omega = w_1w_2\cdots w_n$ is a compressed representation that stores consecutively the lengths of the maximal runs of each symbol.  The \defo{run length} of $\omega$ is the length of its RLE.  
For example, the string 11000110 has RLE 2321 and run length 4.   Note that 00111001 also has RLE 2321.
Since we are dealing with binary strings, we require knowledge of  the starting symbol to obtain a given binary string from its RLE.  
As a further example:   \small
\[ \same{5}  =     11111000001110110011010001001010  \  \mbox{ has RLE } \   5531222113121111. \]  \normalsize
The following facts are proved in~\cite{revisit}. 
\begin{proposition} \label{fact:rle}
The sequence $\same{n}$  is the de Bruijn sequence of order $n$ starting with 1 that has the lexicographically largest RLE.
\end{proposition}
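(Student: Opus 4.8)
The plan is to show two things about the greedy Prefer-same construction: first, that the edges it selects (one outgoing edge per vertex, determined by the ``append same if possible'' rule, with the documented adjustment at the root) form a spanning in-tree of $G(n)$ rooted at the seed vertex $0101\cdots$, so that by the discussion in Section~\ref{sec:euler} the sequence $\same{n}$ is genuinely a de Bruijn sequence of order $n$; and second, that among all de Bruijn sequences of order $n$ beginning with $1$, this one has the lexicographically largest RLE. The first part I would treat as a known/structural fact (it is exactly the ``spanning in-tree'' picture in Figure~\ref{fig:trees} and is already established in~\cite{fred-nfsr,revisit}), so the real content is the RLE-maximality claim, and that is where I would spend the effort.

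For the maximality, I would argue greedily on the RLE itself. Fix the requirement that the sequence starts with $1$; this pins down the starting symbol, so the RLE and the sequence determine one another. Think of building a de Bruijn sequence one \emph{run} at a time rather than one bit at a time. Suppose $\same{n}$ and some competitor de Bruijn sequence $B$ (also starting with $1$) agree on the first $k{-}1$ runs but differ in the $k$-th: I want to show the $k$-th run of $B$ is strictly shorter than that of $\same{n}$, which gives lex-largeness of the RLE. The point is that appending the \emph{same} bit for as long as legal is exactly the operation of making the current run as long as possible: the Prefer-same rule extends the current run until doing so would repeat a length-$n$ window, at which moment it is forced to switch symbols (start a new run). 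So at the position where $\same{n}$ and $B$ first diverge at a run boundary, $\same{n}$'s run has the maximum length that is \emph{legal given the prefix so far}, whereas $B$ either could have extended its run (making its RLE lex-smaller at this coordinate, done) or its run was already forced to be shorter by the shared prefix — but the shared prefix is the same for both, so any legality constraint hitting $B$ also hit $\same{n}$. Hence $\same{n}$'s $k$-th run is at least as long, and since the sequences differ it is strictly longer, i.e. $\mathrm{RLE}(\same{n})$ is lex-greater than $\mathrm{RLE}(B)$ at coordinate $k$.

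The step I expect to be the main obstacle is making the phrase ``legal given the prefix so far'' precise and symmetric between the two sequences. Greedy arguments of this type are delicate because ``legal'' for a full de Bruijn sequence means more than ``no immediate repeat'': a locally legal choice can still be globally fatal (it can strand some window so that no completion exists), which is precisely why the Prefer-opposite rule needs its $1^{n-1}$ special case. So I would need a lemma of the form: at every step of the Prefer-same process, extending the current run by one more bit is possible \emph{and safe} (extendable to a full de Bruijn sequence) if and only if it does not immediately repeat a length-$n$ window — equivalently, the greedy choice never burns a bridge. The cleanest route is to invoke the Euler-cycle/spanning-in-tree characterization from Section~\ref{sec:euler}: because the Prefer-same selected edges form a spanning in-tree rooted at the seed, Fleury's criterion guarantees no bridge is ever burned, so ``locally legal'' coincides with ``globally safe'' throughout the run of the algorithm. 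With that lemma in hand, the run-by-run comparison above goes through cleanly, and combined with the de Bruijn property (part one) it yields the proposition.
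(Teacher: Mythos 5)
Your proposal should first be measured against the fact that the paper does not prove this proposition at all: it is quoted from~\cite{revisit} (``The following facts are proved in~\cite{revisit}''), so what matters is whether your direct argument stands on its own. The de Bruijn-property half (spanning in-tree / Fleury picture) is fine to cite. The run-by-run exchange argument, however, has a genuine gap at exactly the step you flag: the sentence ``the shared prefix is the same for both, so any legality constraint hitting $B$ also hit $\same{n}$'' is not symmetric, because the greedy's legality is tested against the seed $alt(n{-}1)$ \emph{prepended} to the prefix, whereas a competitor $B$ is constrained only by its own cyclic windows (its wrap-around windows are determined by its unknown suffix, not by the seed). The problematic case is the one you must rule out: $B$'s $k$-th run being \emph{longer} than $\same{n}$'s. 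If the window that stops the greedy has its only earlier occurrence straddling the seed --- these are precisely the $n{-}1$ wrap-around windows of $\same{n}$, of the form $(\mbox{alternating suffix ending in }0)1^j$ --- then $B$ can append the same bit without repeating any window of the shared prefix. This really happens: for $n=4$, after the shared prefix $111100001101$ the greedy wants to append $1$ but is blocked by $1011$, whose only prior occurrence (positions $2\ldots 5$ of $010\cdot 111100001101$) uses the seed. A competitor continuing $1111000011011\cdots$ repeats no window at that moment, and its RLE ($44212\cdots$) would be lexicographically \emph{larger} than that of $\same{4}$ ($44211211$); it is impossible only because every completion dead-ends two steps later (the suffix $011$ forces the window $0111$, after which the next window must extend $111$, but $1111$ and $1110$ are both used). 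So the entire content of the proposition sits in exactly the case your transfer-of-constraints argument does not cover.

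The repair you propose --- invoke the spanning in-tree/Fleury property so that ``locally legal'' equals ``globally safe'' --- does not close this gap, because that lemma is a statement about the \emph{greedy's} choices (they never burn a bridge and always extend to a full de Bruijn sequence). What the comparison needs is a statement about the \emph{competitor}: that appending the same bit past the greedy's stopping point can never be completed to a de Bruijn sequence, even though it creates no immediate repeat in $B$'s own prefix. That is an additional argument (in the $n=4$ instance it is the explicit dead-end above), and nothing in your outline supplies it in general; one would need something like an analysis of which windows are reachable after a seed-straddling extension, or the cycle-structure machinery the paper develops for the PRR, or simply the proof given in~\cite{revisit}. As written, the proof does not go through.
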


\begin{proposition} \label{fact:rle2}
The sequence $\opposite{n}$  is the de Bruijn sequence of order $n$ starting with 1 that has the lexicographically smallest RLE.
\end{proposition}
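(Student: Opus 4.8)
The plan is to prove Proposition~\ref{fact:rle2} by a greedy exchange argument resting on one observation: apart from a single forced move, the Prefer-opposite rule is ``end the current run as early as possible,'' which is exactly the greedy rule for building a de Bruijn sequence with a lexicographically small run-length encoding. First I would recall that $\opposite{n}$ is a de Bruijn sequence of order $n$ (the content of~\cite{pref-opposite}; equivalently, the modified preference table of Section~\ref{sec:euler} has a spanning in-tree rooted at $0^n$), so both $1^n$ and $0^n$ occur in it. I then fix a normalization: for an order-$n$ de Bruijn sequence $B$, read its run-length encoding starting from the bit immediately after the unique occurrence of $0^n$; that bit is a $1$ (runs of $0$'s have length at most $n$), the cut falls on a run boundary, and so each such encoding has the same length (namely $2^{n-1}$, twice the number of occurrences of the block $01$). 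For $\opposite{n}$ this is essentially the bitstring emitted by the loop of the construction -- the seed $0^{n-1}$ and the appended $0$ make up the $0^n$ -- and for $n\ge 3$ its RLE begins $1,1,1,\dots$, which will turn out to be the least possible; the cases $n\le 2$, where the de Bruijn sequence is unique up to rotation, are immediate. It therefore suffices to show that this reading of $\opposite{n}$ has RLE lexicographically $\preceq$ that of every order-$n$ de Bruijn sequence starting with $1$, with equality only for $\opposite{n}$ itself.

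For the exchange step, write $R=r_1\cdots r_{2^n}$ for $\opposite{n}$ read as above and let $B=b_1\cdots b_{2^n}$ be any de Bruijn sequence of order $n$ starting with $1$, so $r_1=b_1=1$. Suppose $\mathrm{RLE}(B)\prec\mathrm{RLE}(R)$, and let run $j$ be the first run on which the encodings differ. Then $R$ and $B$ share their first $j-1$ complete runs, start run $j$ with the same (forced) bit, and $B$'s run $j$ is strictly shorter, so there is a position $k$ with $r_1\cdots r_{k-1}=b_1\cdots b_{k-1}$, $r_k=r_{k-1}$ (the construction keeps extending run $j$), and $b_k=\overline{b_{k-1}}$ ($B$ ends it). Since $r_k=r_{k-1}$, the construction appended the \emph{same} bit at step $k$. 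In the ordinary case, the length-$(n-1)$ suffix at step $k$ was not $1^{n-1}$, so appending the opposite bit would have repeated some length-$n$ window $w$; but $w$ then already occurs within $r_1\cdots r_{k-1}=b_1\cdots b_{k-1}$, while $B$, appending $\overline{b_{k-1}}$, creates $w$ a second time -- contradicting that $B$ is a de Bruijn sequence. (Windows near the start are well defined because the $0^n$ preceding position $1$ supplies the missing history.)

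In the remaining case the suffix at step $k$ is $1^{n-1}$, so the construction is in its special case; since $R$ keeps extending the run ($r_k=1$), this is the \emph{first} time $1^{n-1}$ is seen, whence neither $1^n$ nor $1^{n-1}0$ occurs in $r_1\cdots r_{k-1}=b_1\cdots b_{k-1}$, and for $n\ge 3$ the run of $1$'s ending at position $k-1$ has length exactly $n-1$ and is flanked by $0$'s. Now $b_k=0$, so $B$ places $1^{n-1}0$ for the first time at step $k$; but $B$ must still place $1^n$ somewhere, the run of $1$'s carrying that occurrence has length exactly $n$ (a longer run would repeat $1^n$), and so is immediately followed by a $0$, forcing a \emph{second} occurrence of $1^{n-1}0$ -- a contradiction. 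Hence the first disagreement cannot favour $B$, which gives $\mathrm{RLE}(\opposite{n})\preceq\mathrm{RLE}(B)$; equality then forces $B=R$ bit by bit (an RLE together with a leading $1$ determines the string), so the minimiser is unique.

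I expect the special case to be the main obstacle, being the one place where Prefer-opposite departs from the greedy rule: one must show that the attractive shorter run is infeasible for \emph{every} de Bruijn sequence -- the ``$1^{n-1}0$ used twice'' argument above -- and check that it is genuinely forced, which is where the facts that the first $1^{n-1}$ suffix precedes the placement of $1^n$ and that the relevant run has length exactly $n-1$ (hence $n\ge 3$) are needed. The remaining work is bookkeeping about the construction -- that it makes only this one non-greedy move, and that ``start after $0^n$'' is indeed the RLE-minimal rotation of $\opposite{n}$ beginning with $1$ (which follows from the inequality just proved) -- plus the trivial small cases. One could instead phrase everything via the cyclic run-length encoding as a necklace, but the bit-level exchange argument above seems the most transparent.
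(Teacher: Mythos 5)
You are proving a statement the paper itself does not prove (it is imported from~\cite{revisit}), so your argument has to stand on its own, and as written it does not: the normalization step changes the object being compared, and the statement you reduce to is false. Reading $\opposite{n}$ ``from the bit after the cyclic $0^n$'' produces a rotation $R$ whose RLE is the RLE of $\opposite{n}$ with the leading $1$ deleted and the last entry incremented. For $n=4$, $\opposite{4}=0101001101111000$ has RLE $111122143$, while your $R=1010011011110000$ has RLE $11122144$; the complement of $\opposite{4}$, namely $B=1010110010000111$, is a de Bruijn sequence of order $4$ starting with $1$ whose RLE $111122143$ is lexicographically smaller than $11122144$. So the inequality you say it ``suffices to show'' (your anchored reading of $\opposite{n}$ is $\preceq$ every de Bruijn sequence starting with $1$) is simply not true, and the closing claim that the $0^n$-cut is the RLE-minimal rotation ``follows from the inequality just proved'' is circular. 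Note also that $\opposite{n}$ itself begins with $0$, and the way the paper uses the proposition (proof of Theorem~\ref{thm:main2}) compares $\opposite{n}$ as output, against a sequence sharing its prefix $0101\cdots$ -- not against an arbitrary $0^n$-anchored rotation.

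The place where your exchange step lets this counterexample through is the parenthetical ``the $0^n$ preceding position 1 supplies the missing history.'' That history exists for the construction (its working string really is $0^{n-1}$ followed by the output), but not for an arbitrary competitor $B$: when Prefer-opposite extends a run because flipping would duplicate a window $w$, the earlier occurrence of $w$ may straddle the seed, so $w$ need not occur inside the common prefix $b_1\cdots b_{k-1}$, and no second occurrence inside $B$ is forced. Concretely, for the $B$ above the first divergence is at $k=5$ with $w=0101$, whose earlier occurrence uses the last $0$ of the seed block, and $B$ contains $0101$ only once -- no contradiction. (Your special case has a similar boundary gap: a linear de Bruijn sequence omits the $n-1$ windows that straddle the wrap, so $B$ need not contain $1^n$, nor a second $1^{n-1}0$, as a linear substring, e.g.\ if $B$ ends with $1^{n-1}$.) A correct proof along your greedy-exchange lines must anchor the comparison at the construction's own starting point, handling the seed explicitly (for instance via the linearized string $0^{n-1}\opposite{n}$, using that $\opposite{n}$ has suffix $10^{n-1}$), and must argue separately about windows the construction consumes against the seed; that, together with the $1^{n-1}$ special case you did identify, is where the real work lies.
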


\noindent
Let $alt(n)$ denote the alternating sequence of 0s and 1s of length $n$ that ends with 0:  For example, $alt(6) = 101010$.  The following facts  are also immediate from~\cite{revisit}.

\begin{proposition} \label{fact:Spre}
$\same{n}$ has prefix $1^n$ and has suffix $alt(n{-}1)$.
\end{proposition}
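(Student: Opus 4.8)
The plan is to trace through the Prefer-same construction directly from its definition, exploiting the fact that after seeding with $alt(n{-}1)=\cdots 01010$ and appending a $1$, the algorithm prefers to repeat the last bit. First I would handle the prefix. Immediately after the seed, the current suffix of length $n$ is $alt(n{-}1)\cdot 1$, and the algorithm has just output its first "real" bit, a $1$. The key observation is that appending another $1$ can only be blocked if doing so would repeat a length-$n$ window already seen; since the only windows seen so far involve the seed, and the seed has a strictly alternating pattern that cannot coincide with a window containing a run of $1$s, each successive $1$ is safe until the window becomes exactly $1^n$. A short induction on the number of $1$s appended shows that after appending the $k$-th $1$ (for $k=1,\dots,n$) the suffix is $(\text{tail of }alt(n{-}1))\,1^k$, and none of these windows has occurred before because every previously seen window either lies inside the seed region or has fewer than $k$ trailing $1$s. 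Hence the construction outputs $1^n$ as its prefix, and this is where the greedy choice first flips: appending a further $1$ would recreate the window $1^n$, so the next bit is $0$.

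Next I would handle the suffix. The de Bruijn sequence has length $2^n$, and after removing the seed the construction terminates precisely when no new bit can be added, i.e.\ when every length-$n$ window has been used exactly once. The standard fact (from the Euler-cycle viewpoint in Section~\ref{sec:euler}) is that the Prefer-same construction, viewed as Fleury's algorithm rooted at the seed vertex $alt(n{-}1)$ — equivalently $\cdots 01010$ — returns to that root; the last $n{-}1$ bits output are therefore the label sequence entering the root vertex, which spells out $alt(n{-}1)$. More concretely: the construction halts when the current length-$n$ window equals the seed extended by one bit in a way that closes the cycle; since the root vertex is $01010\cdots$ (of length $n{-}1$), the final $n{-}1$ emitted bits form exactly $alt(n{-}1)$. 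I would make this precise by noting that the de Bruijn sequence $\same{n}$, read cyclically, contains the window $1^n$ (from the prefix) and the window $0\,alt(n{-}1)$ or $1\,alt(n{-}1)$ at the wrap-around, and that the greedy rule forces the alternating tail because once a run of length $1$ is entered near the end, the only unused windows remaining are the alternating ones.

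The main obstacle I anticipate is the suffix argument: the prefix claim is an easy self-contained induction, but rigorously pinning down the last $n{-}1$ bits requires knowing that the construction genuinely completes a full Euler cycle returning to the seed vertex — which is exactly the correctness statement for the Prefer-same algorithm proved in~\cite{fred-nfsr} and restated via~\cite{revisit}. Given that this paper cites those correctness results, the cleanest route is to invoke them: the cycle closes at the root $\cdots 01010$ of length $n{-}1$, so the terminal window of $\same{n}$ (before deleting the seed, the deleted seed is itself $alt(n{-}1)$, and after deletion the genuine suffix is still $alt(n{-}1)$ because the sequence is considered cyclically and the seed region overlaps the alternating tail). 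I would double-check the small case $n=4$ against the worked example $1111000011010010$ in the excerpt — prefix $1111$, suffix $010$ $=alt(3)$ — to confirm the indexing, and then present the formal argument as: (i) induction for the prefix $1^n$, (ii) appeal to the established correctness of the Prefer-same construction for the closing vertex, giving the suffix $alt(n{-}1)$.
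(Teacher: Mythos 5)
Your argument is correct, but it is worth noting that the paper does not actually prove this statement: Proposition~\ref{fact:Spre} is stated as being immediate from~\cite{revisit}, where the correctness of the Prefer-same construction and its characterization via the lexicographically largest RLE (Proposition~\ref{fact:rle}) are established, so the paper's ``proof'' is a citation. Your route is a more self-contained one: the induction for the prefix $1^n$ is sound as written (each new window ending in the $k$-th appended $1$ has exactly $k$ trailing $1$s preceded by a $0$ from the seed, so it is new), and for the suffix the clean argument is exactly the one you identify as the ``cleanest route'': granting correctness (every length-$n$ window occurs exactly once), the generated bits trace an Euler trail in $G(n{-}1)$ starting at the seed vertex $alt(n{-}1)$; since every vertex of $G(n{-}1)$ has in-degree equal to out-degree, any Euler trail is closed, so the trail ends at $alt(n{-}1)$, and these terminal $n{-}1$ bits lie entirely in the generated portion, so removing the seed (at the front) leaves the suffix $alt(n{-}1)$. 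I would drop the vaguer intermediate remarks (``once a run of length $1$ is entered near the end, the only unused windows remaining are the alternating ones'' and the comment about the seed overlapping the tail cyclically), since they are not needed and are the only places where your sketch is hand-wavy; alternatively, the prefix can be obtained even faster from Proposition~\ref{fact:rle}, since the lexicographically largest RLE of a sequence starting with $1$ must begin with a run of length $n$. What your approach buys is an explicit derivation from the greedy description itself, at the cost of invoking the external correctness theorem for the suffix, which is essentially the same dependency the paper discharges by citing~\cite{revisit}.
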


\begin{proposition} \label{fact:Opre}
$\opposite{n}$ has length $n$ prefix $010101\cdots$ and has suffix $10^{n-1}$.
\end{proposition}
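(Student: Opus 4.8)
The plan is to read both statements straight off the Prefer-opposite construction, handling the prefix by simulating the first few appended bits and the suffix by pinning down exactly where the block $0^n$ can sit in the resulting cyclic sequence. Throughout, let $S$ denote the working string produced by the algorithm including the seed, so that $S$ is $0^{n-1}$ followed by $\opposite{n}$. We assume $n \ge 2$; the case $n=1$ is degenerate.

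For the prefix, observe that after the seed $0^{n-1}$ and Step 2 we have $S = 0^{n}$, whose only length-$n$ window is $0^n$. I would then prove by induction on $j$ that, for $1 \le j \le n-1$, the $j$-th bit appended in Step 3 equals the $j$-th term of $1,0,1,0,\dots$, and that at that moment the length-$n$ windows occurring in $S$ are exactly $0^n$ together with, for each $1 \le i \le j$, the string consisting of $n-i$ zeros followed by the alternating block $1,0,1,\dots$ of length $i$. The induction step is short: since $j \le n-1$, the length-$(n-1)$ suffix of $S$ still contains a $0$, so it is not $1^{n-1}$ and the algorithm takes the ``otherwise'' branch, first attempting the bit opposite to the last one; the window this would create consists of $n-1-j$ zeros followed by an alternating block starting with $1$, and since every previously generated window has a different number of leading zeros, this candidate is new and is appended. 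After $n-1$ iterations $S = 0^n 1 0 1 \cdots$; deleting the length-$(n-1)$ seed leaves $\opposite{n} = 0 1 0 1 \cdots$, whose length-$n$ prefix is the alternating word $010101\cdots$.

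For the suffix I would use that $\opposite{n}$ is a de Bruijn sequence of order $n$ (correctness of the construction, from~\cite{pref-opposite}), so $0^n$ occurs exactly once among its cyclic factors. I then claim $0^n$ is not a linear factor of $\opposite{n}$: the window $0^n$ already occurs at the start of $S$, the ``otherwise'' branch appends a bit only when no window is repeated, and the $1^{n-1}$ branch only ever creates the windows $1^n$ and $1^{n-1}0$ — never $0^n$, since $n \ge 2$. Hence $0^n$ is never regenerated, so its unique occurrence in $S$ is the initial one and it is not a factor of the suffix $\opposite{n}$. Therefore the unique cyclic occurrence of $0^n$ must wrap around the seam: it is a suffix of $\opposite{n}$ of some length $k$ (all zeros) concatenated with a prefix of $\opposite{n}$ of length $n-k$ (all zeros), with $1 \le k \le n-1$. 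By the prefix result the first $n-k$ symbols of $\opposite{n}$ are all zero only when $n-k \le 1$, forcing $k = n-1$, so $\opposite{n}$ ends in $0^{n-1}$. The symbol immediately preceding this run exists (as $2^n > n-1$) and cannot be $0$, else $0^n$ would be a linear factor of $\opposite{n}$; hence $\opposite{n}$ has suffix $10^{n-1}$.

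The only genuine work is making the prefix induction airtight — carrying along a precise description of the windows generated so far so that the ``opposite'' bit is provably legal at each of the first $n-1$ steps — together with invoking rather than reproving the correctness of the Prefer-opposite construction, which is needed only for the cyclic uniqueness of $0^n$. A shorter but less self-contained route would instead deduce both halves from the lexicographic characterization of $\opposite{n}$ given in Proposition~\ref{fact:rle2}.
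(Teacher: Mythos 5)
Your argument is correct, but note that the paper does not prove this proposition at all: it records it as ``immediate from~\cite{revisit}'', i.e.\ it is imported from the earlier analysis of the Prefer-opposite construction (essentially the RLE characterization restated as Proposition~\ref{fact:rle2}), which is precisely the ``shorter but less self-contained route'' you mention in your last sentence. What you do differently is give a direct argument from the greedy rule itself: an induction simulating the first $n-1$ iterations (tracking the set of length-$n$ windows created so far) to get the alternating prefix, followed by a seam argument for the suffix --- $0^n$ occurs exactly once cyclically in $\opposite{n}$, it is never regenerated after its initial occurrence inside the seed, so its unique cyclic occurrence must wrap around, and the prefix result then forces the wrap to consist of $0^{n-1}$ at the end and a single $0$ at the front, whence the suffix $10^{n-1}$. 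Your route buys self-containedness (modulo importing the correctness of the construction, i.e.\ that $\opposite{n}$ is a de Bruijn sequence, which you rightly invoke rather than reprove), while the paper's route buys brevity and uniformity with how the neighbouring facts (Propositions~\ref{fact:rle}--\ref{fact:Spre}) are all handled by citation.

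Two small repairs. First, in your induction step the candidate window at iteration $j$ has $n-j$ leading zeros (not $n-1-j$) followed by the alternating block of length $j$; this matches the invariant you state, and the ``distinct number of leading zeros'' argument is unaffected. Second, your claim that the ``otherwise'' branch appends a bit only when no window is repeated rests on the intended reading of the pseudocode: as literally written, the ``append the same bit'' clause carries no duplicate test. You should say explicitly that the clause is implicitly guarded by the no-duplicate condition (this is forced by the termination condition ``repeat until no new bit is added'' and by the correctness you already invoke); with that reading fixed, the only step that could ever recreate $0^n$ is exactly the configuration at which the algorithm halts, and your conclusion that $0^n$ is not a linear factor of $\opposite{n}$ stands.
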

The  sequence based on lexicographic compositions~\cite{lexcomp} also has run-length properties: it is constructed by concatenating lexicographic compositions which are represented using a RLE.  Further discussion of this sequence is provided in Section~\ref{sec:LC}.

\section{Feedback functions and de Bruijn successors} \label{sec:feedback}

Let $\bB(n)$ denote the set of all binary strings of length $n$.
We call a function $f:  \bB(n) \rightarrow \{0,1\}$ a \defo{feedback function}.  
Let $\omega = w_1w_2\cdots w_n$ be a string in $\bB(n)$.
A \defo{feedback shift register} 
is a function $F: \bB(n) \rightarrow \bB(n)$ that takes the form $F(\omega)  = w_2w_3\cdots w_n f(w_1w_2\cdots w_n)$ for a given feedback function $f$.  

A feedback function  $g: \bB(n) \rightarrow \{0,1\}$ is a  \defo{de Bruijn successor} if there exists a de Bruijn sequence of order $n$ 
such that each substring $\omega \in \bB(n)$ is followed by $g(\omega)$ in the given de Bruijn sequence.    Given a de Bruijn successor $g$ and a seed string $\omega = w_1w_2\cdots w_n$, the following function {\sc DB}($g, \omega$)  will return a de Bruijn sequence of order $n$ with suffix $\omega$:

\smallskip
\begin{result}
\vspace{-0.1in}
 \begin{algorithmic} [1]                   
\Function{DB}{$g, \omega$} 

    \For{$i\gets 1$ {\bf to}  $2^n$}  
    	\State $x_i \gets g(\omega)$
	\State $\omega \gets w_2w_3\cdots w_n x_i$
    \EndFor
    \State \Return $x_1x_2\cdots x_{2^n}$
\EndFunction
\end{algorithmic}
\vspace{-0.1in}

   \end{result}

\noindent
A \defo{linearized de Bruijn sequence} is a linear string that contains every string in $\bB(n)$ as a substring exactly once.  Such a string has length $2^n+n-1$.  
Note that the length $n$ suffix of a de Bruijn sequence $\mathcal{D}_n =$ {\sc DB}($g, w_1\cdots w_n$) is $w_1\cdots w_n$.  Thus, $w_2\cdots w_n   \mathcal{D}_n$ is a  linearized de Bruijn sequence.

For each of the upcoming feedback functions, selecting appropriate representatives for the cycles they induce is an important step to developing efficient de Bruijn successors for $\same{n}$ and $\opposite{n}$.  In particular, consider two representatives for a given cycle based on their RLE.  
\begin{itemize}
\item \defo{RL-rep}:   The string with the lexicographically largest RLE; if there are two such strings, it is the one beginning with 1.  
\item \defo{RL2-rep}:  The string with the lexicographically smallest RLE;  if there are two such strings, it is the one beginning with 0.  
\end{itemize}
For our  upcoming discussion, define the \defo{period}~\footnote{The notion of a \emph{period} often allows a fractional exponent $j$, but here it must be an integer.}  of a string $\omega = w_1w_2\cdots w_n$ to be the
smallest integer $p$ such that $\omega = (w_1\cdots w_p)^j$ for some integer $j$.  If $j > 1$ we say that $\omega$ is \defo{periodic}; otherwise, we say it is \defo{aperiodic} (or primitive).

\subsection{The pure cycling register (PCR)}

The \defo{pure cycling register}, denoted $\PCR$,  is the feedback shift  register with the feedback function $f(\omega) = w_1$.  Thus, $\PCR(w_1w_2\cdots w_n) = w_2\cdots w_nw_1$.  It is well-known that the $\PCR$ partitions $\bB(n)$ into cycles of strings that are equivalent under rotation.    The following example illustrates the cycles induced by the $\PCR$ for $n=5$ along with their corresponding RL-reps and RL2-reps.

\begin{exam}  \small \label{exam:PCR}
The $\PCR$ partitions $\bB(5)$ into the following eight cycles $\bP_1, \bP_2, \ldots, \bP_8$  where the top string in bold is the RL-rep for the given cycle.  The underlined string is the RL2-rep.

\vspace{-0.1in}

\begin{center} \footnotesize
\begin{tabular}{c @{\hskip 0.28in}  c @{\hskip 0.28in}  c @{\hskip 0.28in} c @{\hskip 0.28in} c @{\hskip 0.28in} c @{\hskip 0.28in} c @{\hskip 0.28in} c}
$\bP_1$  & 
$\bP_2$  & 
$\bP_3$  & 
$\bP_4$ & 
$\bP_5$  & 
$\bP_6$ & 
$\bP_7$ & 
$\bP_8$  \\ 
				\bblue{11010} 		&	\bblue{00101}   			& 	\bblue{11110}			&	\bblue{00001}	&	\bblue{11100}				&	\bblue{00011}	&	\underline{\bblue{11111}}		& \underline{\bblue{00000}}							 \\
				 \underline{10101}	&	\underline{01010}		&	11101				&	00010		&	11001					&	00110		&					&										 \\
				 01011 			&	10100				&	11011				&	00100		&	\underline{10011}					&	\underline{01100}		&					&											 \\
				10110			&	01001				&	\underline{10111}		&	\underline{01000}		&	00111					&	11000		&					&											 \\
				01101			&	10010				&	01111				&	10000		&	01110					&	10001		&					&									
\end{tabular}
\end{center}

\vspace{-0.1in}

\noindent
\end{exam}

The PCR is the underlying feedback function used to construct the Prefer-1 greedy construction corresponding to the lexicographically largest de Bruijn sequence.  It has also been applied in some of the simplest and most efficient de Bruijn sequence constructions~\cite{grandma2,framework,wong}.   In these constructions, the cycle representatives relate to the lexicographically smallest (or largest) strings in each cycle and they can be determined in $O(n)$ time using $O(n)$ space using standard techniques~\cite{booth,duval}.   We also apply these methods to efficiently determine the RL-reps and the RL2-reps. 

Clearly $0^n$ and $1^n$ are both RL-reps.  Consider a string $\omega = w_1w_2\cdots w_n$ in a cycle $\bP$ with RLE $r_1r_2\cdots r_\ell$ where $\ell > 1$.   
If $\omega$ is an RL-rep, then $w_1\neq w_n$ because otherwise $w_nw_1\cdots w_{n-1}$ has a larger RLE than $\omega$.  All strings in $\bP$ that differ in the first and last bits form an equivalence class under rotation with respect to their RLE.  By definition, the RL-rep will be one that is lexicographically largest  amongst all its rotations.  As noted above, such a test can be performed in $O(n)$ time using $O(n)$ space.  There is one special case to consider: when both a string beginning with 0 and its complement beginning with 1 belong to the same cycle.  For example, consider $00101101$ and $11010010$ which both have RLE 211211.  Note this RLE has period $p=3$ and it is maximal amongst its rotations.  By definition, the string beginning with 0 is not an RL-rep.  It is not difficult to see that  such a string occurs precisely  when $w_1 = 0$ and $p$ is odd, where $p$ is the period of $r_1r_2\cdots r_\ell$.

\begin{proposition}
Let  $\omega = w_1w_2\cdots w_n$ be a string with RLE $r_1r_2\cdots r_\ell$, where $\ell > 1$, in a cycle $\bP$ induced by the $\PCR$.   Let $p$ be the period of $r_1r_2\cdots r_\ell$.
Then $\omega$ is the RL-rep for $\bP$  if  and only if  
\begin{enumerate}
\item $w_1 \neq w_n$, 
\item $r_1r_2\cdots r_\ell$ is lexicographically largest amongst all its rotations,  and 
\item either $w_1 = 1$ or $p$ is even.
\end{enumerate}
Moreover, testing whether or not $\omega$ is an RL-rep can be done in $O(n)$ time using $O(n)$ space.
\end{proposition}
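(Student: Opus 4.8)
The plan is to establish the biconditional by characterizing exactly which string in a cycle $\bP$ has the lexicographically largest RLE, and then to verify each of the three listed conditions captures a necessary aspect of that characterization. First I would recall the structure observed just before the statement: among all $n$ rotations of $\omega$, those that are RL-reps candidates are precisely the ones whose first and last bits differ (condition 1), since any rotation with $w_1 = w_n$ can be cyclically shifted to merge the boundary run, strictly increasing the first entry of the RLE and hence producing a lexicographically larger RLE. So condition 1 is necessary. Conversely, among rotations with $w_1 \neq w_n$, cyclically rotating $\omega$ by one full run at a time induces exactly a cyclic rotation of the RLE string $r_1 r_2 \cdots r_\ell$; thus the rotations of $\omega$ with differing endpoints are in bijection (up to the starting-bit ambiguity) with the rotations of $r_1\cdots r_\ell$, and the one with the largest RLE corresponds to the lexicographically largest rotation of $r_1\cdots r_\ell$ — giving condition 2.

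Next I would handle the starting-bit tie-break, which is where condition 3 enters and which I expect to be the main obstacle. The subtlety is that a given RLE together with a choice of starting symbol determines a binary string, but when we rotate $\omega$ around its cycle, the starting symbol alternates appropriately; so it is possible for both a string beginning with $0$ and its complement beginning with $1$ to lie in the same PCR-cycle and to share the lexicographically largest RLE. I would show this happens exactly when the period $p$ of $r_1\cdots r_\ell$ is odd and $p < \ell$: when $p$ is odd, shifting the RLE by its period $p$ (which is an integer number of runs, hence corresponds to a genuine rotation of $\omega$) flips the starting symbol, so the cycle contains the complemented copy with the same RLE; when $p = \ell$ (aperiodic RLE) or $p$ is even, no rotation realizing the maximal RLE flips the starting bit, so the string with the maximal RLE is unique within the cycle and the tie-break is vacuous. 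Combining: $\omega$ with $w_1 = 0$ fails to be the RL-rep precisely when such a complemented twin exists, i.e. when $p$ is odd and $p < \ell$; negating, $\omega$ is eligible iff $w_1 = 1$ or $p = \ell$ or $p$ is even, which is condition 3. Assembling conditions 1–3 then gives the full characterization in both directions.

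For the complexity claim, I would note that $r_1\cdots r_\ell$ and the bits $w_1, w_n$ are all computable from $\omega$ in $O(n)$ time and $O(n)$ space by a single linear scan; testing whether $r_1\cdots r_\ell$ is the lexicographically largest among its rotations (condition 2) and simultaneously extracting its period $p$ can be done with a standard linear-time technique such as Booth's or Duval's algorithm, as cited in the excerpt (\cite{booth,duval}); and condition 3 is then an $O(1)$ check given $p$, $\ell$, and $w_1$. Hence the whole test runs in $O(n)$ time and $O(n)$ space.

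The step I expect to be genuinely delicate is the parity argument for the starting-symbol tie-break — in particular, being careful that ``rotating the RLE by its period'' really does correspond to a well-defined rotation of the underlying binary string, and that this rotation flips the leading bit exactly when $p$ is odd. The boundary cases $\ell = 1$ (excluded by hypothesis), $p = \ell$, and strings like $00101101$ versus $11010010$ are the ones to check explicitly to make sure the ``$p < \ell$'' restriction is stated correctly; everything else is a routine consequence of the run-length/rotation correspondence.
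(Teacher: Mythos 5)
Your proposal is correct and follows essentially the same route as the paper: necessity of $w_1\neq w_n$ by the boundary-run rotation argument, the observation that the differing-endpoint rotations form an RLE-rotation equivalence class so the RL-rep is the lexicographically largest rotation, the odd-period tie-break for the complemented twin, and linear-time testing via the cited canonical-rotation algorithms. The paper only sketches the tie-break ("it is not difficult to see"), and your parity argument fills in that detail consistently with the paper's intent.
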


In a similar manner we consider RL2-reps.  Again $0^n$ and $1^n$ are both clearly RL2-reps.  Consider a string $\omega = w_1w_2\cdots w_n$ in a cycle $\bP$ with run length greater than one.
If $\omega$ is an RL2-rep, then $w_1\neq w_2$ because otherwise $w_2\cdots w_nw_1$ has a smaller RLE than $\omega$.    Thus, consider all strings $s_1s_2\cdots s_n$ in a cycle $\bP$
such that $s_2 \neq s_1$.  One of these strings is the RL2-rep.  Now consider all left rotations of these strings taking the form $s_2\cdots s_ns_1$.  Notice that a string in the latter set with the smallest RLE
will correspond to the RL2-rep after rotating the string back to the right.   As noted in the RL-case, the set of rotated strings form an equivalence class under rotation  with respect to their RLE, since their first and last bits differ.  Again, the same special case arises as with RL-reps: when both a string beginning with 0 and its complement beginning with 1 belong to the same cycle.   For example, consider the cycle containing both 10100101 and 01011010.  In each string the first two bits differ.  The set of all strings in its cycle where the first two bits differ is $\{10100101, 01001011, 10010110, 01011010, 10110100, 01101001\}$.  Rotating each string to the left we get the set  $\{01001011, 10010110, 00101101, 10110100, 01101001, 11010010\}$.  The corresponding RLEs for this latter set are $\{112112, 121121, 211211, 112112, 121121, 211211\}$.   In this case there are two strings $0100100$ and $10110100$ that both have RLE 112112. Rotating these strings back to the right we have 10100101 and 01011010 which both have the lexicographically smallest RLE of 1112111 in their cycle induced by the PCR.  By definition, the string beginning with 0 will be the RL2-rep.  Thus $\omega$ is not an RL2-rep if $w_1 = 1$, $p$ is odd, and $p< \ell$, where $p$ is the period of  the RLE $r_1r_2\cdots r_\ell$ for the string $w_2\cdots w_nw_1$.

 \begin{proposition}
Let  $\omega = w_1w_2\cdots w_n$ and let  $r_1r_2\cdots r_\ell$ be the RLE of $w_2\cdots w_nw_1$, where $\ell > 1$, in a cycle $\bP$ induced by the $\PCR$.    Let $p$ be the period of $r_1r_2\cdots r_\ell$.
Then $\omega$ is the RL2-rep for $\bP$  if  and only if  
\begin{enumerate}
\item $w_1 \neq w_2$, 
\item $r_1r_2\cdots r_\ell$ is lexicographically smallest amongst all its rotations,  and 
\item either $w_1 = 0$ or $p$ is even  or $p=\ell$.
\end{enumerate}
Moreover, testing whether or not $\omega$ is an RL2-rep can be done in $O(n)$ time using $O(n)$ space.
\end{proposition}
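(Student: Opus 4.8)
The plan is to formalize the reasoning in the paragraph immediately preceding the statement, along the same lines as the proof of the RL-rep characterization. Fix the cycle $\bP$, set $S=\{s_1s_2\cdots s_n\in\bP : s_1\neq s_2\}$, and let $S'=\{s_2\cdots s_ns_1 : s_1\cdots s_n\in S\}$ be the set of left rotations of members of $S$; equivalently $S'$ is the set of strings in $\bP$ whose first and last bits differ, i.e.\ those starting a new run, so the RLE of a member of $S'$ is a rotation of the cyclic run-length sequence of $\bP$ with no wrap-around merge. I would first dispose of condition~1. For any $s\in\bP$, $s_1\neq s_2$ holds iff $\mathrm{RLE}(s)$ begins with $1$; since $\ell>1$ the cycle $\bP$ is not monochromatic, so $S\neq\emptyset$, and therefore the lexicographically smallest RLE over $\bP$ begins with $1$ and every RLE-minimizer of $\bP$ lies in $S$. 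In particular the RL2-rep lies in $S$, so $w_1\neq w_2$ is necessary; assuming it from now on, $r_1\cdots r_\ell=\mathrm{RLE}(w_2\cdots w_nw_1)$ is a rotation of the (even-length) cyclic run-length sequence of $\bP$, and as $s$ ranges over $S$ the word $\mathrm{RLE}(s_2\cdots s_ns_1)$ ranges over all of its rotations.

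Next I would reduce minimization over $S$ to minimization over $S'$. For $s\in S$ one has the elementary identity $\mathrm{RLE}(s)=(1)\cdot\mathrm{RLE}(s_2\cdots s_n)$, and $\mathrm{RLE}(s_2\cdots s_n)$ is obtained from $\mathrm{RLE}(s_2\cdots s_ns_1)$ by decrementing its last entry (deleting that entry if it becomes $0$). A short case check shows this last-entry operation is strictly monotone for the lexicographic order and injective on the rotations of a fixed word; since the leading $1$ is common to all members of $S$, the RLE-minimizers of $\bP$ are exactly those $s\in S$ whose left rotation $s_2\cdots s_ns_1$ has RLE equal to the lexicographically smallest rotation $u^\star$ of the cyclic run-length sequence. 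For $\omega\in S$ this says precisely that $r_1\cdots r_\ell$ is its own lexicographically smallest rotation, i.e.\ condition~2.

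The remaining step, which I expect to be the main obstacle, is the tie-break among RLE-minimizers. These are in bijection with the distinct members of $S'$ whose RLE is $u^\star$, equivalently with the run boundaries of $\bP$ realizing $u^\star$; writing $p$ for the period of $u^\star$ (equivalently of $r_1\cdots r_\ell$ once condition~2 holds) and $\ell=kp$, consecutive such boundaries lie $p$ runs apart. If $p=\ell$ there is one boundary; if $p$ is even then $p$ runs span a whole number of $0$-run/$1$-run blocks, so rotating $\bP$'s bit string by their combined length returns the same string and all $k$ boundaries coincide; in either case the RLE-minimizer is unique and hence is the RL2-rep regardless of its leading bit --- this is the ``$p=\ell$ or $p$ even'' escape in condition~3. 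If instead $p$ is odd and $p<\ell$, then $k$ is even (as $\ell$ is even), $\bP$'s bit string is periodic with period the combined length of the first $2p$ runs, and exactly two members of $S'$ realize $u^\star$; being $p$ (odd) runs apart they are bitwise complements, so rotating back, one RLE-minimizer begins with $0$ and the other with $1$, and the one beginning with $0$ is the RL2-rep. Assembling the cases yields: $\omega$ is the RL2-rep iff $\omega\in S$ (condition~1), $r_1\cdots r_\ell$ is its own smallest rotation (condition~2), and either the minimizer is unique ($p=\ell$ or $p$ even) or $w_1=0$ (condition~3). For the efficiency claim, $r_1\cdots r_\ell$ is computed in $O(n)$ time and space; condition~2 is a least-rotation test and $p$ a failure-function computation on a word of length $\ell\le n$, each $O(n)$ time and $O(n)$ space by the standard techniques already invoked for RL-reps; conditions~1 and~3 are then $O(1)$ checks.
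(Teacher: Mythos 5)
Your proposal is correct and follows essentially the same route as the paper, which justifies this proposition only by the informal paragraph preceding it: reduce to strings with $w_1\neq w_2$, pass to the left rotations whose RLEs are exactly the rotations of the cyclic run-length sequence, and resolve the tie between a string and its complement via the parity of the period $p$. Your write-up merely fills in details the paper leaves implicit (the strict monotonicity of the decrement-last-entry map and the proof that two minimizers occur precisely when $p$ is odd and $p<\ell$), and these details check out.
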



\subsection{The complementing cycling register (CCR)}

 The \defo{complementing cycling register}, denoted $\CCR$, is the FSR with the feedback function $f(\omega) =  \overline{w_1}$, where $\overline{w_1}$ denotes the complement  $w_1$.   Thus, $\CCR(w_1w_2\cdots w_n) = w_2\cdots w_n\overline{w_1}$.
A string and its complement will belong to the same cycle induced by the $\CCR$.

\begin{exam}  \small \label{exam:CCR}
The $\CCR$ partitions $\bB(5)$ into the following four cycles $\bC_1, \bC_2, \bC_3, \bC_4$  where the top string in bold is the RL-rep for the given cycle.  The underlined string is the RL2-rep.

\vspace{-0.1in}

\begin{center}  \footnotesize
\begin{tabular}{c @{\hskip 0.4in}  c @{\hskip 0.4in}  c @{\hskip 0.4in} c}
$\bC_1$  & 
$\bC_2$  & 
$\bC_3$  & 
$\bC_4$  \\ 
\bblue{10101}		&	\bblue{11101}				&	\bblue{11001}	&\bblue{11111}		  	\\
\underline{01010} 			&	11010			&	10010		&11110				 \\
				&	10100					&	00100		&11100					 \\
				&	01000					&	\underline{01001}		&11000			 				 \\
				&	10001					&	10011		&10000			 				 \\
				&	00010					&	00110		&00000			 			 \\
				&	00101					&	01101		&00001			 			 \\
				&	\underline{01011}			&	11011		&00011			 			 \\  
				&	10111					&	10110		&00111			 				\\
				&	01110					&	01100		&\underline{01111}			  
\end{tabular}
\end{center}

\vspace{-0.1in}

\noindent

\end{exam}

The $\CCR$ has been applied to efficiently construct de Bruijn sequences in variety of ways~\cite{etzion1987,framework,huang}.  An especially efficient construction applies a concatenation scheme to construct a de Bruijn sequence with discrepancy, which is the maximum difference between the number of 0s and 1s in any substring,  bounded above by $2n$~\cite{gabric-concatenation, discrep}.    

As with the $\PCR$, we discuss how to efficiently determine whether or not a given string is an RL-rep or an RL2-rep for a cycle $\bC$ induced by the $\CCR$.
Consider a string $\omega = w_1w_2\cdots w_n$ in a cycle $\bC$.     
If $\omega$ is an RL-rep, then $w_1 = w_n$ because otherwise $\overline{w_n}w_1\cdots w_{n-1}$, which is also in $\bC$,  has a larger RLE than $\omega$.  All strings in $\bC$ that agree in the first and last bits form an equivalence class under rotation with respect to their RLE (that includes strings starting with both 0 and 1 for each RLE).
By definition, the RL-rep will be one that is lexicographically largest  amongst all its rotations.  As noted in the previous subsection, such a test can be performed in $O(n)$ time using $O(n)$ space.   There are no special cases to consider here since a string and its complement always belong to the same cycle.  Thus, every RL-rep must begin with 1.

\begin{proposition}
Let  $\omega = w_1w_2\cdots w_n$ be a string with RLE $r_1r_2\cdots r_\ell$ in a cycle $\bC$ induced by the $\CCR$.  
Then $\omega$ is the RL-rep for $\bC$  if  and only if  
\begin{enumerate}
\item $w_1 = w_n = 1$ and
\item $r_1r_2\cdots r_\ell$ is lexicographically largest amongst all its rotations.
\end{enumerate}
Moreover, testing whether or not $\omega$ is an RL-rep can be done in $O(n)$ time using $O(n)$ space.
\end{proposition}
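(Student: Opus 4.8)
The plan is to adapt the argument just used for the $\PCR$ to the structure of a $\CCR$ cycle. The key object is the \emph{doubled complement word} $W = w_1 w_2 \cdots w_n \overline{w_1} \overline{w_2} \cdots \overline{w_n}$ of length $2n$: reading $W$ cyclically, the length-$n$ windows of $W$ are exactly the strings of $\bC$, and for every position $p$ (taken mod $2n$) we have $W_{p+n} = \overline{W_p}$. Before the main work I would record two elementary facts. First, the RLE of a string is invariant under complementation, and $\overline{\omega}\in\bC$ whenever $\omega\in\bC$. Second, if $w_1\neq w_n$ then the $\CCR$-predecessor of $\omega$, namely $\overline{w_n}\,w_1 w_2\cdots w_{n-1}$, equals $w_1 w_1 w_2\cdots w_{n-1}$, whose RLE begins with $r_1+1$ and is therefore lexicographically larger than $r_1 r_2\cdots r_\ell$. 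The second fact shows that any RLE-maximal string of $\bC$, in particular the RL-rep, must satisfy $w_1 = w_n$; the first fact then forces the RL-rep to begin with $1$, since otherwise $\overline{\omega}\in\bC$ has the same RLE and wins the tie-break. This already yields condition $1$ and reduces the problem to understanding the RLEs of the \emph{candidate} strings of $\bC$, that is, those with $\sigma_1 = \sigma_n$.

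The structural core of the proof is the following lemma (assume $\ell\geq 2$; the case $\ell = 1$, where $\bC = \{0^n, 1^n\}$ and the RL-rep is $1^n$, is checked directly and is consistent with the statement): the RLEs of the candidate strings of $\bC$ are exactly the $\ell$ cyclic rotations of $r_1 r_2\cdots r_\ell$, each one realized by a complementary pair of candidates, one beginning with $0$ and one with $1$. To prove it I would use $W$. Since the bit following a window $\sigma$ is $\overline{\sigma_1}$ and the bit preceding it is $\overline{\sigma_n}$ (both by $W_{p+n} = \overline{W_p}$), one gets $\sigma_1 = \sigma_n$ if and only if $\sigma$ both begins and ends at a run boundary of the cyclic word $W$. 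When $w_1 = w_n$, both junctions of $W$ are run boundaries, so the cyclic RLE of $W$ is $r_1\cdots r_\ell r_1\cdots r_\ell$ with $2\ell$ boundaries; a block of consecutive whole runs of $W$ summing to $n$ must comprise exactly $\ell$ of them (all $r_i\geq 1$), and such blocks are precisely the cyclic rotations of $r_1\cdots r_\ell$. Counting the $2\ell$ boundaries and using the fixed-point-free, RLE-preserving, boundary-to-boundary involution $\sigma\mapsto\overline{\sigma}$ (the shift of $W$ by $n$) gives the complementary-pair refinement. A short separate check handles the case where $W$ is a proper power. I expect this lemma to be the main obstacle, as it is where the $\CCR$-specific combinatorics is concentrated; everything else is bookkeeping.

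Granting the lemma, I would finish as follows. For necessity: the RL-rep $\omega$ is RLE-maximal over $\bC$, hence a candidate, so $\rho(\omega)$ is a rotation of itself; since every rotation of $r_1\cdots r_\ell$ is attained by some candidate, $\rho(\omega)$ must be the lexicographically largest such rotation, which is condition $2$; together with condition $1$, established above, this proves the forward implication. For sufficiency: suppose $w_1 = w_n = 1$ and $r_1\cdots r_\ell$ is the lexicographically largest of its rotations. By the lemma every candidate has RLE at most $\rho(\omega)$, and by the second elementary fact every non-candidate of $\bC$ is strictly beaten in RLE by some string of $\bC$; hence $\rho(\omega)$ is the lexicographically largest RLE occurring in $\bC$. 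Finally, a binary string of length $n$ is uniquely determined by its RLE together with its first bit, so $\omega$ is the only string of $\bC$ attaining this maximal RLE that begins with $1$, i.e.\ $\omega$ is the RL-rep (this also re-derives that the RL-rep is well defined).

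For the efficiency claim, conditions $1$ and $2$ are tested by a single linear scan extracting $r_1\cdots r_\ell$ from $\omega$, followed by an $O(\ell)$ test of whether this sequence is the lexicographically largest among its rotations, using Booth's or Duval's algorithm~\cite{booth,duval}; the total cost is $O(n)$ time and $O(n)$ space.
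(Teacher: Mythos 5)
Your proposal is correct, and at its core it is the same argument the paper gives (in the informal paragraph preceding the proposition): the $\CCR$-predecessor argument showing an RLE-maximal string must have $w_1=w_n$, the observation that $\overline{\omega}$ lies in the same cycle with the same RLE to force the leading $1$ via the tie-break, the identification of the candidate strings' RLEs with the cyclic rotations of $r_1r_2\cdots r_\ell$, and Booth/Duval for the $O(n)$ bound. The genuine added value of your write-up is that you actually \emph{prove} the rotation-equivalence claim, which the paper only asserts (``all strings in $\bC$ that agree in the first and last bits form an equivalence class under rotation with respect to their RLE''), by viewing the cycle as the length-$n$ windows of the doubled word $W=w_1\cdots w_n\overline{w_1}\cdots\overline{w_n}$ and matching candidates with blocks of whole runs of $W$; your uniqueness-from-(RLE, first bit) step also cleanly settles the tie-break in the sufficiency direction. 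One small factual slip: in the $\ell=1$ case the $\CCR$-cycle is \emph{not} $\{0^n,1^n\}$ --- it is the cycle of $2n$ strings of the form $1^a0^b$ and $0^a1^b$ (cf.\ $\bC_4$ in the paper's $n=5$ example). The conclusion of your side check is unaffected ($1^n$ still has the strictly largest RLE in that cycle and $0^n$ fails condition 1), but the parenthetical description of the cycle should be corrected.
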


In a similar manner we consider RL2-reps.   Again, consider a string $\omega = w_1w_2\cdots w_n$ in a cycle $\bC$.     
If $\omega$ is an RL2-rep, then $w_1 \neq w_2$ because otherwise $w_2\cdots w_n\overline{w_1}$ has a smaller RLE than $\omega$.  
Consider all such strings $w_2\cdots w_n\overline{w_1}$ in a cycle $\bC$ such that $w_2 \neq w_1$.  As noted in the RL-case, all such strings form an equivalence class under rotation with respect to their RLE. Clearly, such a string that has the lexicographically smallest RLE will be the RL2-rep.   There are no special cases to consider here since a string and its complement always belong to the same cycle.  Thus, every RL2-rep must begin with 0 and hence $w_2 = 1$.

\begin{proposition}
Let  $\omega = w_1w_2\cdots w_n$ be a string with RLE $r_1r_2\cdots r_\ell$ in a cycle $\bC$ induced by the $\CCR$.  
Then $\omega$ is the RL2-rep for $\bC$  if  and only if  
\begin{enumerate}
\item $w_1 = 0$ and $w_2=1$, and
\item $r_1r_2\cdots r_\ell$ is lexicographically smallest amongst all its rotations.
\end{enumerate}
Moreover, testing whether or not $\omega$ is an RL2-rep can be done in $O(n)$ time using $O(n)$ space.
\end{proposition}


\subsection{The pure run-length register ($\PRR$)} \label{sec:PRR}

The feedback function of particular focus in this paper is $f(\omega) = w_1\oplus w_2 \oplus w_n$.   
We will demonstrate that FSR based on this feedback function partitions $\bB(n)$ into cycles of strings with the same run length.  
Because of this property, we call this FSR the \defo{pure run-length register} and denote it by $\PRR$.  
 Thus, $$\PRR(w_1w_2\cdots w_n) = w_2\cdots w_n(w_1\oplus w_2 \oplus w_n).$$
This follows the naming of the pure cycling register (PCR) and 
the pure summing register (PSR), which is based on the feedback function $f(\omega) = w_1\oplus w_2 \oplus \cdots \oplus w_n$~\cite{golomb}.

Let  $\bR_1, \bR_2, \ldots, \bR_t$ denote the cycles induced by the $\PRR$ on $\bB(n)$.  The following example illustrates how the cycles induced by the  $\PRR$ relate to the cycles induced
by the $\PCR$ and $\CCR$.

\begin{exam} \small \label{exam:PRR}
The $\PRR$ partitions $\bB(6)$ into the following 12 cycles $\bR_1, \bR_2, \ldots, \bR_{12}$ where the top string in bold is the RL-rep for the given cycle. The underlined string is the RL2-rep.  The cycles are ordered in non-increasing order with respect to the run lengths of their RL-reps.

\vspace{-0.1in}

 \begin{center}   \footnotesize
\begin{tabular}{c @{\hskip 0.16in}  c @{\hskip 0.16in}  c @{\hskip 0.16in} c @{\hskip 0.16in} c @{\hskip 0.16in} c @{\hskip 0.16in} c @{\hskip 0.16in} c  @{\hskip 0.16in} c @{\hskip 0.16in} c   }
$\bR_1$  & 
$\bR_2$  & 
$\bR_3$  & 
$\bR_4$ & 
$\bR_5$  & 
$\bR_6$ & 
$\bR_7$ & 
$\bR_8$ & 
$\bR_9$ & 
$\bR_{10}$  \\  
	\bblue{101010}		& 	\bblue{110101}	&	\bblue{001010}		&	\bblue{111010}		&	\bblue{110010} 		 &	\bblue{111101}	& 	\bblue{000010}	& \bblue{111001}  &	\bblue{000110} &	\bblue{111110}				 \\
	\underline{010101}  			&	\underline{101011}		&	\underline{010100}			&	110100			&	100100 				&	111011		&	000100		& 110011		&	001100	&	111100					\\
					&  	010110		&	101001			&	101000			&	001001				&	110111		&	001000		& \underline{100111}		&	\underline{011000}	&	111000				 \\
					& 	101101		&	010010			&	010001			&	\underline{010011}				&	\underline{101111}		&	\underline{010000}		& 001110		&	110001	&	110000								 \\
					& 	011010		&	100101			&	100010			&	100110 				&	011110		&	100001		& 011100		&	100011	&	100000								  \\
					&				&					&	000101			&	001101				&				&	   			&			&			&	000001			 									 \\
					&				&					&	001011			&	011011				&	   			&				&			&			&	000011			 									 \\
					&				&					&	\underline{010111}			&	110110 				&	   			&				&			&			&	000111			 									\\
					&				&					&	101110			&	101100 				&	   			&				&			&			&	001111			 								 \\	
					& 				&					&	011101			&	011001 				&				&				&			&			&	\underline{011111}	 \\
$\bR_{11}$  & 
$\bR_{12}$  \\
\underline{\bblue{111111}} & \underline{\bblue{000000}} \\
			
\end{tabular}
\end{center}

\noindent
By omitting the last bit of each string, the columns are precisely the cycles of the $\PCR$ and $\CCR$ for $n=5$.  The cycles $\bR_1, \bR_4, \bR_5, \bR_{10}$ relating to the $\CCR$ start and end with the different bits.  
The remaining cycles relate to the $\PCR$; each string in these cycles start and end with the same bit. 
\end{exam}

\noindent
In the example above,  note that all the strings in a given cycle $\bR_i$ have the same run length.
\begin{lemma} \label{lem:rle}
All the strings in a given cycle $\bR_i$ have the same run length.
\end{lemma}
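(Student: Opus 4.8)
The plan is to show that the feedback function $f(\omega) = w_1 \oplus w_2 \oplus w_n$ preserves run-length when applied as a shift register, i.e.\ that $\omega$ and $\PRR(\omega)$ always have the same run-length; the lemma then follows immediately by induction along each cycle $\bR_i$. Write $\omega = w_1 w_2 \cdots w_n$ and $\PRR(\omega) = w_2 w_3 \cdots w_n b$ where $b = w_1 \oplus w_2 \oplus w_n$. Passing from $\omega$ to $\PRR(\omega)$ does two things: it deletes the leading bit $w_1$, and it appends the new bit $b$. I would analyze the effect of each operation on the run-length separately, in terms of the relationships among $w_1, w_2$ and $w_{n-1}, w_n$.

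First, deleting $w_1$: if $w_1 = w_2$ the run-length is unchanged (we merely shorten the first run), and if $w_1 \neq w_2$ the run-length drops by exactly $1$ (we delete a length-one run). Second, appending $b$ to the suffix $w_2 \cdots w_n$: if $b = w_n$ the run-length is unchanged (we lengthen the last run), and if $b \neq w_n$ the run-length increases by exactly $1$ (we create a new length-one run). So the net change in run-length is $0$ precisely when the two operations either both leave it fixed or one increments and the other decrements — that is, when ``$w_1 = w_2$ and $b = w_n$'' or when ``$w_1 \neq w_2$ and $b \neq w_n$''. Now substitute $b = w_1 \oplus w_2 \oplus w_n$: the condition $b = w_n$ is equivalent to $w_1 \oplus w_2 = 0$, i.e.\ $w_1 = w_2$; and $b \neq w_n$ is equivalent to $w_1 \neq w_2$. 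Hence exactly the two "net-zero" cases occur, and the run-length is invariant under $\PRR$. Since every string in $\bR_i$ is obtained from any other by iterating $\PRR$, all strings in $\bR_i$ share a common run-length.

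There is one small wrinkle to be careful about, and I expect it to be the only real subtlety: the argument above tacitly treats the run-length of a linear string, but the runs at the two ends of $\omega$ interact only through the cyclic structure of the register, not the string itself — so I should make sure the "delete $w_1$" and "append $b$" bookkeeping is about the \emph{linear} run-length of the $n$-bit window, and verify the boundary/degenerate cases (e.g.\ $\omega = 0^n$ or $1^n$, where $w_1 = w_2 = w_n$ so $b = w_n$ and nothing changes, consistent with the fixed cycles $\bR_{11}, \bR_{12}$; and short windows where a run spans almost the whole string). A clean way to avoid any ambiguity is to phrase the count via the number of \emph{descents} (indices $i$ with $w_i \neq w_{i+1}$) in the length-$n$ window: the run-length equals one plus the number of such descents. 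Deleting $w_1$ removes the potential descent at position $1$; appending $b$ adds a potential descent at the new last position; and the computation $b = w_n \iff w_1 = w_2$ shows these two potential descents are present or absent together, so the descent count — and hence the run-length — is preserved. This reframing makes the whole argument a one-line parity check and sidesteps the end-effect casework entirely.
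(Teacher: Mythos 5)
Your proof is correct and is essentially the paper's own argument: the paper likewise shows $\PRR$ preserves run-length by observing that $w_1 = w_2$ iff $w_n = f(\omega)$, so the run lost (or not) at the front is exactly compensated at the back. Your version just spells out the bookkeeping (and the descent-count reformulation) more explicitly than the paper's two-line proof.
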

\begin{proof}
Consider a string $\omega = w_1w_2\cdots w_n$ and the feedback function $f(\omega) = w_1\oplus w_2 \oplus w_n$.  It suffices to show that $w_2\cdots w_n f(\omega)$ has the same run length as $\omega$.
This is easily observed since if $w_1 = w_2$ then $w_n =  f(\omega)$ and if $w_1 \neq w_2$ then $w_n \neq  f(\omega)$.
\end{proof}

Based on this lemma, if the strings in $\bR_i$ have run length $\ell$, we say that $\bR_i$ has run length $\ell$.
Each cycle $\bR_i$ has another interesting property:  either all the strings start and end with the same bit, or all the strings start and end with different bits.  
If the strings start and end with the same bit, then $\bR_i$ must have odd run length and if we remove the last bit of each string we obtain a cycle induced by the $\PCR$ of order $n{-}1$.  In this case we say that $\bR_i$ is a \defo{PCR-related cycle}.  Such a cycle is \defo{periodic} if for each string 
$\omega = w_1w_2\cdots w_n \in \bR_i$, $w_1w_2\cdots w_{n-1}$ is periodic; otherwise, $\bR_i$ is \defo{aperiodic} and the cycle contains $n{-}1$ distinct strings.
 If the strings start and end with the different bits, then $\bR_i$ must have even run length and if we remove the last bit of each string we obtain a cycle induced by the $\CCR$ of order $n{-}1$.  In this case we say that $\bR_i$ is a \defo{CCR-related cycle}.   Such a cycle is \defo{periodic} if for each string 
$\omega = w_1w_2\cdots w_n \in \bR_i$, $w_1w_2\cdots w_{n-1}\overline{w_1w_2\cdots w_{n-1}}$ is periodic; otherwise, it is \defo{aperiodic} and the cycle contains $2n-2$ distinct strings.  As an example, consider the CCR-related cycle for $n=7$ containing the strings $\{001100\red{1}, 011001\red{0}, 110011\red{0}, 100110\red{1}\}$.  Consider $\omega = 001100\red{1}$ and note that $001100\underline{110011}$ is periodic.
These observations were first made in~\cite{sala} and are illustrated in Example~\ref{exam:PRR}, where the periodic cycles are $\bR_1$, $\bR_{11}$ and $\bR_{12}$.   

%
\begin{observation} \label{obs:PRR}
Let $\omega, \omega' \in \bR_i$ and let $\PRR^j(\omega) = \omega'$.  
If $\bR_i$  is PCR-related then $\PRR^{(n-1)-j}(\omega') = \omega$.
If $\bR_i$  is CCR-related then  $\PRR^{(2n-2)-j}(\omega') = \omega$ and furthermore $\PRR^{n-1}(\omega) = \overline{\omega}$.
\end{observation}

The following lemma considers the RLEs for strings in a cycle $\bR_i$.

\begin{lemma} \label{lem:RLEs}
Let $\omega = w_1w_2\cdots w_n$ be a string in $\bR_i$ with RLE of the form  $1r_1r_2\cdots r_m$ or
$r_1r_2\cdots r_m1$.  Then the RLE of any string in $\bR_i$ has the form
$$(r_s{-}j)r_{s+1}\cdots r_mr_1\cdots r_{s-1} (j{+}1),$$ 
for some $1 \leq s \leq m$ and $0 \leq j < r_s$.
\end{lemma}
\begin{proof}
If the RLE of $\omega$ begins with 1 then $w_1 \neq w_2$ and thus $\PRR(\omega) = w_2\cdots w_n\overline{w_n}$ will have
RLE of the form $r_1r_2\cdots r_m1$.
Starting with this RLE, the next $r_1-1$ applications of the $\PRR$ yield strings with RLE: 
$$ \blue{r_1r_2\cdots r_m1}, \ \ \ (r_1{-}1)r_2\cdots r_m2, \ \  \  (r_1{-}2)r_2\cdots r_m3, \ \  \ \ldots, \ \  1r_2\cdots r_mr_1.$$  
Repeating this pattern produces the remaining strings in $\bR_i$, which leads to the desired result.
\end{proof}

\begin{exam}
Consider RL-rep $\omega = 0001110110$ belonging to an aperiodic PCR-related cycle  containing the following nine strings with their RLE in parentheses:
\begin{center}
\begin{tabular}{l}
0001110110 (\blue{33121}), \ \   0011101100 (\blue{23122}), \ \   0111011000 (\blue{13123}), \\
1110110001 (\blue{31231}), \ \    1101100011 (\blue{21232}), \ \     1011000111 (\blue{11233}), \\
0110001110 (\blue{12331}), \\  
1100011101 (\blue{23311}), \ \     1000111011 (\blue{13312}).  
\end{tabular}
\end{center}
 \vspace{-0.15in}
 \end{exam}

%

We can apply the RL-rep and RL2-rep testers for cycles induced by the $\PCR$ and $\CCR$ to determine whether or not a string $\omega$ is an RL-rep or an RL2-rep for a cycle $\bR_i$.  These testers, outlined in the following propositions, are critical to the efficiency of our upcoming de Bruijn successors.    

\begin{proposition} \label{fact:RL-test}
Let  $\omega = w_1w_2\cdots w_n$ be a string in a cycle $\bR_i$.  
Then $\omega$ is the RL-rep for $\bR_i$  if  and only if  
\begin{enumerate}
\item $w_1 = w_n$ and $w_1w_2\cdots w_{n-1}$ is an RL-rep with respect to the $\PCR$, or 
\item $w_1 \neq w_n$ and $w_1w_2\cdots w_{n-1}$ is an RL-rep with respect to the $\CCR$. 
\end{enumerate}
Moreover, testing whether or not $\omega$ is an RL-rep for $\bR_i$ can be done in $O(n)$ time using $O(n)$ space.
\end{proposition}
\begin{proposition}  \label{fact:RL2-test}
Let  $\omega = w_1w_2\cdots w_n$ be a string in a cycle $\bR_i$.  
Then $\omega$ is the RL2-rep for $\bR_i$  if  and only if  
\begin{enumerate}
\item $w_1 = w_n$ and $w_1w_2\cdots w_{n-1}$ is an RL2-rep with respect to the $\PCR$, or 
\item $w_1 \neq w_n$ and $w_1w_2\cdots w_{n-1}$ is an RL2-rep with respect to the $\CCR$. 
\end{enumerate}
Moreover, testing whether or not $\omega$ is an RL2-rep for $\bR_i$ can be done in $O(n)$ time using $O(n)$ space.
\end{proposition}
The above propositions can easily be verified by the reader based on the definitions of RL-reps and RL2-reps and applying Lemma~\ref{lem:RLEs}.


\section{Generic de Bruijn successors based on the PRR} \label{sec:generic}

In this section we provide two generic de Bruijn successors that are applied  to derive specific de Bruijn successors for $\same{n}$ and $\opposite{n}$ in the subsequent sections.
The results relate specifically to the PRR and we assume that $\bR_1, \bR_2, \ldots, \bR_t$ denote the cycles induced by the $\PRR$ on $\bB(n)$.

Let $\omega = w_1w_2 \cdots w_n$ be a binary string.  Define the \defo{conjugate} of  $\omega$ to be $\hat \omega = \overline{w_1}w_2\cdots w_n$.  Similar to Hierholzer's cycle-joining approach discussed in Section~\ref{sec:euler},  Theorem 3.5 from~\cite{framework} can be applied to systematically join together the \emph{ordered} cycles $\bR_1, \bR_2, \ldots, \bR_t$ given certain representatives $\alpha_i$ for each $\bR_i$.  This theorem is restated as follows when applied to the PRR and the function $f(\omega) =  w_1 \oplus w_2 \oplus w_n$.

\begin{theorem} \label{thm:framework}
For each $1< i \leq t$, if the conjugate $\hat \alpha_i$ of  the representative $\alpha_i$ for cycle $\bR_i$ belongs to some $\bR_j$ where $j < i$, then 
\begin{center}
$g(\omega) = \left\{ \begin{array}{ll}
        \overline{f(\omega)} &\ \  \mbox{if $\omega$ or $\hat \omega$ is in $\{\alpha_2, \alpha_3, \ldots ,\alpha_t\}$;}\\
        f(\omega) &\ \  \mbox{otherwise}\end{array} \right.$
\end{center}
is a de Bruijn successor.
\end{theorem}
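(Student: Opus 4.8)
The plan is to deduce this from the general cycle-joining framework (Theorem 3.5 of~\cite{framework}) by verifying that its hypotheses are met in the present setting. Recall that the framework takes a feedback shift register whose feedback function is $f$, together with the partition of $\bB(n)$ into the pure cycles $\bR_1,\ldots,\bR_t$ it induces, and a choice of one representative string $\alpha_i$ from each cycle. The framework's conclusion is that, provided the ``conjugate graph'' on the cycles (with an edge $\bR_i \to \bR_j$ whenever $\hat\alpha_i \in \bR_j$) contains a spanning tree oriented toward $\bR_1$, then toggling $f$ exactly on the strings $\{\alpha_2,\ldots,\alpha_t\}$ and their conjugates $\{\hat\alpha_2,\ldots,\hat\alpha_t\}$ merges all $t$ cycles into a single cycle through every vertex of $\bB(n)$ exactly once, i.e.\ yields a de Bruijn successor. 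So the first step is simply to instantiate that theorem with $f(\omega)=w_1\oplus w_2\oplus w_n$ and the $\PRR$-cycles.

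The second step is to check that the hypothesis ``for each $1<i\le t$, $\hat\alpha_i$ belongs to some $\bR_j$ with $j<i$'' is exactly the spanning-tree condition the framework requires. Indeed, if every cycle $\bR_i$ with $i>1$ has an outgoing conjugate-edge to a strictly earlier cycle, then following these edges from any $\bR_i$ strictly decreases the index and therefore must terminate at $\bR_1$; hence the edges $\{(\bR_i,\bR_j) : \hat\alpha_i\in\bR_j,\ i>1\}$ form a spanning in-tree rooted at $\bR_1$ (each non-root has out-degree exactly one in this edge set, and there are no cycles). This is the precise graph-theoretic input the framework needs, so its conclusion applies verbatim.

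The third step is to confirm that the ``toggle'' operation in the framework is correctly transcribed as $g(\omega)=\overline{f(\omega)}$ when $\omega\in\{\alpha_2,\ldots,\alpha_t\}$ or $\hat\omega\in\{\alpha_2,\ldots,\alpha_t\}$, and $g(\omega)=f(\omega)$ otherwise. Here one must check two small compatibility points: (i) for a fixed $i$, the strings $\alpha_i$ and $\hat\alpha_i=\overline{(\alpha_i)_1}(\alpha_i)_2\cdots(\alpha_i)_n$ differ only in the first bit, so ``$\hat\omega$ is some $\alpha_i$'' is equivalent to ``$\omega$ is the conjugate of some $\alpha_i$,'' matching the framework's description of the two vertices incident to each joining edge; and (ii) since the $\alpha_i$ lie in distinct cycles and a string and its conjugate lie in distinct $\PRR$-cycles, no string is simultaneously an $\alpha_i$ and a conjugate of an $\alpha_{i'}$ unless forced, so the two cases of $g$ are unambiguous and $g$ is well-defined. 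Given these checks, Theorem 3.5 of~\cite{framework} directly yields that $\mathrm{DB}(g,\cdot)$ traverses every string of $\bB(n)$ exactly once, i.e.\ $g$ is a de Bruijn successor.

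The main obstacle is not any deep argument but rather the bookkeeping of the correspondence between the framework's abstract statement and this concrete instance: making sure the orientation of the conjugate-tree (toward $\bR_1$ versus away from it), the role of the index ordering, and the placement of the toggle (on $\alpha_i$ and $\hat\alpha_i$, with $f$ versus $\overline{f}$) all line up exactly with~\cite{framework}. Once that dictionary is pinned down, the proof is a one-line appeal to the cited theorem, and I would present it as such, with the index-decreasing argument above supplied to justify why the stated hypothesis is exactly the spanning-tree condition.
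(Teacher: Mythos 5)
Your proposal is correct and matches the paper's treatment: the paper gives no independent proof of this theorem, presenting it precisely as a restatement of Theorem 3.5 of~\cite{framework} applied to the PRR with $f(\omega)=w_1\oplus w_2\oplus w_n$, which is exactly the appeal you make. Your added bookkeeping (the index-decreasing argument showing the conjugate edges form a spanning in-tree rooted at $\bR_1$, and the observation that a string and its conjugate have different run-lengths and hence lie in distinct PRR-cycles) is consistent with the paper's intent and fills in the dictionary the paper leaves implicit.
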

Together, the ordering of the cycles and the sequence  $\alpha_2, \alpha_3, \ldots , \alpha_t$ correspond to a rooted tree, where the nodes are the cycles $\bR_1, \bR_2, \ldots, \bR_t$ with $\bR_1$ designated as the root.  There is an edge between two nodes $\bR_i$ and $\bR_j$ where $i>j$, if and only if $\hat \alpha_i$ is in $\bR_j$;  we say that $\bR_j$ is the \defo{parent} of $\bR_i$.  Each edge represents the joining of two cycles similar to the technique used in Hierholzer's Euler cycle algorithm (see Section~\ref{sec:euler}).   An example of such a tree for $n=6$ is given in the following example.

\begin{exam}  \small  \label{exam:tree}
Consider the cycles $\bR_1, \bR_2, \ldots , \bR_{12}$ for $n=6$  from Example~\ref{exam:PRR} along with their corresponding RL-reps $\alpha_i$ for each $\bR_i$.  For each $i>1$, $\hat \alpha_i$ belongs to some $\bR_j$ where $j<i$. Thus, we can apply Theorem~\ref{thm:framework} to obtain a de Bruijn successor $g(\omega)$ based on these representatives.  The following tree illustrates the joining of these cycles based on $g$: 
\begin{center}
\resizebox{3.3in}{!}{\includegraphics{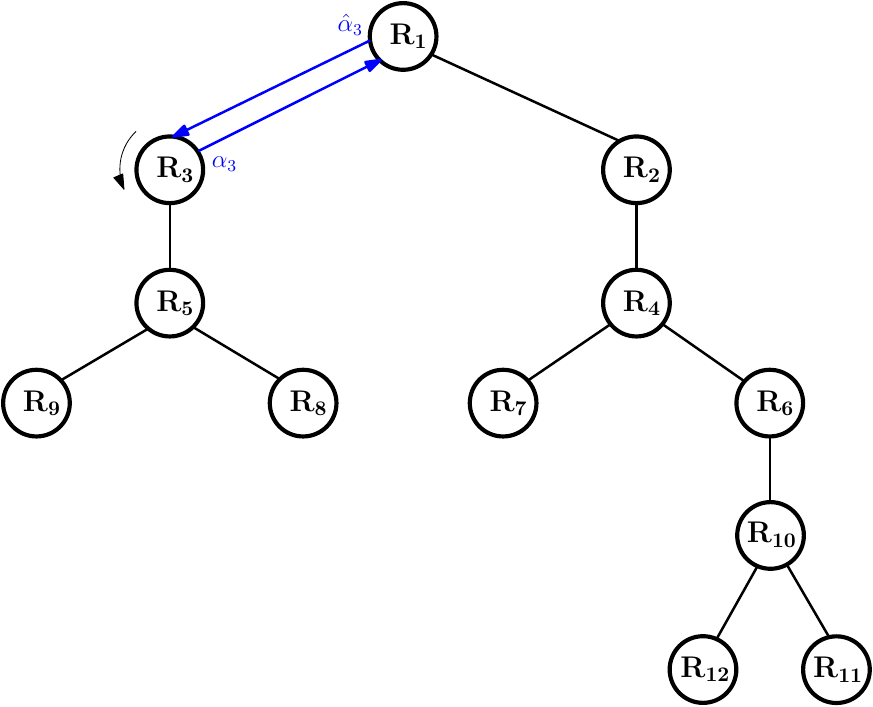}}
\end{center}
Starting with 101010 from $\bR_1$, and repeatedly applying the function $g(\omega)$ we obtain the de Bruijn sequence:
$$101010010011\underline{000110}1100111001010110100010000101110111100000011111.$$
Note that the RL-rep of $\bR_3$ is $\alpha_3=001010$  and its conjugate $\hat \alpha_3=101010$  is found in its parent $\bR_1$.  
The last string visited in each cycle $\bR_i$, for $i > 1$, is its representative $\alpha_i$.  
%
\end{exam}



\noindent
The following observations, which will be applied later in our more technical proofs,  follow from the tree interpretation of the ordered cycles rooted at $\bR_1$ from Theorem~\ref{thm:framework} as illustrated in the previous example.\footnote{Note that Theorem~\ref{thm:framework} and Observation~\ref{obs:tree} apply more generally to any non-singular feedback function $f$.}

\begin{observation} \label{obs:tree}
Let $g$ be a de Bruijn successor from Theorem~\ref{thm:framework} based on representatives $\alpha_2, \alpha_3, \ldots, \alpha_t$. 
Let $\mathcal{D}_n = DB(g, w_1w_2\cdots w_n)$ and let $\mathcal{D'}_n = w_2\cdots w_n\mathcal{D}_n$ denote a linearized de Bruijn sequence.   
If the length $n$ prefix of $\mathcal{D'}_n$ is in $\bR_1$,  then for each $1< i  \leq t$: 

\begin{enumerate}
\item $\hat \alpha_i$ appears before all strings in $\bR_i$, 
\item the $m$ strings of $\bR_i$ appear in the following order: $\PRR(\alpha_i), \PRR^2(\alpha_i), \ldots , \PRR^m(\alpha_i) = \alpha_i$,
\item if $\bR_i$ and $\bR_k$ are on the same level in the corresponding tree of cycles rooted at $\bR_1$, then either every string in $\bR_i$ comes before every string in $\bR_k$ or vice-versa,
\item the strings in all descendant cycles of $\bR_i$ appear after $\hat \alpha_i$ and before $\alpha_i$, and
\item if $\hat \alpha_i = a_1a_2\cdots a_n$, then $a_2\cdots a_n g(\hat \alpha_i)$ is in $\bR_i$. 

\end{enumerate}
\end{observation}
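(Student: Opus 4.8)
The plan is to prove all five items by induction on the tree structure of cycles rooted at $\bR_1$, exploiting the fact that the de Bruijn successor $g$ from Theorem~\ref{thm:framework} modifies the $\PRR$ feedback exactly at the conjugate pairs $\{\alpha_i, \hat\alpha_i\}$. The central mechanism I would isolate first: as long as $\mathcal{D'}_n$ traverses strings inside a cycle $\bR_i$ and never hits $\alpha_i$ or one of the conjugates $\alpha_k$ (for $k$ a child of $i$), it behaves exactly like $\PRR$ and so cycles through $\bR_i$; the trajectory "escapes" $\bR_i$ only when it reaches $\alpha_i$ (applying $\overline{f}$ sends $\alpha_i$'s successor to $\hat\alpha_i$'s cycle, i.e.\ back up the tree) or when it reaches some $\alpha_k$ with $\hat\alpha_k \in \bR_i$ (applying $\overline f$ there sends the trajectory into the child cycle $\bR_k$). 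This is precisely the Hierholzer cycle-joining picture, and item~5 is essentially the statement that $\hat\alpha_i \to (\text{something in } \bR_i)$ is one of these join edges, which follows directly from the definition of $g$ at $\hat\alpha_i$ (since $\hat\alpha_i$ is a conjugate of a representative, $g$ flips the feedback, and flipping $f$ at $\hat\alpha_i$ produces the string that $\PRR$ would have produced from $\alpha_i$, which lies in $\bR_i$).

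Next I would establish items~1 and~2 together. Item~2 says each cycle $\bR_i$ is traversed as a single contiguous run in the $\PRR$ order $\PRR(\alpha_i), \PRR^2(\alpha_i), \ldots, \alpha_i$; this is because the only place $g$ deviates from $\PRR$ on $\bR_i$'s strings (in a way that leaves $\bR_i$) is at $\alpha_i$ itself, and the only way to enter $\bR_i$ is via the join edge from $\hat\alpha_i$ landing on $\PRR(\alpha_i)$ as in item~5; a descendant cycle $\bR_k$ is entered and fully exhausted before control returns to $\bR_i$ (this is the recursion). Item~1 is then immediate: to reach any string of $\bR_i$ the trajectory must first traverse the join edge out of $\hat\alpha_i$, and $\hat\alpha_i$ lies in the ancestor cycle $\bR_j$, which is visited earlier. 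For item~4, once we know $\bR_i$ is entered right after $\hat\alpha_i$ and left at $\alpha_i$, everything visited strictly between $\hat\alpha_i$ and $\alpha_i$ is either in $\bR_i$ or — when the trajectory takes a detour at some $\alpha_k$ with parent $i$ — in the subtree rooted at $\bR_k$; unrolling this recursively gives that exactly the union of $\bR_i$ and all its descendant cycles occupies the window strictly after $\hat\alpha_i$ and up through $\alpha_i$, which is what item~4 asserts. Item~3 follows from items~1 and~4: two cycles $\bR_i, \bR_k$ at the same tree level have disjoint ancestor-subtree windows unless one is nested in the other, which same-level siblings are not; the order between them is inherited from the order in which their (possibly common) ancestor visits the two join edges.

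I expect the main obstacle to be making the induction airtight rather than any single computation: one must carefully set up the inductive hypothesis so that "processing" of a subtree rooted at $\bR_i$ is a well-defined contiguous block of $\mathcal{D'}_n$, prove that $g$ restricted to that block agrees with a ``local'' cycle-joining on the subtree, and verify that the entry point is always $\PRR(\alpha_i)$ and the exit always occurs at $\alpha_i$ with the successor returning to the parent. The subtle point is the interaction of several children of the same node: the trajectory inside $\bR_i$ may be interrupted multiple times (once per child $\bR_k$ whose conjugate $\hat\alpha_k$ lies in $\bR_i$), and one needs that these interruptions happen at the distinct strings $\hat\alpha_k$ in the $\PRR$-order around $\bR_i$ without colliding with $\alpha_i$ — guaranteed because $\alpha_k \neq \hat\alpha_k$ and all the relevant $\alpha_k$ and $\alpha_i$ are distinct conjugate representatives. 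A convenient way to package this is to observe that $w_2\cdots w_n \mathcal{D}_n$ is the edge-label/vertex sequence of the Hierholzer Euler cycle built by joining $\bR_1, \ldots, \bR_t$ in tree order, and then quote the standard properties of that construction; but since the paper develops things via Theorem~\ref{thm:framework} directly, I would instead argue inductively from the two-case definition of $g$, as sketched above.
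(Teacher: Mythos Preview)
The paper does not actually prove this statement: it is presented as an \emph{Observation} and the only justification given is the sentence preceding it, namely that the items ``follow from the tree interpretation of the ordered cycles rooted at $\bR_1$ from Theorem~\ref{thm:framework} as illustrated in the previous example.'' In other words, the authors treat these facts as immediate consequences of the Hierholzer cycle-joining picture and leave the verification to the reader.

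Your proposal supplies exactly the argument the paper omits, and it is correct. You have identified the right mechanism: $g$ agrees with $\PRR$ except at the conjugate pairs $\{\alpha_i,\hat\alpha_i\}$, so within a cycle the trajectory follows $\PRR$ until it either exits upward at $\alpha_i$ or detours into a child at some $\hat\alpha_k$; entry into $\bR_i$ occurs only at $\PRR(\alpha_i)$ via $\hat\alpha_i$, and re-entry from a child $\bR_k$ occurs at $\PRR(\hat\alpha_k)$. Your inductive packaging (each subtree is processed as a contiguous block, entered at $\PRR(\alpha_i)$ and exited at $\alpha_i$) is the natural way to make items 1, 2, and 4 rigorous simultaneously, and your derivation of item 3 from items 1 and 4 is clean. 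Item 5 is, as you note, a direct computation from the definition of $g$ and the identity $\overline{f(\hat\alpha_i)} = f(\alpha_i)$. The ``subtle point'' you flag about multiple children is real but, as you say, resolved by the distinctness of the representatives. There is nothing to compare against here since the paper gives no proof; your sketch is the standard formalization and would serve well if the observation were to be upgraded to a lemma.
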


As an application of Theorem~\ref{thm:framework}, consider the cycles $\bR_1, \bR_2, \ldots, \bR_t$  to be ordered in \emph{non-increasing order} based on the run length of each cycle.  
Such an ordering is given in Example~\ref{exam:PRR} for $n=6$.  Using this ordering, let $\alpha_i = a_1a_2\cdots a_n$  be any string in $\bR_i$, for $i > 1$,  such that $a_1 = a_2$. Note that $\hat \alpha_i$  has run length that is one \emph{more} than the run length of $\alpha_i$ and thus  $\hat \alpha_i$  belongs to some $\bR_j$ where $j < i$.   
Thus, Theorem~\ref{thm:framework}  can be applied to describe the following generic de Bruijn successor based on the $\PRR$.

%
\begin{result}
\vspace{-0.2in}
\begin{theorem}  \label{thm:RL}
Let $\bR_1, \bR_2, \ldots, \bR_t$ be listed in non-increasing order with respect to the run length of each cycle.  Let $\alpha_i = a_1a_2\cdots a_n$ denote a representative in $\bR_i$ such that $a_1=a_2$, for each $1 < i \leq t$.   Let $\omega = w_1w_2\cdots w_n$ and let $f(\omega) =  w_1 \oplus w_2 \oplus w_n$.  Then the function:
\begin{center}
$g(\omega) = \left\{ \begin{array}{ll}
        \overline{f(\omega)} &\ \  \mbox{if $\omega$ or $\hat \omega$ is in $\{\alpha_2, \alpha_3, \ldots ,\alpha_t\}$;}\\
        f(\omega) &\ \  \mbox{otherwise.}\end{array} \right.$
\end{center}
is a de Bruijn successor.

\end{theorem}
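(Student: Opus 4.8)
The plan is to derive Theorem~\ref{thm:RL} as a direct corollary of the more general Theorem~\ref{thm:framework}. The only thing that needs checking is that the hypothesis of Theorem~\ref{thm:framework} is satisfied under the assumptions of Theorem~\ref{thm:RL}: namely, for each $i$ with $1 < i \leq t$, the conjugate $\hat\alpha_i$ of the chosen representative $\alpha_i$ must belong to some earlier cycle $\bR_j$ with $j < i$. Once this is established, the formula for $g$ in Theorem~\ref{thm:RL} is literally the formula from Theorem~\ref{thm:framework}, so the conclusion that $g$ is a de Bruijn successor follows immediately.

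First I would recall that the cycles $\bR_1, \ldots, \bR_t$ are listed in \emph{non-increasing} order of run-length, and that by Lemma~\ref{lem:rle} every string in a given cycle $\bR_i$ shares the same run-length, so it makes sense to speak of the run-length of $\bR_i$. Next, fix $i > 1$ and consider the chosen representative $\alpha_i = a_1 a_2 \cdots a_n$, which by hypothesis satisfies $a_1 = a_2$. Its conjugate is $\hat\alpha_i = \overline{a_1} a_2 \cdots a_n$. The key observation is that complementing the first bit of a string whose first two bits are equal \emph{splits} the leading run: since $a_1 = a_2$, the string $\alpha_i$ begins with a run of length at least $2$, whereas $\hat\alpha_i$ begins with a singleton run $\overline{a_1}$ followed by the run of $a_1 = a_2$'s, so the run-length of $\hat\alpha_i$ is exactly one more than that of $\alpha_i$. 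I should note that $n \geq 2$ here (the case $n=1$ being trivial), so $\alpha_i$ genuinely has a second bit and this argument applies; and when $i > 1$ the cycle $\bR_i$ cannot be the unique cycle of maximal run-length $n$ consisting of $alt(n)$-type strings — actually more carefully, $\hat\alpha_i$ having strictly larger run-length means it lies in a cycle strictly earlier in the ordering.

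Let $\bR_j$ be the cycle containing $\hat\alpha_i$. Since $\hat\alpha_i$ has run-length one greater than $\alpha_i$'s, and since the cycles are ordered by non-increasing run-length, every cycle with run-length equal to that of $\hat\alpha_i$ appears before every cycle with the (strictly smaller) run-length of $\bR_i$; in particular $j < i$. This is exactly the hypothesis needed to invoke Theorem~\ref{thm:framework} with these representatives, and the theorem then yields that the stated $g(\omega)$ is a de Bruijn successor. The main (and essentially only) obstacle is the run-length bookkeeping in the previous paragraph: one must be careful that $a_1 = a_2$ really does force the run-length to strictly increase under conjugation — this relies on the fact that conjugation only touches the first bit, so all runs except possibly the first are unchanged, and the first run of length $\geq 2$ becomes a run of length $1$ followed by a run of length $\geq 1$, a net increase of exactly one in the run-count. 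A secondary point worth a sentence is to confirm that such a representative $\alpha_i$ with $a_1 = a_2$ always exists in $\bR_i$ for $i > 1$: the cycles with $i>1$ all have run-length at most $n-1 < n$, hence are not the all-alternating cycle, so they contain a string with two consecutive equal bits, and by cycling (the $\PRR$ acts transitively on $\bR_i$) we may rotate it so that those two equal bits are in positions $1$ and $2$.
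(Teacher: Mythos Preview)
Your proposal is correct and follows essentially the same approach as the paper: the paper also derives Theorem~\ref{thm:RL} directly from Theorem~\ref{thm:framework} by observing that when $a_1 = a_2$, the conjugate $\hat\alpha_i$ has run-length exactly one more than $\alpha_i$ and hence lies in some $\bR_j$ with $j < i$ under the non-increasing run-length ordering. Your additional remarks on the existence of such a representative and the $n \geq 2$ caveat are extra care not present in the paper's terser treatment, but they do not alter the argument.
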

\vspace{-0.2in}
\end{result}

\noindent


Now consider the cycles $\bR_1, \bR_2, \ldots, \bR_t$  to be ordered in \emph{non-decreasing order} based on the run length of each cycle.   This means the first two cycles $\bR_1$ and $\bR_2$ will be the cycles containing $0^n$ and $1^n$.   But given this ordering, there is no way to satisfy Theorem~\ref{thm:framework} since the conjugate of any representative for $\bR_2$ will not be found in $\bR_1$.  However, if we let $\bR_t = \{1^n \}$, and order the remaining cycles in \emph{non-decreasing order} based on the run length of  each cycle, then we obtain a result similar to Theorem~\ref{thm:RL}.  
Observe, that this relates to the special case described for the Prefer-opposite greedy construction illustrated in Figure~\ref{fig:trees}.
Using this ordering, let $\alpha_i = a_1a_2\cdots a_n$  be any string in $\bR_i$, for $1<i<t$,  such that $a_1 \neq a_2$. Such a string exists since $\bR_1 = \{0^n\}$ and $\bR_{t} = \{1^n\}$.  This means $\hat \alpha_i$ has run length that is one \emph{less} than the run length of $\alpha_i$ and thus $\hat \alpha_i$ belongs to some $\bR_j$ where $j < i$.  For the special case when $i=t$, the conjugate of $1^n$ clearly is found in some $\bR_j$ where $j<t$.    Thus,  Theorem~\ref{thm:framework} can be applied again to describe another generic de Bruijn successor based on the $\PRR$.

\begin{result}
\vspace{-0.2in}
\begin{theorem}  \label{thm:RL2}
Let $\bR_t = \{1^n\}$ and let the remaining cycles $\bR_1, \bR_2, \ldots, \bR_{t-1}$ be listed in non-decreasing order with respect to the run length of each cycle.  Let $\alpha_i = a_1a_2\cdots a_n$ denote a representative in $\bR_i$ such that $a_1\neq a_2$, for each $1 < i < t$.   Let $\omega = w_1w_2\cdots w_n$ and let $f(\omega) =  w_1 \oplus w_2 \oplus w_n$.  Then the function:
\begin{center}
$g_2(\omega) = \left\{ \begin{array}{ll}
        \overline{f(\omega)} &\ \  \mbox{if $\omega$ or $\hat \omega$ is in $\{\alpha_2, \alpha_3, \ldots ,\alpha_t\}$;}\\
        f(\omega) &\ \  \mbox{otherwise.}\end{array} \right.$
\end{center}
is a de Bruijn successor.

\end{theorem}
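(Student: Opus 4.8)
The plan is to reduce the statement to Theorem~\ref{thm:framework}: it suffices to check that for every $i$ with $1<i\le t$ the conjugate $\hat\alpha_i$ lies in some cycle $\bR_j$ with $j<i$. The whole argument rests on a single elementary observation about how conjugation interacts with run-length, exactly the mirror image of the one underlying Theorem~\ref{thm:RL}. If $\omega=w_1w_2\cdots w_n$ has $w_1\neq w_2$, then in $\hat\omega=\overline{w_1}w_2\cdots w_n$ the leading bit satisfies $\overline{w_1}=w_2$, so the initial length-one run of $\omega$ merges into the second run and the run-length decreases by exactly one; symmetrically, if $w_1=w_2$ it increases by exactly one. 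Combining this with Lemma~\ref{lem:rle} (which makes ``the run-length of a cycle $\bR_i$'' well defined), we obtain: for $1<i<t$ the cycle $\bR_j$ containing $\hat\alpha_i$ has run-length one less than that of $\bR_i$, and $\hat\alpha_t=\hat{1^n}=01^{n-1}$ has run-length $2$.

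Next I would feed this into the chosen ordering. Among $\bR_1,\ldots,\bR_{t-1}$ the run-lengths are non-decreasing, and since $\{1^n\}$ has been moved to the tail as $\bR_t$, the cycle $\bR_1=\{0^n\}$ is the unique member of $\bR_1,\ldots,\bR_{t-1}$ of run-length $1$; hence $\bR_2$ is the unique cycle of run-length $2$. There are three cases. First, if $1<i<t$ and $\bR_i$ has run-length at least $3$, then $\hat\alpha_i$ has run-length between $2$ and one less than that of $\bR_i$, so its cycle $\bR_j$ is neither $\bR_1$ nor $\bR_t$ (both of run-length $1$) and has run-length strictly smaller than that of $\bR_i$; by the non-decreasing ordering, $j<i$. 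Second, if $i=t$ then $\hat\alpha_t=01^{n-1}$ has run-length $2$, so it lies in $\bR_2$, and $2<t$. Third is the case $i=2$, the run-length-$2$ cycle, which is the delicate one.

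The subtlety, and essentially the only real obstacle, is that the freedom the statement allows in choosing the representative of $\bR_2$ must be restricted. A short direct computation shows that $\PRR$ cycles $0^{n-1}1\to 0^{n-2}1^2\to\cdots\to 01^{n-1}\to 1^{n-1}0\to\cdots\to 10^{n-1}\to 0^{n-1}1$, so $\bR_2$ consists of exactly the $2(n-1)$ strings of run-length $2$, and the only two of these with $a_1\neq a_2$ are $01^{n-1}$ and $10^{n-1}$. For $\alpha_2=01^{n-1}$ we have $\widehat{01^{n-1}}=1^n\in\bR_t$, which violates the hypothesis of Theorem~\ref{thm:framework}; so the representative of $\bR_2$ must be $\alpha_2=10^{n-1}$, whose conjugate is $0^n\in\bR_1$, giving $j=1<2$. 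Conceptually, moving $\{1^n\}$ to the end of the list turns $\bR_t$ from a source of run-length into a sink, which is precisely the nuance behind the special case of the greedy Prefer-opposite construction; cf.\ Figure~\ref{fig:trees}. Since a string with $a_1\neq a_2$ exists in every $\bR_i$ with $1<i<t$ (these cycles being nontrivial), once $\alpha_2$ is pinned down this way all representatives $\alpha_2,\ldots,\alpha_t$ satisfy the hypothesis of Theorem~\ref{thm:framework}, and therefore $g_2$ is a de Bruijn successor.
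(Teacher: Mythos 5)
Your proof takes essentially the same route as the paper: reduce to Theorem~\ref{thm:framework} via the observation that conjugation changes the run-length by exactly one, so that for $1<i<t$ the cycle containing $\hat\alpha_i$ precedes $\bR_i$, while $\hat\alpha_t=01^{n-1}$ lands in the run-length-$2$ cycle. The one place you go beyond the paper is the case $i=2$, and your caution there is justified: the paper's argument jumps from ``$\hat\alpha_i$ has run-length one less'' to ``$\hat\alpha_i$ belongs to some $\bR_j$ with $j<i$,'' which fails exactly when $\hat\alpha_i=1^n\in\bR_t$, i.e.\ when $\alpha_2=01^{n-1}$. As you observe, the statement as written permits that choice, and under it $g_2$ fixes $0^n$ (neither $0^n$ nor its conjugate $10^{n-1}$ is then a representative), so $g_2$ cannot be a de Bruijn successor; the theorem implicitly requires $\alpha_2=10^{n-1}$, which is precisely the restriction you impose. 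This is worth flagging because it propagates: the RL2-rep of the run-length-$2$ cycle under the stated tie-break (``beginning with 0'') is the disallowed $01^{n-1}$, so the application to $RL2(\omega)$ in Theorem~\ref{thm:successorO} needs the same repair, whereas the LC2- and opp-reps select $10^{n-1}$ and are unaffected, and the appendix code in effect treats both $01^{n-1}$ and $10^{n-1}$ as representatives, which is why it runs correctly. So: same decomposition and same key lemma as the paper, but your handling of $\bR_2$ closes a genuine gap in the paper's own justification.
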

\vspace{-0.2in}
\end{result}
%
\noindent


When Theorem~\ref{thm:RL} and Theorem~\ref{thm:RL2}  are applied na\"{i}vely, the resulting de Bruijn successors are not efficient since storing the set $\{\alpha_2, \alpha_3, \ldots ,\alpha_t\}$ requires exponential space.  However, if a membership tester for the set can be defined efficiently, then there is no need for the set to be stored.  Such sets of representatives are presented in the next two sections.

\section{A de Bruijn successor for $\same{n}$}  \label{sec:same}


In this section we define a de Bruijn successor for $\same{n}$.    Recall the partition $\bR_1, \bR_2, \ldots , \bR_t$ of $\bB(n)$ induced by the $\PRR$.  In addition to the RL-rep, we define a new representative for each cycle, called the LC-rep,  where the LC stands for Lexicographic Compositions which are further discussed in Section~\ref{sec:LC}.  Then,  considering these two representatives along with a small set of special strings, we define a third representative, called the same-rep.  For each representative, we can apply Theorem~\ref{thm:RL}  to produce a new de Bruijn successor.    The definitions for these three representatives are as follows:
\begin{itemize}
\item \defo{RL-rep}: The string with the lexicographically largest RLE; if there are two such strings, it is the one beginning with 1.
\item \defo{LC-rep}:  
The RL-rep for cycles with run length 1 and $n$.
For all other classes, it is the string $\omega$ with RLE $21^{i-1}r_{i+1}\cdots r_{\ell}$  where $i=\ell$ or $r_{i+1} \neq 1$ such that
$\PRR^{i+1}(\omega)$ is  the RL-rep.  
\item \defo{same-rep}: 
 $\left\{ \begin{array}{ll}
        \mbox{RL-rep} &\ \  \mbox{if the RL-rep is \emph{same-special}}\\
        \mbox{LC-rep}\  &\ \  \mbox{otherwise.}\end{array} \right.$        
\end{itemize}      
We say an RL-rep is \defo{same-special} if it belongs to the set $\Special(n)$ defined as follows:

\begin{quote}
 $\Special(n)$ is the set of length $n$ binary strings that begin and end with 0 and have RLE of the form $(21^{2x})^y1^z$, where $x \geq 0$, $y \geq 2$, and $z \geq 2$.
 \end{quote}
The RL-reps have already been illustrated in Section~\ref{sec:feedback}.  There are relatively few strings in $\Special(n)$ and they all have odd run length since they begin and end with 0; they belong to PCR-related cycles.  The need for identifying same-special strings is revealed in the proof for the upcoming Proposition~\ref{fact:tough}.

\begin{exam}  \small  \label{exam:same-rep}  
The RLE of the strings in $\Special(n)$ for $n=10,11,12,13$.
\medskip

\noindent
$n=10$:  \ \ 2221111  \\
$n=11$: \ \  2222111,  221111111, 211211111 \\
$n=12$: \ \ 2222211, 222111111 \\
$n=13$: \ \ 222211111, 22111111111, 21121111111 
\end{exam}

To illustrate an LC-rep, consider the string $\omega = \underline{11010}1111011$ with RLE $2111412$.  The string $\omega$ is an LC-rep since $PRR^{5}(\omega) = 111101110101$ which
is an RL-rep with RLE $413\underline{1111}$.  Note that another way to define the LC-rep is as follows: If the RLE of an RL-rep ends with $i$ consecutive 1s, then the corresponding LC-rep is the string $\omega$ such that $PRR^{i+1}(\omega)$ is the RL-rep.


Let $\RLrep{n}$,  $\LCrep{n}$, and $\Samerep{n}$  denote the sets  of all  length $n$ RL-reps, LC-reps, and same-reps, respectively, {\bf not including} the representative with run length $n$.   Consider the following feedback functions where $\omega = w_1w_2\cdots w_n$ and $f(\omega) =  w_1 \oplus w_2 \oplus w_n$:

\begin{center}
$RL(\omega) = \left\{ \begin{array}{ll}
            \overline{f(\omega)} &\ \  \mbox{if $\omega$ or $\hat \omega$  is in $\RLrep{n}$;}\\
        f(\omega) &\ \  \mbox{otherwise,}\end{array} \right.$

\bigskip

$LC(\omega) = \left\{ \begin{array}{ll}
            \overline{f(\omega)} &\ \  \mbox{if $\omega$ or $\hat \omega$  is in $\LCrep{n}$;}\\
        f(\omega) &\ \  \mbox{otherwise,}\end{array} \right.$
        
        \bigskip

$S(\omega) = \left\{ \begin{array}{ll}
            \overline{f(\omega)} &\ \  \mbox{if $\omega$ or $\hat \omega$  is in $\Samerep{n}$;}\\
       	    f(\omega) &\ \  \mbox{otherwise.}\end{array} \right.$

\end{center}

\begin{theorem} \label{thm:successorS}
The feedback functions $RL(\omega)$, $LC(\omega)$ and $S(\omega)$ are de Bruijn successors.
\end{theorem}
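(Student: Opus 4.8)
The plan is to verify, for each of the three representative choices, that the hypotheses of Theorem~\ref{thm:RL} are met; that theorem then immediately delivers the claimed de Bruijn successor. Recall that Theorem~\ref{thm:RL} requires the cycles $\bR_1,\ldots,\bR_t$ to be listed in non-increasing order of run-length and requires each representative $\alpha_i$ (for $i>1$) to satisfy $a_1=a_2$; equivalently, the conjugate $\hat\alpha_i$ has run-length strictly larger than $\alpha_i$ and therefore lies in some earlier cycle $\bR_j$ with $j<i$. So the entire task reduces to the single structural fact: \emph{each representative of a non-trivial cycle begins with two equal bits}. For the RL-rep this is already established in Section~\ref{sec:feedback}: every RL-rep of a cycle with run-length $>1$ begins with $00$ or $11$ (a string starting with $01$ or $10$ can be rotated, or complemented-and-rotated in the CCR case, to increase its RLE). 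Hence $RL(\omega)$ is a de Bruijn successor.

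For the LC-rep I would argue as follows. Given a non-trivial cycle $\bR_i$, its RL-rep $\beta$ has RLE ending in exactly $i\ge 1$ ones, say $r_1\cdots r_{\ell-i}1^i$ with $r_{\ell-i}\ge 2$ (if the RLE were all ones the cycle would be $\bR_1$ or one of the trivial cycles, excluded). The LC-rep $\omega$ is the unique string in $\bR_i$ with $\PRR^{i+1}(\omega)=\beta$; since $\PRR$ preserves run-length (Lemma~\ref{lem:rle}) and shifts the RLE cyclically in a controlled way, the RLE of $\omega$ is obtained by moving a block, and one checks it has the form $21^{i-1}r_{i+1}\cdots r_\ell$ — in particular it starts with a run of length $2$, so $w_1=w_2$. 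I should verify existence and uniqueness of this preimage: $\PRR$ is a bijection on $\bB(n)$ mapping $\bR_i$ to itself, so the $(i+1)$-fold preimage of $\beta$ within $\bR_i$ exists and is unique; and tracking the effect of one $\PRR$ step on the RLE of a string that begins and ends with prescribed bits confirms the RLE shape. Thus every LC-rep of a non-trivial cycle begins with $w_1=w_2$, and Theorem~\ref{thm:RL} applies to give $LC(\omega)$.

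For the same-rep, the representative of each non-trivial cycle is \emph{either} its RL-rep \emph{or} its LC-rep, and we have just shown both of these begin with two equal bits; hence so does the same-rep, and Theorem~\ref{thm:RL} applies once more to yield $S(\omega)$. The one point deserving care is that the three functions $RL$, $LC$, $S$ as \emph{written} test membership of $\omega$ \emph{or} $\hat\omega$ in the respective representative set, which is exactly the form of the successor in Theorem~\ref{thm:RL}; so no separate argument about well-definedness of the set is needed — we only need each set to contain exactly one string per non-trivial cycle (immediate from the definitions, given the existence/uniqueness checks above) and each such string to start with equal first two bits. I expect the main obstacle to be the bookkeeping in the LC-rep case: carefully confirming that the $(i+1)$-st $\PRR$-preimage of the RL-rep has RLE precisely $21^{i-1}r_{i+1}\cdots r_\ell$ with $r_{i+1}\neq 1$, i.e. that the definition is self-consistent and singles out a well-defined string; everything else is a direct appeal to Theorem~\ref{thm:RL} together with the already-proven facts about RL-reps.
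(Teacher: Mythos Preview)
Your proposal is correct and follows essentially the same route as the paper: order the PRR cycles by non-increasing run-length, observe that for every cycle other than $\bR_1$ the RL-rep, LC-rep, and hence same-rep begin with two equal bits, and invoke Theorem~\ref{thm:RL}. The paper's proof is terser---it simply says the LC-rep begins with $00$ or $11$ ``by definition'' (since the definition stipulates its RLE starts with $2$)---whereas you additionally sketch why such a string exists and is unique in each cycle; that extra bookkeeping is sound but not required for the theorem itself.
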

\begin{proof}
Let the partition $\bR_1, \bR_2, \ldots, \bR_t$ of $\bB(n)$ induced by the PRR be  listed in non-increasing order with respect to the run length of each cycle.  
Observe that $\bR_1$ is the cycle whose strings have run length $n$, and thus any representative of $\bR_1$ will have run length $n$.  
By definition, this representative is not in the sets $\RLrep{n}$,  $\LCrep{n}$, and $\Samerep{n}$.   Now consider $\bR_i$ for $i > 1$.
Clearly the RL-rep for  $\bR_i$ will begin with 00 or 11 and by definition,  the LC-rep for $\bR_i$  also begins with 00 or 11.      
Together these results imply that each same-rep for $\bR_i$  will also begin with 00 or 11.
Thus, it follows directly from Theorem~\ref{thm:RL} that $RL(\omega)$, $LC(\omega)$ and $S(\omega)$ are de Bruijn successors.
\end{proof}

Recall that $alt(n)$ denotes the alternating sequence of 0s and 1s of length $n$ that ends with 0. 
Let $\mathcal{X}_n = x_1x_2 \cdots x_{2^n}$  be the de Bruijn sequence returned by {\sc DB}($S, 0 alt(n{-}1)$); it will have suffix equal to the seed $0 alt(n{-}1)$. 
Let  $\mathcal{X}'_n$ denote the linearized de Bruijn sequence $alt(n{-}1) \mathcal{X}_n$.
Our goal is to show that  $\mathcal{X}_n = \same{n}$.   Our proof applies the following two propositions.

\begin{proposition} \label{fact:prefixS}
$\mathcal{X}_n$ has prefix $1^n$.
\end{proposition}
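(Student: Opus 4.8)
The plan is to trace the first $n$ applications of the successor $S$ to the seed $\sigma := 0\,alt(n-1)$ and show each outputs the bit $1$; since $\mathcal{X}_n$ begins with these $n$ bits, this yields the prefix $1^n$. Put $\omega^{(0)}:=\sigma$ and, assuming the first $k$ outputs are all $1$, let $\omega^{(k)}$ be the length-$n$ suffix after $k$ steps; induction gives $\omega^{(k)}=\sigma_{k+1}\sigma_{k+2}\cdots\sigma_n\,1^k$. Since $alt(n-1)$ alternates and ends in $0$, the seed is $\sigma=alt(n)$ when $n$ is odd and $\sigma=00\,1010\cdots10$ when $n$ is even, and in both cases $\sigma_2\sigma_3\cdots\sigma_n$ alternates. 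Hence for $1\le k\le n-1$ the string $\omega^{(k)}$ is a genuinely alternating block followed by $1^k$, so its first two bits differ while its last bit is $1$, and therefore $f(\omega^{(k)})=w_1\oplus w_2\oplus w_n=1\oplus 1=0$ for every such $k$.

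Now $S$ flips $f$ exactly when $\omega$ or $\hat\omega$ is a same-rep, and every member of $\Samerep{n}$ begins with $00$ or $11$ (established in the proof of Theorem~\ref{thm:successorS}), whereas $\omega^{(k)}$ begins with $01$ or $10$ for each $k\ge 1$; so $\omega^{(k)}\notin\Samerep{n}$ and step $k$ outputs $1$ iff $\hat\omega^{(k)}\in\Samerep{n}$. The base step $k=0$ is handled by parity. If $n$ is odd then $\omega^{(0)}=alt(n)$ has run-length $n$, so $f(\omega^{(0)})=1$, $\omega^{(0)}\notin\Samerep{n}$, and it suffices to note that $\hat\omega^{(0)}$ (encoding $2\,1^{n-2}$) is the RL-rep of its run-length-$(n-1)$ cycle but, since that cycle has size $2(n-1)$, is not its LC-rep, hence not a same-rep, so $S(\omega^{(0)})=1$. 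If $n$ is even then $f(\omega^{(0)})=0$ and $\hat\omega^{(0)}=alt(n)$ is the excluded run-length-$n$ representative, so instead one shows $\omega^{(0)}\in\Samerep{n}$: the string $\sigma=00\,1010\cdots10$ has encoding $2\,1^{n-2}$, its length-$(n-1)$ prefix is aperiodic and, via Proposition~\ref{fact:RL-test} and the characterization of PCR RL-reps, is the RL-rep of its PCR cycle, so $\sigma$ is the RL-rep of its $\PRR$-cycle; that cycle has size $n-1$, so $\PRR^{-(n-1)}$ is the identity on it and the LC-rep coincides with $\sigma$; as this RL-rep is not special (its encoding contains a single $2$, whereas every encoding of a member of $\Special(n)$ has at least two), $\sigma$ is the same-rep and $S(\omega^{(0)})=1$.

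It remains to prove $\hat\omega^{(k)}\in\Samerep{n}$ for $1\le k\le n-1$. The endpoint $k=n-1$ is immediate: $\hat\omega^{(n-1)}=1^n$ is the same-rep of $\{1^n\}$, which has run-length $1\neq n$, so it lies in $\Samerep{n}$. For $1\le k\le n-2$ a short computation gives that $\hat\omega^{(k)}$ has run-length encoding $2\,1^{n-k-2}\,k$, placing it in a cycle $\bR$ of run-length $n-k<n$; I would identify $\hat\omega^{(k)}$ as the LC-rep of $\bR$. When $k\ge 2$, this encoding has precisely the LC-rep shape $2\,1^{i-1}r_{i+1}\cdots r_\ell$ with $i=n-k-1$ and $r_{i+1}=k\ge 2$, and $\PRR^{i+1}(\hat\omega^{(k)})=\PRR^{n-k}(\hat\omega^{(k)})$ — which shifts off the leading $2$-run together with the following $n-k-2$ singletons — equals $1^{k+1}0101\cdots$, a string with encoding $(k+1)\,1^{n-k-1}$; this last string is the RL-rep of $\bR$, and its encoding begins with $k+1\ge 3$ while every encoding of a member of $\Special(n)$ begins with $2$, so it is not special and the same-rep of $\bR$ is its LC-rep $\hat\omega^{(k)}$. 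The degenerate case $k=1$ (encoding $2\,1^{n-2}$) is like $\hat\omega^{(0)}$ for odd $n$: the RL-rep and LC-rep of $\bR$ share the encoding $2\,1^{n-2}$ and differ only in their first bit, and $\hat\omega^{(1)}$ — beginning with $0$ — is the LC-rep. Verifying that $\PRR^{n-k}(\hat\omega^{(k)})$ is genuinely the RL-rep of $\bR$ divides into the PCR-related case ($\hat\omega^{(k)}$ starts and ends with $1$) and the CCR-related case ($\hat\omega^{(k)}$ starts with $0$, ends with $1$); in each I would delete the last bit to land in a PCR, resp.\ CCR, cycle, observe that this length-$(n-1)$ string has a cyclic run of $1$s of length $k+1$, and invoke Proposition~\ref{fact:RL-test} with the PCR/CCR run-length-representative characterizations to see that $\PRR^{n-k}$ rotates to the start of that run, landing on the RL-rep.

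The step I expect to be hardest is exactly this last identification of $\PRR^{n-k}(\hat\omega^{(k)})$ as the RL-rep of $\bR$, since it is where the PCR/CCR dichotomy, the precise shape of $\Special(n)$, and the degenerate endpoints $k=1,\,n-2$ must all be reconciled; the remaining ingredients — the inductive formula for $\omega^{(k)}$, the evaluation of $f$, and the encoding arithmetic — are routine.
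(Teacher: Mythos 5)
The paper itself gives no detailed argument for this proposition—it simply asserts that it "follows from $n$ applications of the successor $S$ to the seed $0\,alt(n{-}1)$"—so your proposal is exactly that verification carried out in full, and its overall structure (the inductive formula $\omega^{(k)}=\sigma_{k+1}\cdots\sigma_n1^k$, the evaluation $f(\omega^{(k)})=0$ for $k\ge 1$, and the identification of $\hat\omega^{(k)}$ as the same-rep of its run-length-$(n{-}k)$ cycle so that $S$ flips and outputs $1$) is sound; I checked the key claim that $\PRR^{n-k}(\hat\omega^{(k)})$ has RLE $(k{+}1)1^{n-k-1}$ and is the RL-rep on several instances and your sketched PCR/CCR argument via Proposition~\ref{fact:RL-test} does go through.

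Two small repairs are needed. First, in the odd-$n$ base case, "is the RL-rep but not the LC-rep, hence not a same-rep" is incomplete: since $\hat\omega^{(0)}$ \emph{is} the RL-rep, you must also observe that this RL-rep is not special (immediate, since it begins with $1$ and its RLE $2\,1^{n-2}$ contains a single $2$), for otherwise the same-rep would be the RL-rep itself. Second, your treatment of $k=1$ only matches odd $n$: when $n$ is even, $\hat\omega^{(1)}=1101\cdots 01$ begins with $1$, lies in a PCR-related cycle of size $n-1$, and there the RL-rep and LC-rep coincide with $\hat\omega^{(1)}$ itself (e.g.\ $n=4$: $\hat\omega^{(1)}=1101$), so the claim "$\hat\omega^{(1)}$—beginning with $0$—is the LC-rep" is false in that parity, although the conclusion $\hat\omega^{(1)}\in\Samerep{n}$ still holds (it is the same-rep, again because an RL-rep beginning with $1$ cannot be special). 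With these parity cases stated and the deferred RL-rep identification written out, the argument is complete and coincides with the paper's intended computation.
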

\begin{proof}
The result follows from $n$ applications of the successor $S$ to the seed $0 alt(n{-}1)$.
\end{proof}






\begin{proposition} \label{fact:tough}
If $\beta$ is a string in $\bB(n)$ such that the run length of $\beta$ is one more than the run length of $\hat \beta$ and neither $\beta$ nor $\hat \beta$ are same-reps,  then $\hat \beta$ appears before $\beta$ in $\mathcal{X}'_n$.
\end{proposition}

\noindent A proof of this proposition is given later in Section~\ref{sec:proof1}.

\begin{result}
\vspace{-0.2in}
\begin{theorem} \label{thm:main}
The de Bruijn sequences $\same{n}$ and $\mathcal{X}_n$ are the same. 
\end{theorem}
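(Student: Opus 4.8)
The plan is to show that the greedy Prefer-same construction and the successor-rule construction $\textsc{DB}(S, 0\,alt(n{-}1))$ produce the same de Bruijn sequence by proving that the greedy rule and the successor rule $S$ agree at every string encountered. Recall from Section~\ref{sec:euler} that the greedy Prefer-same construction is an instance of Fleury's Euler cycle algorithm: it corresponds to the preference table ``same bit first, opposite bit second'' together with the spanning in-tree rooted at $alt(n{-}1)$ whose edges are the ``opposite'' choices forced at each non-root vertex. So $\same{n}$ is the unique de Bruijn sequence in which, at each length-$n$ window $\omega$, the \emph{same}-as-last bit is appended unless doing so would revisit a vertex, equivalently unless $\omega$ is the last window (in the de Bruijn ordering) lying on a particular cycle of the pure cycling register $G(n{-}1)$-structure. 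The plan is to reconcile this with the cycle-joining description of $\mathcal{X}_n$ coming from Theorem~\ref{thm:RL} applied to the same-reps.

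First I would unwind what ``$S$ appends the opposite of $f(\omega)$'' means in terms of runs. Since $f(\omega)=w_1\oplus w_2\oplus w_n$, appending $f(\omega)$ keeps the run-length of the window unchanged (Lemma~\ref{lem:rle}), while appending $\overline{f(\omega)}$ changes it by one. So $S$ follows the $\PRR$ cycle structure except at the same-reps $\alpha_i$ (and their conjugates), where it jumps between cycles of consecutive run-length. The greedy rule, on the other hand, appends the same bit as $w_n$ unless forced; appending the same bit as $w_n$ is exactly appending $f(\omega)$ when $w_1=w_2$ and appending $\overline{f(\omega)}$ when $w_1\neq w_2$. Thus the two rules can only possibly disagree at windows $\omega$ with $w_1\neq w_2$, and the content of the theorem is precisely that the ``forced'' windows of the greedy rule coincide with the same-reps (and conjugates of same-reps beginning $w_1\neq w_2$). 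I would set up the equivalence: it suffices to show that for every window $\omega=w_1w_2\cdots w_n$, the greedy rule appends $\overline{w_n}$ if and only if $S$ appends a bit different from ``same as $w_n$'', and that these coincide.

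The main work, and the key technical input, is Proposition~\ref{fact:tough}: if $\beta$ has run-length one more than $\hat\beta$ and neither is a same-rep, then $\hat\beta$ precedes $\beta$ in $\mathcal{X}'_n$. I would use this together with Observation~\ref{obs:tree} to argue that $\mathcal{X}_n$ is ``prefer-same-like'': whenever the greedy construction would append the same bit as $w_n$ without creating a conflict, $S$ does the same; and whenever the greedy construction is forced to switch, $S$ switches too — because the forced-switch windows of the greedy rule are exactly those $\omega$ that are last on their Hamilton-cycle-fragment, which in the cycle-joining picture are exactly the representatives $\alpha_i$ and the conjugate-images of lower representatives. Concretely, I would argue by induction along the sequence: using Proposition~\ref{fact:prefixS} for the base case (both sequences begin $1^n$ after the seed), and at each step showing the next bit agrees. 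When the window $\omega$ (or $\hat\omega$) is a same-rep, $S$ by definition switches, and I must check the greedy rule is forced to switch there; when neither $\omega$ nor $\hat\omega$ is a same-rep, $S$ does not switch at $\omega$ unless $w_1\neq w_2$, in which case Proposition~\ref{fact:tough} guarantees the ``same'' continuation does not create a duplicate (the conjugate window, which the ``same'' move would eventually route through, has already appeared, so the cycle has been entered legitimately), matching greedy.

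The hard part will be the bookkeeping that translates ``the greedy rule is forced to switch at $\omega$'' into ``$\omega$ or $\hat\omega$ is a same-rep,'' i.e.\ the exact characterization of which windows are last-on-their-fragment. This is where the definition of $\Special(n)$ and the split same-rep $=$ RL-rep-if-special-else-LC-rep must be shown to be precisely right: the RL-reps alone would join the $\PRR$ cycles into \emph{some} de Bruijn sequence (that is $RL(\omega)$), but not the prefer-same one; shifting most representatives to their LC-reps, while keeping the special ones as RL-reps, is exactly the adjustment that makes the resulting Hamilton cycle in $G(n)$ match the one produced by the greedy preference table. I would lean on the RLE descriptions in Propositions~\ref{fact:RL-test} and~\ref{fact:RL2-test} and on Proposition~\ref{fact:rle} (that $\same{n}$ is the de Bruijn sequence starting with $1$ of lexicographically largest RLE) as an independent check: once agreement of the two constructions on a prefix long enough to force the RLE-maximality is established, Proposition~\ref{fact:rle} plus the fact that $\mathcal{X}_n$ is a de Bruijn sequence starting $1^n$ (Proposition~\ref{fact:prefixS}) would pin down $\mathcal{X}_n=\same{n}$. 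Most likely the cleanest route is: (i) $\mathcal{X}_n$ is a de Bruijn sequence starting with $1$ (immediate, since $S$ is a de Bruijn successor and by Proposition~\ref{fact:prefixS}); (ii) show $\mathcal{X}_n$ has the lexicographically largest RLE among such, by showing that at each decision point $S$ makes the run-length-maximizing legal choice, using Proposition~\ref{fact:tough} to certify legality of the ``prefer same / extend the run'' choice; (iii) conclude by Proposition~\ref{fact:rle}.
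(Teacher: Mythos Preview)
Your final route (i)--(iii) is precisely the paper's argument, and you have correctly identified every ingredient: Proposition~\ref{fact:prefixS} for the common prefix $1^n$, Proposition~\ref{fact:rle} for the RLE-maximality characterization of $\same{n}$, Proposition~\ref{fact:tough} for the non-representative case, and Observation~\ref{obs:tree} for the representative case. The paper executes it as a minimal-counterexample argument: take the first position $t$ where $s_t\neq x_t$, set $\beta=x_{t-n}\cdots x_{t-1}$, use Proposition~\ref{fact:rle} to force $x_t\neq x_{t-1}$, and then split on whether $x_t=f(\beta)$ (so neither $\beta$ nor $\hat\beta$ is a same-rep and Proposition~\ref{fact:tough} applies) or $x_t\neq f(\beta)$ (so $\beta$ is a same-rep and Observation~\ref{obs:tree}(1) applies). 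In either case $\hat\beta$ precedes $\beta$ in $\mathcal{X}'_n$, hence one of the two extensions $x_{t-n+1}\cdots x_{t-1}0$, $x_{t-n+1}\cdots x_{t-1}1$ already occurs in the shared prefix, contradicting that both $\same{n}$ and $\mathcal{X}_n$ are de Bruijn sequences.

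Your earlier paragraphs trying to match the greedy rule to $S$ window-by-window are an unnecessary detour and contain a slip: the claim that ``the two rules can only possibly disagree at windows $\omega$ with $w_1\neq w_2$'' is not correct as stated. When $w_1=w_2$ and $\omega$ happens to be a same-rep, $S$ outputs $\overline{f(\omega)}=\overline{w_n}$, the \emph{opposite} bit, and you have not yet established that greedy is forced to switch there; so disagreement is a priori possible at such windows too. The paper avoids this entirely by never comparing to the greedy procedure directly and working purely through the RLE-maximality characterization, which is exactly your route (i)--(iii). Drop the greedy-matching discussion and carry out (ii) as the minimal-counterexample argument above.
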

\vspace{-0.2in}

\end{result}
\noindent
\begin{proof}
Let $\same{n}  = s_1s_2\cdots s_{2^n}$, let $\mathcal{X}_n = x_1x_2\cdots x_{2^n}$.   Recall that $\mathcal{X}_n$ ends with $alt(n{-}1)$.  
From Proposition~\ref{fact:Spre} and Proposition~\ref{fact:prefixS}, $x_1x_2\cdots x_n = s_1s_2\cdots s_n = 1^n $ and  moreover $\same{n}$ and $\mathcal{X}_n$ share the same length $n{-}1$ suffix.
Suppose there exists some smallest $t$, where $n < t \leq 2^n$, such that $s_t \neq x_t$.   Let $\beta = x_{t-n}\cdots x_{t-1}$ denote the length $n$ substring of $\mathcal{X}_n$ ending at position $t{-}1$.
Then $x_t \neq x_{t-1}$, because otherwise the RLE of $\mathcal{X}_n$ is lexicographically larger than that of $\same{n}$, contradicting Proposition~\ref{fact:rle}.
We claim that $\hat \beta$ comes before $\beta$ in $\mathcal{X}'_n$, by considering two cases, recalling $f(\omega) =  w_1 \oplus w_2 \oplus w_n$:
\begin{itemize}
\item    If $x_{t} = f(\beta)$, then by the definition of $S$, neither $\beta$ nor $\hat \beta$ are in $\Samerep{n}$.
By the definition of $f$ and since $x_t \neq x_{t-1}$,  the first two bits of $\beta$ must differ from each other.  Thus, the run length of $\beta$ is one more than the run length of $\hat \beta$.
Thus the claim holds by Proposition~\ref{fact:tough}. 
\item If $x_{t} \neq f(\beta)$, then either $\beta$ or $\hat \beta$ are in $\Samerep{n}$.   Let $\beta = b_1b_2\cdots b_n$.
Then $\PRR(\beta) = b_2\cdots b_n s_{t}$  and $\PRR(\hat \beta) = b_2\cdots b_n x_{t}$.  
Since $f(\beta) = b_n$, $b_1 = b_2$, which implies $\hat \beta$ is not in $\Samerep{n}$.
Thus $\beta$ is a same-rep and the claim thus holds by Observation~\ref{obs:tree} (item 1).
  \end{itemize}
Since $\hat \beta$ appears before $\beta$ in $\mathcal{X}'_n$ then $\hat \beta$ must be a substring of $alt(n{-}1) x_1\cdots x_{t-2}$.
Thus, either $x_{t-n+1}\cdots x_{t-1} x_{t}$ or $x_{t-n+1}\cdots x_{t-1}s_{t}$ must be in  $alt(n{-}1) x_1\cdots x_{t-1}$ which
contradicts the fact that both $\mathcal{X}_n$ and $\same{n}$ are de Bruijn sequences.  Thus, there is no $n < t \leq 2^n$ such that $s_t \neq x_t$
and hence  $\same{n} = \mathcal{X}_n$.
\end{proof}

\section{A de Bruijn successor for $\opposite{n}$}  \label{sec:opposite}

To develop an efficient de Bruijn successor for $\opposite{n}$, we follow an approach similar to that for $\same{n}$, except this time we focus on the  lexicographically smallest RLEs and RL2-reps.  Again, we consider three different representatives for the cycles $\bR_1, \bR_2, \ldots , \bR_t$ of $\bB(n)$ induced by the $\PRR$. 

\begin{itemize}
\item \defo{RL2-rep}: The string with the lexicographically smallest RLE; if there are two such strings,  it is the one beginning with 0.
\item \defo{LC2-rep}:  
The strings $0^n$ and $1^n$ for the classes $\{0^n\}$ and $\{1^n\}$ respectively.  
For all other classes, it is the string $\omega$ with RLE $r_1r_2\cdots r_{\ell}$ such that $r_1 = 1$ and 
$\PRR^{r_2}(\omega)$ is  the RL2-rep.
\item \defo{opp-rep}: 
 $\left\{ \begin{array}{ll}
        \mbox{RL2-rep} &\ \  \mbox{if the RL2-rep is \emph{opp-special}}\\
        \mbox{LC2-rep}\  &\ \  \mbox{otherwise.}\end{array} \right.$        
\end{itemize}      
We say an RL2-rep is \defo{opp-special} if it belongs to the set $\SpecialO(n)$ defined as follows:

\vspace{-0.05in}

\begin{quote}
 $\SpecialO(n)$ is the set of length $n$ binary strings that begin with 1 and have RLE of the form $1x^z y$ where $z$ is odd and $y>x$.
 \end{quote}%
\vspace{-0.05in}

\noindent
The RL2-reps have already been illustrated in Section~\ref{sec:feedback}.  There are relatively few strings in $\SpecialO(n)$ and they all have odd run length;  they belong to PCR-related cycles.  The need for identifying opp-special strings is revealed in the proof for the upcoming Proposition~\ref{fact:tough2}.

\begin{exam}   \small  \label{exam:opp-rep}  
The RLEs of the strings in $\SpecialO(n)$ for $n=10,11,12,13$:
\medskip

\noindent
$n=10$:  \ \ 111111112, 1111114, 11116, 118, 12223, 127, 136, 145  \\
$n=11$: \ \  111111113, 1111115, 11117, 119, 12224, 128, 137, 146  \\
$n=12$: \ \ 11111111112, 111111114, 1111116, 11118,11(10), 12225, 129, 138, 147, 156  \\
$n=13$: \ \ 11111111113, 111111115, 1111117, 11119,11(11), 12226, 12(10), 139, 148, 157  
\end{exam}

  Except for the cases $0^n$ and $1^n$, the LC-rep will begin with 10 and 01.  As an example, consider $\omega = 10000101001$
which has RLE $r_1r_2r_3r_4r_5r_6r_7 =  1411121$.  It is an LC-rep since 
$\PRR^{4}(\omega)$ 
is the RL2-rep $01010010000$ with RLE $1111214$.  Note the
 last value of this RLE will correspond to $r_2$.

Let $\RLLrep{n}$,  $\LCCrep{n}$, and $\OPPrep{n}$  denote the set  of all  length $n$ RL2-reps, LC2-reps, and opp-reps, respectively,  {\bf not including} the representative $0^n$.   Consider the following feedback functions where $\omega = w_1w_2\cdots w_n$ and $f(\omega) =  w_1 \oplus w_2 \oplus w_n$:

\begin{center}
$RL2(\omega) = \left\{ \begin{array}{ll}
            \overline{f(\omega)} &\ \  \mbox{if $\omega$ or $\hat \omega$  is in $\RLLrep{n}$;}\\
        f(\omega) &\ \  \mbox{otherwise,}\end{array} \right.$

\medskip

$LC2(\omega) = \left\{ \begin{array}{ll}
            \overline{f(\omega)} &\ \  \mbox{if $\omega$ or $\hat \omega$  is in $\LCCrep{n}$;}\\
        f(\omega) &\ \  \mbox{otherwise,}\end{array} \right.$
        
        \medskip

$O(\omega) = \left\{ \begin{array}{ll}
            \overline{f(\omega)} &\ \  \mbox{if $\omega$ or $\hat \omega$  is in $\OPPrep{n}$;}\\
       	    f(\omega) &\ \  \mbox{otherwise.}\end{array} \right.$

\end{center}

\begin{theorem}  \label{thm:successorO}
The feedback functions $RL2(\omega)$, $LC2(\omega)$ and $O(\omega)$ are de Bruijn successors.
\end{theorem}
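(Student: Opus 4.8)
The plan is to mirror the proof of Theorem~\ref{thm:successorS} almost verbatim, substituting the ``smallest RLE'' world for the ``largest RLE'' world and Theorem~\ref{thm:RL2} for Theorem~\ref{thm:RL}. First I would order the cycles $\bR_1,\bR_2,\ldots,\bR_t$ induced by the $\PRR$ on $\bB(n)$ so that $\bR_t=\{1^n\}$ and $\bR_1,\ldots,\bR_{t-1}$ appear in non-decreasing order of run-length; this is exactly the hypothesis of Theorem~\ref{thm:RL2}. In particular $\bR_1=\{0^n\}$, so the representative of $\bR_1$ has run-length $1$ (the all-zeros string), which by construction lies in none of the sets $\RLLrep{n}$, $\LCCrep{n}$, $\OPPrep{n}$ (these explicitly exclude the $0^n$ representative). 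For the special final cycle $\bR_t=\{1^n\}$, the conjugate of $1^n$ is $01^{n-1}$, which has strictly smaller run-length and hence lies in some $\bR_j$ with $j<t$, so the condition of Theorem~\ref{thm:RL2} at $i=t$ is automatic.

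The crux is then to verify, for each $i$ with $1<i<t$, that the chosen representative $\alpha_i$ of $\bR_i$ satisfies $a_1\neq a_2$ (i.e. begins with $01$ or $10$), since that is precisely what Theorem~\ref{thm:RL2} requires in order for $\hat\alpha_i$ to have run-length one less than $\alpha_i$ and thereby sit in an earlier cycle. I would handle the three successors in turn. For $RL2$: by Proposition~\ref{fact:RL2-test} (and the underlying $\PCR$/$\CCR$ characterizations), any RL2-rep of a cycle of run-length at least $2$ satisfies $w_1\neq w_2$ — indeed this was the very first observation used in defining RL2-reps (``if $\omega$ is an RL2-rep then $w_1\neq w_2$, because otherwise $w_2\cdots w_n w_1$ or $w_2\cdots w_n\overline{w_1}$ has a smaller RLE''). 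So every RL2-rep other than $0^n$ and $1^n$ begins with $01$ or $10$, and Theorem~\ref{thm:RL2} applies directly to give that $RL2$ is a de Bruijn successor. For $LC2$: by the definition of the LC2-rep, for every class other than $\{0^n\}$ and $\{1^n\}$ it is the string $\omega$ whose RLE $r_1r_2\cdots r_\ell$ has $r_1=1$; an RLE beginning with a run of length $1$ means exactly $w_1\neq w_2$, so again the hypothesis of Theorem~\ref{thm:RL2} is met. For $O$: the opp-rep is by definition either the RL2-rep (when it is special2) or the LC2-rep; in either case we have just argued the representative begins with $01$ or $10$, so once more Theorem~\ref{thm:RL2} yields that $O$ is a de Bruijn successor.

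I would also make one bookkeeping remark needed to invoke Theorem~\ref{thm:RL2} cleanly: the sets $\RLLrep{n}$, $\LCCrep{n}$, $\OPPrep{n}$ are defined to exclude only the $0^n$ representative, whereas Theorem~\ref{thm:RL2} implicitly treats $\{1^n\}=\bR_t$ as the root-adjacent special cycle with its own representative $1^n$ whose conjugate $01^{n-1}$ lies in an earlier cycle. So I should note that $1^n$ is a legitimate member of each of these representative sets (it is the RL2-rep, LC2-rep, and opp-rep of $\{1^n\}$, since $1^n$ is vacuously special2 or, more simply, is the designated representative of that class), matching the index range ``$\{\alpha_2,\ldots,\alpha_t\}$'' in the statement of the successor; and $0^n$, the representative of $\bR_1$, is correctly excluded. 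After that, the conclusion is immediate from Theorem~\ref{thm:RL2}.

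The main obstacle I anticipate is not any of the above — each step is a short appeal to an already-established characterization — but rather making sure the indexing conventions line up: specifically that ``special2'' is formulated so that $1^n$ itself is handled by the $\{1^n\}$ clause of the opp-rep definition rather than by the RLE-pattern clause (so that no contradiction arises from $1^n$ having run-length $1$), and that the reader is convinced the non-decreasing ordering with $\bR_t$ pulled out is genuinely an instance of Theorem~\ref{thm:RL2}'s hypotheses. Both are cosmetic, so the proof should be about as short as that of Theorem~\ref{thm:successorS}.
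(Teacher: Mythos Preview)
Your proposal is correct and follows essentially the same approach as the paper: order the cycles as in Theorem~\ref{thm:RL2}, observe that $0^n$ is excluded from the representative sets, and verify that for $1<i<t$ each of the RL2-rep, LC2-rep, and hence opp-rep begins with $01$ or $10$, then invoke Theorem~\ref{thm:RL2}. The paper's proof is slightly terser (it does not spell out the $i=t$ bookkeeping you mention), but the argument is the same.
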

\begin{proof}
Let the partition $\bR_1, \bR_2, \ldots, \bR_t$ of $\bB(n)$ induced by the PRR be  listed such that $\bR_t = \{1^n\}$ and the remaining $t{-}1$ cycles are ordered in non-decreasing order with respect to the run length of each cycle.  This means that $\bR_1 = \{0^n\}$ and its representative, which must be $0^n$, is not in the sets $\RLLrep{n}$,  $\LCCrep{n}$, and $\OPPrep{n}$ by their definition.   Now consider $\bR_i$ for $1 < i < t$.
Clearly the RL2-rep for  $\bR_i$,  which is a string with the lexicographically smallest RLE, will begin with 01 or 10.    Similarly, the LC2-rep for $\bR_i$  must begin with 01 or 10 by its definition.      
Together these results imply that each opp-rep for $\bR_i$  will also begin with 01 or 10.
Thus, if follows directly from Theorem~\ref{thm:RL2} that $RL2(\omega)$, $LC2(\omega)$ and $O(\omega)$ are de Bruijn successors.
\end{proof}

Recall from Proposition~\ref{fact:Opre} that the length $n$ suffix of $\opposite{n}$ is $10^{n-1}$.  Let $\mathcal{Y}_n = y_1y_2 \cdots y_{2^n}$  be the de Bruijn sequence returned by {\sc DB}($O, 10^{n-1})$; it will have suffix $10^{n-1}$.
 Let $\mathcal{Y}'_n$ denote the linearized de Bruijn sequence  $0^{n-1} \mathcal{Y}_n$.
  Our goal is to show that  $\mathcal{Y}_n = \opposite{n}$.   Our proof applies the following two propositions.

\begin{proposition} \label{fact:prefixO}
$\mathcal{Y}_n$ has length $n$ prefix $010101\cdots$.
\end{proposition}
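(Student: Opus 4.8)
The plan is to verify the claim by directly tracing the first $n$ iterations of the de Bruijn successor $O$ from the seed $10^{n-1}$, exactly as in the procedure {\sc DB}: set $s_1 = 10^{n-1}$ and, for $i\ge 1$, $x_i = O(s_i)$ and $s_{i+1} = (\text{length-}(n{-}1)\text{ suffix of }s_i)\, x_i$; the goal is to show $x_1x_2\cdots x_n = 0101\cdots$, the alternating string of length $n$ that begins with $0$. Since $f(\omega)=w_1\oplus w_2\oplus w_n$ and $O(\omega)=\overline{f(\omega)}$ exactly when $\omega$ or $\hat\omega$ lies in $\OPPrep{n}$, the whole computation reduces to deciding, for each state encountered, whether it or its conjugate is an opp-rep.

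The first step is to identify the states. Writing $\sigma_j$ for the length-$j$ alternating string $1010\cdots$ (with $\sigma_0$ empty), a short induction---using the output values computed in the next paragraph---shows $s_k = 0^{n-k+2}\sigma_{k-2}$ for every $2\le k\le n{+}1$; in particular $s_2 = 0^n$ and $s_{n+1}=0\,\sigma_{n-1}=0101\cdots$, while the conjugates arising at the iteration steps are $\hat s_1=0^n$ and $\hat s_k=10^{n-k+1}\sigma_{k-2}$ for $2\le k\le n$. For $3\le k\le n$ the state $s_k$ begins with $00$ and is not $0^n$, hence is not an opp-rep (every opp-rep other than $0^n$ begins with $01$ or $10$), so its output is forced by whether $\hat s_k\in\OPPrep{n}$; similarly $x_1$ is forced by whether $10^{n-1}=\hat s_2\in\OPPrep{n}$. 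Granting the key lemma below that $10^{n-k+1}\sigma_{k-2}\in\OPPrep{n}$ for every $2\le k\le n$, we obtain $x_1 = \overline{f(10^{n-1})}=\overline 1 = 0$, $x_2 = \overline{f(0^n)}=\overline 0 = 1$, and for $3\le k\le n$, since $s_k$ starts with $00$, $f(s_k)=w_n$, which is the last bit of $\sigma_{k-2}$, namely $1$ for $k$ odd and $0$ for $k$ even; so $x_k=\overline{f(s_k)}$ alternates as required. The alternating output also closes the induction on the $s_k$, since then $\sigma_{k-2}x_k=\sigma_{k-1}$.

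The heart of the argument is the key lemma: for $2\le k\le n$ the string $\beta_k := 10^{n-k+1}\sigma_{k-2}$ is the LC2-rep---and hence the opp-rep, and hence a member of $\OPPrep{n}$---of its $\PRR$-cycle $\bR$. Its RLE is $(1,\,n{-}k{+}1,\,1^{k-2})$, so $r_1=1$ and $r_2=n{-}k{+}1$; I would compute $\PRR^{n-k+1}(\beta_k)$ and show it has RLE $(1^{k-1},\,n{-}k{+}1)$ and begins with $0$. The orbit's RLE evolves regularly: the first $\PRR$-step deletes the leading $1$ and extends the alternation, giving RLE $(n{-}k{+}1,\,1^{k-1})$, and while the first run has length at least $2$ each further step peels one $0$ off the front while the feedback bit $f=w_n$ lengthens the final run by one, so after $n{-}k{+}1$ steps exactly one leading $0$ survives and the RLE is $(1^{k-1},\,n{-}k{+}1)$. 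By Lemma~\ref{lem:rle} every string of $\bR$ has run-length $k$, so every RLE occurring in $\bR$ is a length-$k$ composition of $n$, and $(1^{k-1},\,n{-}k{+}1)$ is the lexicographically smallest such composition; since $\PRR^{n-k+1}(\beta_k)$ realizes it and begins with $0$, it is the RL2-rep of $\bR$ (using Proposition~\ref{fact:RL2-test} together with the tie-break toward $0$ in the definition of RL2-rep). Hence $\beta_k$ satisfies the defining condition of the LC2-rep, and because the RL2-rep begins with $0$ it is not in $\SpecialO(n)$, so the opp-rep of $\bR$ equals its LC2-rep $\beta_k$. The boundary cases $k=n$ (where $n{-}k{+}1=1$ and $\beta_n=1010\cdots$ lies in the run-length-$n$ cycle $\{1010\cdots,\,0101\cdots\}$) and $k=2$ (where $\beta_2 = 10^{n-1}$ is the seed itself) are handled in the same fashion.

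I expect the key lemma to be the main obstacle: one has to nail down the RLE and the leading bit of $\PRR^{n-k+1}(\beta_k)$ uniformly across the parity of $k$ and the sizes of the two blocks (treating small values of $n{-}k{+}1$ as degenerate cases), and one has to be careful with periodicity and tie-breaking when pinning down the RL2-rep, so as to be certain the opp-rep is the LC2-rep rather than a special2 RL2-rep. Everything else is routine bookkeeping of the {\sc DB} iteration.
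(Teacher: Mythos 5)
Your proposal is correct and follows the same route as the paper, which simply asserts that the proposition "follows from $n$ applications of the successor $O$ to the seed $10^{n-1}$"; you carry out exactly those $n$ applications, supplying the supporting detail (that each conjugate state $10^{n-k+1}\sigma_{k-2}$ is the LC2-rep, hence opp-rep, of its $\PRR$-cycle) that the paper leaves implicit. Your key-lemma computation checks out, so this is a fleshed-out version of the paper's intended verification rather than a different argument.
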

\begin{proof}
The result follows from $n$ applications of the successor $O$ to the seed $10^{n-1}$.
\end{proof}

\begin{proposition} \label{fact:tough2}
%
If $\beta$ is a string in $\bB(n)$ such that the run length of $\beta$ is one less than the run length of $\hat \beta$ and neither $\beta$ nor $\hat \beta$ are opp-reps,  then $\hat \beta$ appears before $\beta$ in $\mathcal{Y}'_n$.
\end{proposition}

\noindent A proof of this proposition is given later in Section~\ref{sec:proof2}.

\begin{result}
\vspace{-0.2in}
\begin{theorem} \label{thm:main2}
The de Bruijn sequences $\opposite{n}$ and $\mathcal{Y}_n$ are the same. 
\end{theorem}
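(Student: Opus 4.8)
The plan is to mirror the proof of Theorem~\ref{thm:main} almost verbatim, replacing the "prefer-same / largest-RLE" ingredients with their "prefer-opposite / smallest-RLE" counterparts. Write $\opposite{n} = o_1o_2\cdots o_{2^n}$ and $\mathcal{Y}_n = y_1y_2\cdots y_{2^n}$. First I would establish that the two sequences agree on their length-$n$ prefix and on their length $n{-}1$ suffix: by Proposition~\ref{fact:Opre}, $\opposite{n}$ has length-$n$ prefix $0101\cdots$ and suffix $10^{n-1}$, and by Proposition~\ref{fact:prefixO} together with the definition of $\mathcal{Y}_n = {\sc DB}(O,10^{n-1})$, the sequence $\mathcal{Y}_n$ has the same length-$n$ prefix and the same length $n{-}1$ suffix. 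So any first discrepancy occurs at some position $t$ with $n < t \le 2^n{-}n$, and I would take the smallest such $t$, setting $\beta = y_{t-n}\cdots y_{t-1}$, the length-$n$ window of $\mathcal{Y}_n$ (equivalently of $\mathcal{Y}'_n$) ending just before position $t$.

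The next step is the sign/run-length bookkeeping that is dual to the prefer-same case. Here $y_t \ne o_t$, and since $\opposite{n}$ has the lexicographically \emph{smallest} RLE among de Bruijn sequences starting with 1 (Proposition~\ref{fact:rle2}), minimality of $t$ forces $y_t = y_{t-1}$: if instead $y_t \ne y_{t-1}$ we would have $o_t = y_{t-1}$, making the RLE of $\opposite{n}$ strictly larger than that of $\mathcal{Y}_n$ at the first place they differ, contradicting Proposition~\ref{fact:rle2}. I would then split into the two cases of the successor $O$. If $y_t = f(\beta)$, then by the definition of $O$ neither $\beta$ nor $\hat\beta$ is an opp-rep; since $y_t = y_{t-1}$, the definition of $f$ forces the first two bits of $\beta$ to agree, so $\beta$ has run-length one \emph{less} than $\hat\beta$, and Proposition~\ref{fact:tough2} gives that $\hat\beta$ appears before $\beta$ in $\mathcal{Y}'_n$. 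If $y_t \ne f(\beta)$, then one of $\beta,\hat\beta$ is an opp-rep; writing $\beta = b_1\cdots b_n$ we have $\PRR(\beta) = b_2\cdots b_n o_t$ and $\PRR(\hat\beta) = b_2\cdots b_n y_t$, and by Lemma~\ref{lem:rle} these have the same run-lengths as $\beta$ and $\hat\beta$ respectively; since $b_2\cdots b_n o_t$ has run-length one greater than $b_2\cdots b_n y_t$, $\hat\beta$ has run-length one greater than $\beta$, so $\hat\beta$ begins with 00 or 11 and cannot be an opp-rep (opp-reps other than $0^n$ begin with 01 or 10), forcing $\beta$ itself to be an opp-rep; then Observation~\ref{obs:tree}(item 1) — applied to the successor $O$ with the ordering of Theorem~\ref{thm:RL2} — says $\hat\beta = \hat\alpha_i$ appears before all strings of $\bR_i$, in particular before $\beta$, in $\mathcal{Y}'_n$.

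Finally I would close the contradiction exactly as in Theorem~\ref{thm:main}: since $\hat\beta$ occurs before $\beta$ in $\mathcal{Y}'_n = 0^{n-1}\mathcal{Y}_n$, it is a substring of $0^{n-1} y_1\cdots y_{t-2}$; but $\hat\beta = y_{t-n+1}\cdots y_{t-1}\,c$ after one $\PRR$ step for one of $c \in \{y_t, o_t\}$ (both choices of next bit, by Observation~\ref{obs:tree}(item 5) the successor takes $\hat\beta$ into $\bR_i$, so the pair $y_{t-n+1}\cdots y_{t-1}$ is forced to recur), so one of $y_{t-n+1}\cdots y_{t-1}y_t$ or $y_{t-n+1}\cdots y_{t-1}o_t$ occurs twice as a length-$n$ window in $0^{n-1}y_1\cdots y_{t-1}$, contradicting that both $\mathcal{Y}_n$ and $\opposite{n}$ are de Bruijn sequences. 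Hence no discrepancy position exists and $\opposite{n} = \mathcal{Y}_n$.

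The one genuinely new obstacle — everything else being a faithful translation — is verifying that Proposition~\ref{fact:tough2} and the relevant items of Observation~\ref{obs:tree} really do apply to the ordering and representative set used for $O$, i.e.\ that $\bR_t = \{1^n\}$ with the rest in non-decreasing run-length order (Theorem~\ref{thm:RL2}) is a legitimate tree rooting for which Observation~\ref{obs:tree} holds, and that opp-reps are precisely the set $\OPPrep{n}$ of Theorem~\ref{thm:successorO}; I expect the direction-of-run-length inequality (one \emph{less} rather than one \emph{more}) to be the place most likely to trip up a careless transcription, so I would state it carefully and lean on Proposition~\ref{fact:tough2} as the black box, deferring its proof to Section~\ref{sec:proof2}.
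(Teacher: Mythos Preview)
Your approach matches the paper's proof essentially line for line. The one slip is in the case $y_t \neq f(\beta)$: from $\PRR(\hat\beta) = b_2\cdots b_n y_t$ and $\PRR(\beta) = b_2\cdots b_n o_t$ together with Lemma~\ref{lem:rle}, the correct conclusion is that $\hat\beta$ has run-length one \emph{less} than $\beta$ (not greater), and it is this direction that forces $\hat\beta$ to begin with $00$ or $11$ and hence not be an opp-rep --- exactly the hazard you flagged in your final paragraph.
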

\vspace{-0.2in}
\end{result}

\noindent
\begin{proof}
Let $\opposite{n}  = o_1o_2\cdots o_{2^n}$, let $\mathcal{Y}_n = y_1y_2\cdots y_{2^n}$.    
From Proposition~\ref{fact:Opre} and Proposition~\ref{fact:prefixO}, $y_1y_2\cdots y_n = o_1o_2\cdots o_n = 0101\cdots $ and  moreover $\opposite{n}$ and $\mathcal{Y}_n$ share the same length $n{-}1$ suffix $0^{n-1}$. 
Based on these prefix and suffix conditions and because both $\opposite{n}$ and $\mathcal{Y}_n$ are de Bruijn sequences, 
clearly the substring $01^{n-1}$ is followed by a 1 in both sequences. Suppose there exists some smallest $t$, where $n < t \leq 2^n$,  such that $o_t \neq y_t$.   Let $\beta = y_{t-n}\cdots y_{t-1}$ denote the length $n$ substring of $\mathcal{Y}_n$ ending at position $t{-}1$. 
Then $y_t = y_{t-1}$, because otherwise the RLE of $\mathcal{Y}_n$ is lexicographically smaller than that of $\opposite{n}$, contradicting Proposition~\ref{fact:rle2}.
We claim that $\hat \beta$ comes before $\beta$ in $\mathcal{Y}'_n$, by considering two cases, recalling $f(\omega) =  w_1 \oplus w_2 \oplus w_n$:
\begin{itemize}
\item    If $y_{t} = f(\beta)$, then by the definition of $O$, neither $\beta$ nor $\hat \beta$ are in $\OPPrep{n}$.
By the definition of $f$ and since $y_t = y_{t-1}$, the first two bits of $\beta$ are the same.  Thus, the run length of $\beta$ is one less than the run length of $\hat \beta$.
Thus the claim holds by Proposition~\ref{fact:tough2}. 
\item If $y_{t} \neq f(\beta)$, then either $\beta$ or $\hat \beta$ are in $\OPPrep{n}$.  
Let $\beta=b_1b_2\cdots b_n$.  
Then $\PRR(\beta) = b_2\cdots b_n o_{t}$ and  $\PRR(\hat \beta) = b_2\cdots b_n y_{t}$.  
 Since $f(\beta) \neq b_n$, $b_1 \neq b_2$, which implies $\hat \beta$ is not in $\OPPrep{n}$ since the case when $\beta \neq 01^{n-1}$ was already handled.
 Thus $\beta$ is an opp-rep and the claim holds by Observation~\ref{obs:tree} (item 1).
  \end{itemize}
Since $\hat \beta$ appears before $\beta$ in $\mathcal{Y}'_n$ then $\hat \beta$ must be a substring of $0^{n-1} y_1\cdots y_{t-2}$.
Thus, either $y_{t-n+1}\cdots y_{t-1} y_{t}$ or $y_{t-n+1}\cdots y_{t-1}o_{t}$ must be in  $0^{n-1} y_1\cdots y_{t-1}$ which
contradicts the fact that both $\mathcal{Y}_n$ and $\opposite{n}$ are de Bruijn sequences.  Thus, there is no $n < t \leq 2^n$ such that $o_t \neq y_t$
and hence  $\opposite{n} = \mathcal{Y}_n$.
\end{proof}

\section{Lexicographic compositions}  \label{sec:LC}

As mentioned earlier, Fredricksen and Kessler  devised a construction based on lexicographic compositions~\cite{lexcomp}.    Let $\LC{n}$ denote  the de Bruijn sequence of order $n$ that results from this construction.
The sequences $\same{n}$ and $\LC{n}$ first differ at $n=7$ (as noted below), and for $n\geq 7$ they were conjectured to match for a significant prefix~\cite{fred-nfsr,lexcomp}:

\begin{center} \small
$\same{7} = \begin{array}{l}
        1111111000000011111011110011110100000100001100001011100011100100\\
        0110111011000100111010110011001011011010\underline{\red{011010100010100100101010}},
        \end{array}  $
        
        \medskip
        
$\LC{7} = \begin{array}{l}
        1111111000000011111011110011110100000100001100001011100011100100\\
        0110111011000100111010110011001011011010\underline{\blue{100010100110100100101010}}.
        \end{array}  $      
\end{center}

After discovering the de Bruijn successor for $\same{n}$, we observed that the de Bruijn sequence resulting from the de Bruijn successor $LC(\omega)$ corresponded to $\LC{n}$ for small values of $n$.   Recall that $alt(n)$ denotes the alternating sequence of 0s and 1s of length $n$ that ends with 0.   Let $\mathcal{LC}_n$  be the de Bruijn sequence returned by {\sc DB}($LC, 0 alt(n{-}1)$). 
  
\begin{conjecture} \label{conj:lc}
The de Bruijn sequences $\mathcal{LC}_n$ and $\LC{n}$ are the same. 
\end{conjecture}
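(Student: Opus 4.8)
The plan is to follow the template of the proof of Theorem~\ref{thm:main}, but the key asymmetry is that $\LC{n}$ has no known extremal characterization: there is no analogue of Proposition~\ref{fact:rle} saying that $\LC{n}$ is lexicographically extremal with respect to its RLE, so the ``smallest first difference'' contradiction cannot be driven by an extremal property and must instead be driven by a structural description of $\LC{n}$. The first stage is therefore to pin down the Fredricksen--Kessler construction precisely. Recall that it concatenates the compositions of $n$ in a prescribed, essentially lexicographic order, and note that the RLE of a length-$n$ binary string is exactly a composition of $n$, so compositions up to cyclic rotation are in natural bijection with the cycles $\bR_1,\ldots,\bR_t$ induced by the $\PRR$. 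The goal of this stage is to re-express $\LC{n}$, or rather its linearization, purely in $\PRR$ terms: to identify, for each non-root cycle $\bR_i$, the unique string of $\bR_i$ whose conjugate lies in an earlier cycle, i.e. the ``exit point'' of $\bR_i$ when $\LC{n}$ is read as a cycle-joining (Hierholzer) construction.

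The heart of the argument is the claim that this exit point is exactly the LC-rep of $\bR_i$. Concretely, one must show that in the linearization of $\LC{n}$ the conjugate $\hat\alpha_i$ of the LC-rep $\alpha_i$ appears immediately before the block of strings forming $\bR_i$, and that those strings then occur consecutively in $\PRR$-order ending at $\alpha_i$ --- that is, that the linearization of $\LC{n}$ satisfies the conclusions of Observation~\ref{obs:tree} with the LC-reps as representatives. This is where essentially all the work sits: it requires translating the concatenation rule of the Fredricksen--Kessler algorithm into a statement about when one composition ``hands off'' to the next, and verifying that each hand-off occurs precisely at the RLE $21^{i-1}r_{i+1}\cdots r_\ell$ that characterizes the LC-rep (equivalently, at the string obtained from the RL-rep by stripping its maximal trailing run of $1$s and prepending a $2$). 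I expect this to go by induction, exploiting the recursion of compositions of $n$ in terms of compositions of $n-1$ and $n-2$ controlled by the first part, and I anticipate that periodic compositions will need a separate careful treatment, in the spirit of the special cases that separate RL-reps from strings beginning with $0$ in Section~\ref{sec:feedback}. A useful consistency check is that the same analysis should explain the long common prefix of $\same{n}$ and $\LC{n}$ observed for $n\ge 7$: an RL-rep and its LC-rep differ only by a short rotation ($i+1$ applications of the $\PRR$), so the two cycle-joining trees, and hence the two sequences, must agree until one reaches a cycle whose RL-rep is \emph{special}, after which the same-rep switches to the RL-rep while the LC successor does not.

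Given the structural claim, the conclusion is short. A de Bruijn sequence built by cycle-joining over the $\PRR$ is determined, up to rotation, by the set of conjugate pairs chosen as bridges (these are the edges of the cycle-joining tree, and the tree together with the $\PRR$ fixes the Euler/Hamilton structure); by Theorem~\ref{thm:RL} and Theorem~\ref{thm:successorS}, $\mathcal{LC}_n$ is exactly the sequence whose bridges are the conjugate pairs $\{(\alpha_i,\hat\alpha_i)\}$ of LC-reps, and the structural claim shows $\LC{n}$ has the same bridge set. Matching the rotation then amounts to exhibiting one common length-$n$ window: the $\mathcal{LC}_n$ side has prefix $1^n$ by $n$ applications of $LC$ to the seed $0\,alt(n{-}1)$, exactly as in Proposition~\ref{fact:prefixS}, and one checks that the Fredricksen--Kessler sequence begins the same way. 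Alternatively one can run the argument of Theorem~\ref{thm:main} verbatim, replacing the (unavailable) appeal to Proposition~\ref{fact:rle} by the structural claim to rule out a first position of disagreement. The main obstacle is unambiguously the second paragraph: since $\LC{n}$ is only defined by the Fredricksen--Kessler algorithm, one must reason directly about that algorithm's mechanics to prove its cycle-exit points are the LC-reps and to dispose of the periodic-composition exceptions cleanly --- which is presumably why this has so far resisted a full proof.
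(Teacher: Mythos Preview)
The paper does not prove this statement; it is stated as a conjecture and explicitly left open. Immediately after the statement the authors write that they have verified $\mathcal{LC}_n=\LC{n}$ for all $n<30$ but that, because the description of the Fredricksen--Kessler algorithm is rather detailed, they did not attempt a proof. So there is no ``paper's own proof'' to compare against.

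As for your proposal itself: it is a plausible plan, and you have correctly diagnosed where the real difficulty lies. The reduction you describe --- showing that $\LC{n}$, read as a cycle-joining construction over the $\PRR$, uses exactly the LC-reps as its bridge representatives --- would indeed settle the conjecture via Theorem~\ref{thm:RL} and the uniqueness of the cycle-joining tree. But your second paragraph is not a proof, it is a description of the work that would need to be done: you would have to go line by line through the Fredricksen--Kessler concatenation rule and verify that each transition between consecutive compositions occurs precisely at the string with RLE $21^{i-1}r_{i+1}\cdots r_\ell$, handling periodic compositions separately. You acknowledge this yourself (``which is presumably why this has so far resisted a full proof''). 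So what you have written is a reasonable proof \emph{outline}, not a proof, and the paper is in the same position --- it simply declines to carry out that programme.
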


We verified that $\mathcal{LC}_n$ is the same as $\LC{n}$ for all $n<30$.  However, as the description of the algorithm to construct $\LC{n}$ is rather detailed~\cite{lexcomp}, we did not attempt to prove this conjecture.

\section{Efficient implementation} \label{sec:implement}

Given a membership tester for $\RLrep{n}$, testing whether or not a string is an LC-rep or a same-rep can easily be done in $O(n)$ time and $O(n)$ space.
Similarly, given the membership tester for $\RLLrep{n}$, testing whether or not a string is an LC2-rep or a opp-rep can easily be done in $O(n)$ time and $O(n)$ space.
Thus, by applying Proposition~\ref{fact:RL-test} and Proposition~\ref{fact:RL2-test}, we can implement each of our six de Bruijn successors in $O(n)$ time using $O(n)$ space.

\begin{theorem}
The six de Bruijn successors 
 $RL(\omega)$, $LC(\omega)$, $S(\omega)$ , 
$RL2(\omega)$, $LC2(\omega)$ and $O(\omega)$
can be implemented in $O(n)$ time using $O(n)$ space.
\end{theorem}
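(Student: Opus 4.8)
The plan is to establish the $O(n)$-time, $O(n)$-space bound compositionally: first for the RL-rep and RL2-rep testers (already supplied by Propositions~\ref{fact:RL-test} and~\ref{fact:RL2-test}), then bootstrap the LC/LC2-rep and same/opp-rep testers on top of them, and finally observe that evaluating each successor costs one tester call plus $O(n)$ arithmetic on the RLE. The underlying primitives I will invoke are the linear-time algorithms of Booth and Duval for computing the lexicographically least (or greatest) rotation of a string and its period, applied to the RLE $r_1r_2\cdots r_\ell$ of $\omega$ (or of $w_2\cdots w_nw_1$), which itself is computed in one left-to-right scan in $O(n)$ time and $O(n)$ space. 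With these, conditions (1)--(3) of Propositions~\ref{fact:RL-test} and~\ref{fact:RL2-test} — equality/inequality of first and last bits, maximality/minimality of the RLE among rotations, and the parity/period side condition — are all checkable in $O(n)$ time.

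Next I would handle the LC-rep and same-rep testers. Given $\omega$, to decide whether $\omega \in \LCrep{n}$ I use the characterization in Section~\ref{sec:same}: if the RLE of $\omega$ has the form $21^{i-1}r_{i+1}\cdots r_\ell$ with $r_{i+1}\neq 1$, then $\omega$ is an LC-rep exactly when $\PRR^{\,i+1}(\omega)$ is an RL-rep. The key point is that $\PRR^{\,i+1}(\omega)$ need not be computed by $i+1$ explicit FSR steps on the bit string; instead, since $i+1 \le n$, I can compute it directly in $O(n)$ time (or simply run the FSR $i+1 \le n$ times, each step $O(1)$ after maintaining a running parity), and then invoke the RL-rep tester once. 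The same-rep tester is then immediate: test membership in $\Special(n)$ — a purely syntactic check on the RLE for the pattern $(21^{2x})^y1^z$ with $x\ge 0$, $y\ge 2$, $z\ge 2$, done in one $O(n)$ scan — and branch to the RL-rep tester or the LC-rep tester accordingly. The RL2/LC2/opp chain is symmetric, using Proposition~\ref{fact:RL2-test}, the $r_2$-step-$\PRR$ characterization of LC2-reps, and the syntactic $\SpecialO(n)$ pattern $1x^z y$ with $z$ odd and $y>x$.

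Finally, each successor $g \in \{RL, LC, S, RL2, LC2, O\}$ evaluated at $\omega$ requires: computing $f(\omega)=w_1\oplus w_2\oplus w_n$ in $O(1)$; forming $\hat\omega = \overline{w_1}w_2\cdots w_n$ in $O(n)$; running the relevant membership tester on each of $\omega$ and $\hat\omega$ at $O(n)$ cost apiece; and returning $f(\omega)$ or $\overline{f(\omega)}$. Hence each successor call is $O(n)$ time and, since every subroutine uses only a constant number of length-$O(n)$ arrays, $O(n)$ space, which is the claim.

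The main obstacle I anticipate is not asymptotic but definitional bookkeeping: making sure the reduction from the LC-rep (resp.\ LC2-rep) test to the RL-rep (resp.\ RL2-rep) test is carried out on the correct conjugacy/shift — the representatives live in $\PRR$-cycles whose RL-rep lies a bounded number of $\PRR$-steps away, and one must confirm that this bounded number is at most $n$ (it is, since the relevant count $i+1$ or $r_2$ is bounded by the run-length, hence by $n$) so that the shift can be performed cheaply without ever materializing an exponential-length object. Handling the excluded top representative (run-length $n$, i.e.\ the strings near $1^n$ or $0^n$) as an explicit special case is a minor additional check. Everything else is a routine assembly of the stated primitives.
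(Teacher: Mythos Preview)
Your proposal is correct and follows essentially the same approach as the paper: reduce the LC/LC2-rep and same/opp-rep membership tests to the RL/RL2-rep tests via a bounded (at most $n$) number of $\PRR$ applications plus an $O(n)$ syntactic check on the RLE for the special sets, then invoke Propositions~\ref{fact:RL-test} and~\ref{fact:RL2-test}. The paper's own argument is a terse version of exactly this compositional reasoning, so your write-up simply supplies the bookkeeping that the paper leaves implicit.
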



\section{Proof of Proposition~\ref{fact:tough}} \label{sec:proof1}

Recall that  $\mathcal{X}_n$ = {\sc DB}($S, 0 alt(n{-}1$)) and $\mathcal{X}'_n = alt(n{-}1)\mathcal{X}_n$.
We begin by restating Proposition~\ref{fact:tough} by reversing the roles of $\beta$ and $\hat \beta$ in the original statement for convenience:
\begin{quote}
If $\beta$ is a string in $\bB(n)$ such that the run length of $\beta$ is one less than the run length of $\hat \beta$ and neither $\beta$ nor $\hat \beta$ are same-reps,  then $\beta$ appears before $\hat \beta$ in $\mathcal{X}'_n$.
\end{quote}
The first step is to further refine the ordering of the cycles $\bR_1, \bR_2, \ldots , \bR_t$ used in the proof of Theorem~\ref{thm:successorS} to prove that $S(\omega)$ was a de Bruijn successor.   In particular, let $\bR_1, \bR_2, \ldots , \bR_t$ be the cycles of $\bB(n)$ induced by the $\PRR$ ordered in non-increasing order with respect to the run lengths of each cycle,  \emph{\blue{additionally refined so the cycles with the same run lengths are ordered in decreasing order with respect to the RLE of the RL-rep}}.  If two RL-reps have the same RLE, then the cycle with RL-rep starting with 1 comes first. 
Let $\sigma_i, \gamma_i, \alpha_i$ denote the RL-rep, LC-rep, and same-rep, respectively,  for $\bR_i$, where $1 \leq i \leq t$;  let $R_i$ denote the RLE of $\sigma_i$.

Assume the run length of $\beta$ is one less than the run length of $\hat \beta$ (the RLE of $\beta$ must begin with a value greater than~1), and neither $\beta$ nor $\hat \beta$ are same-reps. 
Since each string in $\bR_1$ has maximal run length $n$, $\beta \in \bR_i$ for some $1 < i \leq t$ and thus $R_i$ is of the form $r_1r_2\cdots r_m1^v$ where $r_m > 1$.
Let $\bR_j$ contain $\hat \alpha_i$ which means $\bR_j$ is the parent of $\bR_i$.    
Let $\bR_k$ contain $\hat \beta$.  In general, we will show that either $j<k$ or $j=k$; 
see Figure~\ref{fig:beta}.  
The cases for when $j<k$ are handled in Section~\ref{sec:jk}.   In the next steps, we will focus on the situations when $j=k$.  
 Through computer experimentation for $n \leq 25$, we verified that $j=k$ only for specific instances of  $\beta$ equal to $\gamma_i$, $\overline{\gamma_i}$, $\sigma_i$, or $\overline{\sigma_i}$.
In our formal proof, we find that $\bR_j$ is aperiodic.  Thus, by Observation \ref{obs:tree} (item 2), we determine
the smallest positive integers $a$ and $b$ such that $\PRR^a(\hat \alpha_i) = \hat \beta$ and 
$\PRR^b(\hat \alpha_i) = \alpha_j$ and demonstrate that $a<b$.  

\begin{result}
\noindent {\bf Outline of next steps:}

\begin{enumerate}
\item $\sigma_i \in \Special(n)$ 

\item $\sigma_i \notin \Special(n)$ 
	\begin{itemize}
		\item Consider $m=1$ 
		\item Consider $m>1$ 
		\begin{itemize}
			\item Handle the case when $\beta = \overline{\gamma_i}$
			\item Consider one RLE possibility for $\beta$ which includes an instance when $\beta = \overline{\gamma_i}$
			\item Consider a second RLE possibility for $\beta$ which includes instances when $\beta = \sigma_i$, $\beta = \overline{\sigma_i}$, and $\beta = \overline{\gamma_i}$
		        \item Handle the instances when $\beta = \sigma_i$ or $\beta = \overline{\sigma_i}$ 
		\end{itemize}
	\end{itemize}
\end{enumerate}			

\end{result}

\begin{figure}[h]
\begin{center}
\resizebox{4.1in}{!}{\includegraphics{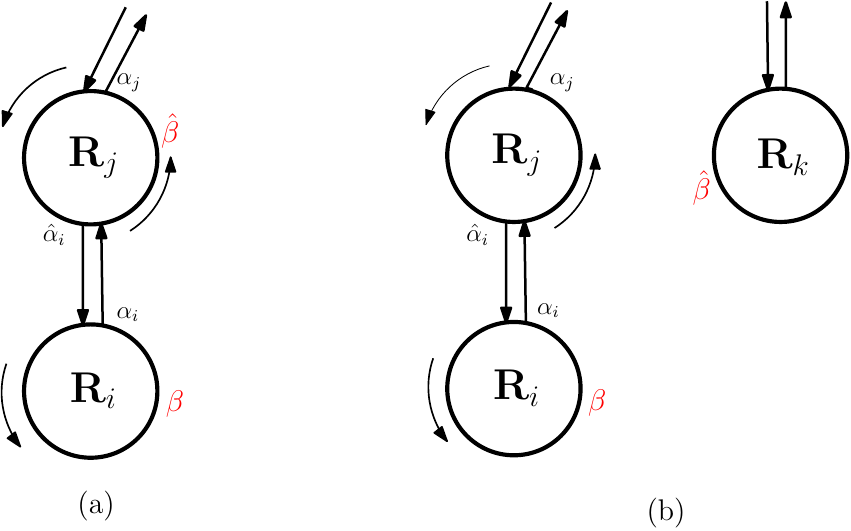}}
\end{center}

\vspace{-0.15in}
\caption{  \small
Illustrating the possible relationships between $\beta$ and $\hat \beta$: 
(a) $j=k$,    (b) $j<k$, noting the run length of $\bR_j$ and $\bR_k$ are the same. }
\label{fig:beta}
\end{figure}

\smallskip

\noindent
{\bf CASE 1: $\sigma_i \in \Special$.} ~
In this case, $\alpha_i = \sigma_i$ has RLE of the form $(21^{2x})^y1^z$ and begins with 0, where $x \geq 0$ and $y,z \geq 2$. 
Thus $\hat \alpha_i \in \bR_j$ has RLE $1^{2x+2}(21^{2x})^{y-1}1^z$.
Considering the RLE possibilities of the other strings in $\bR_j$, as outlined in Lemma~\ref{lem:RLEs},  we deduce
\[  \sigma_j = \PRR^{2x+2}(\hat \alpha_i)  \mbox{ begins with 1 and has RLE } (21^{2x})^{y-1}1^{z+2x+2}. \]
Clearly $\bR_j$ is aperiodic.
Suppose $\beta = \gamma_i$; it will have RLE  $21^{z+2x-1}(21^{2x})^{y-1}1$.
Observe that $\PRR^{z+2x+1} (\hat \beta)$ has the same RLE as $\sigma_j$, but begins with 0. 
Thus, since $\bR_j$ is CCR-related and applying Observation~\ref{obs:PRR}, $\PRR^{(z+2x+1) + (n-1)}(\hat \beta) = \sigma_j$, and thus $\PRR^{(n-1) - z + 1}(\hat \alpha_i)  = \hat \beta$.  
By definition, $\PRR^{z+2x+1+2x+2}(\gamma_j) = \sigma_j$, which means that $\PRR^{z+2x+1}(\gamma_j)  = \hat \alpha_i$.   
Since $\bR_j$ is CCR-related, $\alpha_j = \gamma_j$.  Thus $a = n-z$, $b = (2n-2)-(z+2x+1)$, and clearly $a<b$.
For all other cases such that $\beta \neq \sigma_i$ (the same-rep), it is a simple exercise to see that $R_j > R_k$, and hence $j<k$. 

\begin{exam} \small
Consider $\bR_i$ where  $\alpha_i = \sigma_i = 00101101010 \in \Special(n)$ and has RLE 211211111. The corresponding LC-rep  $\beta = \gamma_i = 11010100101$ has RLE 211112111.  Below are the strings from $\bR_j$
including $\hat \sigma_i$ and $\hat \gamma_i$ in the order that they 
 appear in  $\mathcal{X}'_{11}$.   Note that $\hat \beta$ appears after $\hat \alpha_i$ ($a=8, b=14$).

\medskip

\begin{tabular} {l}
01010101011 \\
10101010110 \\
01010101101 \\
10101011010 \\
01010110101 \\
10101101010 \ \ $\leftarrow$  \ \   \red{$\hat \sigma_i = \hat \alpha_i$} \\
01011010101 \\
10110101010 \\
01101010101 \\
11010101010  \ \ $\leftarrow$   \ \    $\sigma_j$, the RL-rep, with RLE 2111111111 \\
10101010100 \\
01010101001 \\
10101010010 \\
01010100101 \ \ $\leftarrow$  \ \   \red{$\hat \gamma_i = \hat \beta$} \\
10101001010 \\
01010010101\\
10100101010 \\
01001010101 \\
10010101010 \\
00101010101  \ \ $\leftarrow   \ \ \alpha_j = \gamma_j (= \overline{\sigma_j})$, the same-rep and LC-rep for this cycle 
\end{tabular}

\end{exam}

\smallskip

\noindent
{\bf CASE 2: $\sigma_i \notin \Special(n)$}. ~ By definition $\alpha_i = \gamma_i$.  This case involves some rather technical analysis of the RLE for various strings. Assume $R_i = r_1r_2\cdots r_m1^v$, where $m \geq 1$ and $r_1,r_m \geq 2$.  Then,
\begin{itemize}
\item $\alpha_i = \gamma_i$ has RLE  $21^{v-1}r_1r_2\cdots r_{m-1}(r_m{-}1)$ where  $\PRR^{v+1}(\alpha_i) = \sigma_i$, and
\item $\hat \alpha_i$ has RLE $1^{v+1}r_1r_2\cdots r_{m-1}(r_m{-}1)$ and is in $\bR_j$.
\end{itemize}
Consider the RLE possibilities of the other strings in $\bR_j$ as outlined in Lemma~\ref{lem:RLEs}.  Given that $\sigma_i$ is an RL-rep, we deduce 
\begin{center}
$\sigma_j = \left\{ \begin{array}{ll}
        \PRR^{v+1}(\hat \alpha_i) &\ \  \mbox{if $\sigma_i$ begins with 1;}\\
        \PRR^{(n-1) + v+1}(\hat \alpha_i)  &\ \  \mbox{if $\sigma_i$ begins with 0 ($\bR_i$ is PCR-related).}\end{array} \right.$
\end{center}
In both cases $\sigma_j$ begins with 1 (implying $\alpha_j = \gamma_j$) and  $R_j =  r_1r_2\cdots r_{m-1}(r_m{-}1)1^{v+1}$.
\begin{claim} \label{claim:parent}
If $\bR_j$ is the parent of $\bR_i$ then $\sigma_j$ begins with 1 and $\bR_j$ is aperiodic.  
\end{claim} 
Note this claim also held for the case when $\sigma_i \in \Special(n)$.  Observe that $\bR_j$ is indeed aperiodic, since if we assume otherwise, it implies that $\sigma_i$ is not 
an RL-rep.

\smallskip
{\bf  Suppose $m=1$.}  Then the RLE of $\beta$ is $r_1 1^v$, the RLE of $\hat \beta$ is $1(r_1{-}1)1^v$,  the RLE of $\alpha_i = \gamma_i$ is  $21^v(r_1{-}1)$ and 
the RLE of $\hat \alpha_i$ is  $1^{v+1}(r_1{-}1)$.  Thus, $R_j = (r_1{-}1)1^{v+1}$ and the RLE  of $\gamma_j$ is $21^{v}(r_1{-}2)$.
If $\bR_i$ is CCR-related and $\beta = \overline{\sigma_i}$, which begins with 0, then $R_j = R_k$ where $\sigma_j$ begins with 1 and $\sigma_k$ begins with 0.  Thus $j<k$.  
Otherwise,  $\beta = \sigma_i$.  
By its RLE, $\sigma_j \notin \Special(n)$, so $\alpha_j = \gamma_j$.
From the above RLEs,  $\PRR^{v+1}(\hat \alpha_i) = \hat \beta$ and thus $a=v+1$.   
Applying Observation~\ref{obs:PRR}, if $\bR_j$ is a CCR-related cycle and $\beta$ begins with 1, then it is easily verified that $b= (2n-2)-1$ is the smallest value such that $\PRR^{b}(\hat \alpha_i) = \alpha_j$; otherwise $b=(n-1)-1$ is the smallest value such that $\PRR^{b}(\hat \alpha_i) = \alpha_j$.  In both cases $a<b$.

\smallskip
{\bf Suppose $m>1$. }
Let $d = m$, unless $r_m = 2$, in which case let $d$ be the largest index less than $m$ such that $r_d > 1$.   Then given $R_j$, $\sigma_j = \PRR^{m-d+1+ (v+1)}(\gamma_j).$
Thus: 
\begin{center}
\begin{equation} \label{eq:sameb}
\alpha_j = \gamma_j = \left\{ \begin{array}{ll}
        \PRR^{(n-1) - (m-d+1)}(\hat \alpha_i) &\ \  \mbox{if $\bR_i$ is CCR-related;}\\
         \PRR^{(2n-2) - (m-d+1)}(\hat \alpha_i) &\ \  \mbox{if $\bR_i$ is PCR-related and $\sigma_i$ begins with 1;} \\
        \PRR^{(n-1) - (m-d+1)}(\hat \alpha_i) &\ \  \mbox{if $\bR_i$ is PCR-related and $\sigma_i$ begins with 0.}
        \end{array} \right.
\end{equation}
\end{center}
Consider $\beta = \overline{\gamma_i}$.  
 If $\bR_i$ is CCR-related then $\sigma_j$ begins with 1, but $\sigma_k$ begins with 0, and hence $j<k$.
Otherwise, $\bR_i$ is PCR-related  and $\bR_j$ is CCR-related and hence $\alpha_j = \gamma_j$.   Since both $\gamma_i$ and $\overline{\gamma_i}$ belong to $\bR_i$,  both $\sigma_i$ and $\overline{\sigma_i}$  belong to $\bR_i$.  Thus $\sigma_i$ begins with 1 and $b = (2n{-}2) - (m{-}d{+}1) $.
 Since $\alpha_i = \gamma_i$,  we have
$\PRR^{n-1}(\hat \alpha_i) =  \overline{\hat \alpha_i} = \hat \beta$.   
Thus, $a=n{-}1$ and clearly $a<b$.

Based on the possibilities given in Lemma~\ref{lem:RLEs} using $\omega = \sigma_i$, the RLE for other possible $\beta$ is of the form 
\begin{equation*} 
(r_s{-}j)r_{s+1}\cdots r_m 1^{v-1} r_1\cdots r_{s-1} (j{+}1),$$ 
\end{equation*}
for some $1\leq s \leq m$ where $r_s \geq 2$ and $0 \leq j \leq r_s-2$.  Thus, $\hat \beta$ has RLE of the form
\begin{equation*} \label{eq:hat}
1(r_s{-}j{-}1)r_{s+1}\cdots r_m 1^{v-1} r_1\cdots r_{s-1} (j{+}1).$$ 
\end{equation*}
Note that $r_1 \geq r_q$ for all $1 < q \leq m$.  If $R_k$ begins with a value less than $r_1$, then clearly $R_j > R_k$ and $j<k$. 
Otherwise, based on the possible RLEs for $\hat \beta$  and applying Lemma~\ref{lem:RLEs} (using $\omega = \hat \beta$), 
$R_k$ must begin with some $r_{s'} = r_1$ and have the form
\small
\begin{eqnarray}  
   &&  r_{s'} \cdots r_{s-1} \blue{(j{+}1)  (r_s{-}j{-}1) }  r_{s+1} \cdots r_m1^{v-1}  r_1\cdots r_{s'-1}1,   \mbox{\ \ such that  $0 < s' < s$,  or }  \label{eq:rk1} \\   
    && r_{s'} \cdots r_m 1^{v-1} r_1\cdots r_{s-1}  \blue{(j{+}1)  (r_s{-}j{-}1) }    r_{s+1}\cdots r_{s'-1}1,   \mbox{\ \ such that  $s < s' \leq m$,  }   \label{eq:rk2} 
\end{eqnarray}
where  $0\leq j \leq r_s-2$.
\normalsize 

Suppose $R_k$ has the form in (\ref{eq:rk1}).   Since $\sigma_i$ is an RL-rep, 
\[ r_1\cdots r_m1^{v}   \ \  \geq  \ \    r_{s'} \cdots r_{s-1} \blue{r_s}  r_{s+1} \cdots r_m1^{v-1}  r_1\cdots r_{s'-1}1.  \] 
We want to compare
\begin{eqnarray*} 
R_j  & =  &  r_1r_2\cdots r_{m-1}(r_m{-}1)1^{v+1} \  \ \ \mbox{ with \ \ \ }   \\
R_k & =   &  r_{s'} \cdots r_{s-1} \blue{(j{+}1) (r_s{-}j{-}1) }  r_{s+1} \cdots r_m1^{v-1}  r_1\cdots r_{s'-1}1.
\end{eqnarray*}
Clearly $R_j > R_k$ unless $r_{s'}\cdots r_{s-1}(j{+}1)= r_1\cdots r_{m-1}(r_m{-}1)$ and $(r_s{-}j{-}1)  r_{s+1} \cdots r_m1^{v-1}  r_1\cdots r_{s'-1}1 = 1^{v+1}$.
Thus, $s=m$, $s' = 1$, and $j = r_m-2$.   This implies $\beta = \gamma_i$ (same-rep) or $\beta = \overline{\gamma_i}$.

Suppose $R_k$ has the form in (\ref{eq:rk2}).  Since $\sigma_i$ is an RL-rep, 
\[ r_1\cdots r_m1^{v} \ \  \geq \ \     r_{s'} \cdots r_m 1^{v-1} r_1\cdots r_{s-1}  \blue{r_s}    r_{s+1}\cdots r_{s'-1}1 .  \] 
We want to compare
\begin{eqnarray*} 
R_j  & =  &  r_1r_2\cdots r_{m-1}(r_m{-}1)1^{v+1} \  \ \ \mbox{ with \ \ \ }   \\
R_k & =   &   r_{s'} \cdots r_m 1^{v-1} r_1\cdots r_{s-1}  \blue{(j{+}1) (r_s{-}j{-}1) }    r_{s+1}\cdots r_{s'-1}1.
\end{eqnarray*}
 Clearly $R_j \geq R_k$ based on the RLEs described before this claim.
 Suppose $R_j =  R_k$.  Then $s'-s \leq v$, and the suffix  $\blue{(r_s{-}j{-}1) } r_{s+1}\cdots r_{s'-1}1$ must be all 1s with
$(j+1) = r_{m}-1$.  Thus $j=r_s-2$ and $r_s=r_m$. 
 If $s'-s = v$, then the RLE for $\beta$ is the same as $R_i$.  Since $\sigma_i$ is an RL-rep, this implies that $r_1r_2\cdots r_m$ is periodic.  This is easily deduced 
 since a proper suffix of $r_1r_2\cdots r_m$  is equal to a prefix.
If $r_1r_2\cdots r_m$ is not of the form $(21^p)^q$ for $p\geq 0$ and $q>1$, then $R_k < R_j$, contradiction.  This means $\beta = \gamma_i$ (same-rep), or $\beta = \overline{\gamma_i}$  (already handled).
 If $s'-s < v$ then clearly $r_s = r_m =  2$, since both $\blue{(j{+}1)}$ and $\blue{(r_s{-}j{-}1)}$ must be 1.   
 Suppose $s \neq 1$.  Note that $r_{m-s+2} \cdots r_{m-1}(r_m-1)1^{v+1} = r_1\cdots r_{s-1}  \blue{11}    r_{s+1}\cdots r_{s'-1}1$.  
 However since $s'-s < v$, this implies that $r_1\cdots r_{s-1} < r_{m-s+2} \cdots r_m$ which contradicts the fact 
 that $\sigma_i$ is the RL-rep (applying Lemma~\ref{lem:RLEs}).  
 Thus $s=1$ and $r_1 = 2$.   Again comparing $R_j$ and $R_k$:
 \begin{eqnarray*} 
 R_j  & =  &   (r_1 \cdots r_{s'-1}) \ \blue{(r_{s'} \cdots r_{2s'-1})} \ \cdots \ (r_{m-s'} \cdots r_{m-1}) 1^{v+2} \  \ \ \mbox{ and \ \ \ }   \\
 R_k & =   &   \blue{(r_{s'} \cdots r_{2s'-1})} \   \cdots \ (r_{m-s'} \cdots r_{m-1}) r_m 1^{v-1} 1^{s'+1}.
 \end{eqnarray*}
 Since $r_1 \cdots r_{s'-1} = 21^{s'-2}$ and $r_m=2$, $\beta$ has RLE of the form $(21^{p})^q1^u$ where: 
 (i) $p \geq 0$ since $p=s'-2$;
 (ii)  $q>1$ since $m>1$;  and 
 (iii) $u  > 1$ since $u=(v+2) - (2 + s'-2) = v-s'+2$ and $s'-1 < v$.
%
Thus:  
\begin{itemize}
\item $\alpha_i = \gamma_i$ has RLE $21^{u+p-1} (21^p)^{q-1}1$,
\item $\hat \alpha_i$ has RLE $1^{u+p+1} (21^p)^{q-1}1$,
\item $\PRR^{u-1}(\hat \alpha_i)$ is $\hat \beta$ or $\overline{\hat \beta}$ ,  and
\item $d=m-(p+1)$  and $m-d+1 = p+2$.
\end{itemize}

Suppose $\bR_i$ is PCR-related.  Then $\bR_j$ is CCR-related cycle and thus $\alpha_j = \gamma_j$.
If $\beta$ begins with 1, then $\beta = \sigma_i$ and $a = u-1$, and $b=(2n-2) - (p+2)$.  Clearly $a < b$.
If $\beta$ begins with 0 and $p$ is odd, then $\beta = \sigma_i$, $a=u-1$, and $b = (n-1) - (p+2)$ from (\ref{eq:sameb}).  Clearly $a<b$.   
If $\beta$ begins with 0 and $p$ is even, then if $u > 1$, $\sigma_i \in \Special(n)$, contradiction.

Suppose $\bR_i$ is CCR-related.   Then $b = (n-1) - (p+2)$  from (\ref{eq:sameb}).
Suppose $\beta$ begins with 1;  $\beta = \sigma_i$.  If $p$ is even, then $R_j = R_k$, but $\sigma_j$ begins with 1 and $\sigma_k$ begins with 0.  Thus $j<k$.  If $p$ is odd, then $\sigma_j = \PRR^{p+2}(\hat \beta)$ and since $v=u+p$, we have $a=u+p+1 - (p+2) = u-1$ and thus $a<b$.  
Suppose $\beta$ begins with 0; $\beta = \overline{\sigma_i}$.  
If $p$ is odd, then $R_j = R_k$, but $\sigma_j$ begins with 1 and $\sigma_k$ begins with 0, and hence $j<k$.
If $p$ is even, then $\sigma_j = \PRR^{p+2}(\hat \beta)$ and again $a = u+p+1 - (p+2) = u-1$ and $a<b$.


\subsection{ $j<k$ } \label{sec:jk}

The proof for the case when $j<k$ applies the following two claims.

\begin{claim} \label{claim:jk}
If $\bR_j$ and $\bR_k$ have the same run length where $j<k$ such that $\sigma_j$ and $\sigma_k$  both begin with~1, then every string from $\bR_j$ appears in $\mathcal{X}'_n$ before any string from $\bR_k$.
\end{claim}
\begin{proof}
The proof is  by induction on the levels of the related tree of cycles rooted by $\bR_1$.  The base case trivially holds for cycles with run length $n$ since there is only one such cycle $\bR_1$.  Assume that the result holds for all cycles at levels with run length greater than $\ell < n$.
Consider two cycles $\bR_j$ and $\bR_k$ with run length $\ell$ such that $\sigma_j$ and $\sigma_k$  both begin with~1; neither $\sigma_j$ nor $\sigma_k$ are same-special and since $j < k$,   $R_j > R_k$.
 Let $\bR_x$ and $\bR_y$ denote the parents of $\bR_j$ and $\bR_k$, respectively.  By Claim~\ref{claim:parent}, both $\sigma_x$ and $\sigma_y$ begin with 1.
 Given $R_j > R_k$,  our earlier analysis (just before Claim~\ref{claim:parent}) implies that the RLE of $\sigma_x$ is greater than the RLE of $\sigma_y$.
Thus, by the ordering of the cycles, $x < y$. 
By induction, every string from $\bR_x$ appears before  every string from $\bR_y$ in $\mathcal{X}'_n$, and hence by Observation~\ref{obs:tree} (item 4), we have our result.  
\end{proof}

\begin{claim} \label{claim:same}
Let $\bR_k$ and $\bR_{k'}$ be cycles with $k'< k$ such that $\sigma_k$ and $\sigma_{k'}$ have the same RLE  $r_1r_2\cdots r_m1^v$ where $r_m>1$ and $v \geq 0$.
Then every string from $\bR_{k'}$ appears in $\mathcal{X}'_n$ before any string from $\bR_k$.
\end{claim}
%
\begin{proof}
By the ordering of the cycles, $\sigma_{k'}$ begins with 1 and $\sigma_k$ begins with 0; they belong to PCR-related cycles.
Note that $\sigma_k = \overline{\sigma_{k'}}$ and similarly $\gamma_k = \overline{\gamma_{k'}}$.  Thus, $\hat \sigma_k = \overline{\hat \sigma_{k'}}$   and $\hat \gamma_k = \overline{\hat \gamma_{k'}}$ and each pair, respectively, will belong to the same CCR-related cycle.  
If $\sigma_k \in \Special(n)$, we previously observed that $\hat \gamma_k$ and $\hat \sigma_k$ belong to the same cycle, and thus $\bR_k$ and $\bR_{k'}$ have the same parent.
If $\sigma_k \notin \Special(n)$, then $\bR_k$ and $\bR_{k'}$ also have the same parent containing both $\hat \gamma_k$ and $\hat \gamma_{k'}$.
Let $\bR_{\ell}$ be the shared parent of $\bR_k$ and $\bR_{k'}$.    
Since $\sigma_{k'}$ begins with 1,  $\alpha_{k'} = \gamma_{k'}$.   
If $m=1$, 
then we already saw that 
$\alpha_{\ell} =  \PRR^{(2n-3)}(\hat \alpha_{k'})$ and 
$\alpha_{\ell} =  \PRR^{(n-2)}(\hat \alpha_{k})$.
If $m>1$, we observed that  
$\alpha_{\ell} =  \PRR^{(2n-2) - (m-d+1)}(\hat \alpha_{k'})$  from (\ref{eq:sameb}), 
recalling $d=m$ unless $r_m = 2$, in which case let $d$ is the largest index less than $m$ such that $r_d > 1$.
If $\sigma_k \in \Special(n)$ then $\alpha_k = \sigma_k$ and from earlier analysis $\alpha_{\ell} =  \PRR^{(2n-2) - (v+1)}(\hat \alpha_k)$ noting $(m-d+1) \leq v$ in this case; otherwise,  $\alpha_k = \gamma_k$, 
 and $\alpha_{\ell} =  \PRR^{(n-1) - (m-d+1)}(\hat \alpha_k)$  from (\ref{eq:sameb}).  
In all cases, applying Observation~\ref{obs:tree},  $\hat \alpha_{k'}$ appears before $\hat \alpha_k$ in $\mathcal{X}'_n$,
and every string in $\bR_{k'}$ appears in $\mathcal{X}'_n$ before any string in $\bR_k$.
\end{proof}
  
Recall that $\sigma_j$ begins with 1 from Claim~\ref{claim:parent}.  Thus, if $\sigma_k$ begins with 1, then Claim~\ref{claim:jk} implies that all strings from $\bR_j$ appear  in $\mathcal{X}'_n$ before all strings from $\bR_k$.  
Otherwise, if $\sigma_k$ begins with 0, then it must correspond to a PCR-related cycle.  
Consider $\bR_{k'}$ containing RL-rep $\overline{\sigma_k}$ which begins with 1; it has the same RLE as $\sigma_k$. 
By Claim~\ref{claim:same}, all strings from $\bR_{k'}$ appear in $\mathcal{X}'_n$  before all strings from $\bR_{k}$.  If $j=k'$, we are done; 
otherwise $j<k'< k$ and  Claim~\ref{claim:jk} implies that all strings from $\bR_j$ appear in $\mathcal{X}'_n$ before all strings from $\bR_{k'}$.
Finally, by applying Observation~\ref{obs:tree} (item 4), all strings from $\bR_i$ including $\beta$ will appear in $\mathcal{X}'_n$ before all strings from $\bR_k$ including $\hat \beta$.

\section{Proof of Proposition~\ref{fact:tough2}} \label{sec:proof2}

The proof of this proposition follows  similar steps as the proof for Proposition~\ref{fact:tough}; however, the RLE analysis is less complex.
Recall that  $\mathcal{Y}_n$ = {\sc DB}($O, 10^{n-1}$) and $\mathcal{Y}'_n = 0^{n-1}\mathcal{Y}_n$.
We restate Proposition~\ref{fact:tough2}, reversing the roles of $\beta$ and $\hat \beta$ from its original statement for convenience:

\begin{quote}
If $\beta$ is a string in $\bB(n)$ such that the run-length of $\beta$ is one more than the run-length of $\hat \beta$ and neither $\beta$ nor $\hat \beta$ are opp-reps,  then $\beta$ appears before $\hat \beta$ in $\mathcal{Y}'_n$.
\end{quote}
The first step is to further refine the ordering of the cycles $\bR_1, \bR_2, \ldots , \bR_{t}$ used in the proof of Theorem~\ref{thm:successorO} to prove that $O(\omega)$ was a de Bruijn successor.   In particular, let $\bR_1, \bR_2, \ldots , \bR_{t-1}$ be the cycles of $\bB(n)$ induced by the $\PRR$, not including $\bR_t = \{1^n\}$,  ordered in non-decreasing order with respect to the run lengths of each cycle. 
\emph{\blue{This ordering is additionally refined so the cycles with the same run lengths are ordered in increasing order with respect to the RLE of the RL2-rep}}.    If two RL2-reps have the same RLE, then the cycle with RL2-rep starting with 0 comes first. 
Let $\sigma_i, \gamma_i, \alpha_i$ denote the RL2-rep, LC2-rep, and opp-rep, respectively,  for $\bR_i$, where $1 \leq i \leq t$;  let $R_i$ denote the RLE of $\sigma_i$.
Assume the run length of $\beta$ is one more than the run length of $\hat \beta$, and neither $\beta$ nor $\hat \beta$ are opp-reps. 
This run-length constraint implies that the RLE of $\beta$ must begin with 1.
Since each string in $\bR_1$ and $\bR_t$ has run length $1$, $\beta \in \bR_i$ for some $1 < i < t$.
Let $\bR_j$ contain $\hat \alpha_i$ which means $\bR_j$ is the parent of $\bR_i$.    
Let $\bR_k$ contain $\hat \beta$.  Like the proof in the previous section, we show that either $j<k$ or $j=k$; 
see Figure~\ref{fig:beta}.
The cases for when $j<k$ are handled in Section~\ref{sec:jk2}.   
As we analyze the cases when $j=k$, we find that $\bR_j$ is aperiodic.  Thus, by Observation \ref{obs:tree} (item 2), we determine
the smallest positive integers $a$ and $b$ such that $\PRR^a(\hat \alpha_i) = \hat \beta$ and 
$\PRR^b(\hat \alpha_i) = \alpha_j$ and demonstrate that $a<b$.  

%
\smallskip

\noindent
{\bf CASE 1: $\sigma_i \in \SpecialO(n)$}. ~
In this case $\alpha_i = \sigma_i$ begins with 1 and $R_i = 1x^zy$ where $z$ is odd and $y>x$. 
Thus $\hat \alpha_i \in \bR_j$ begins with 0 and has RLE $(x{+}1)x^{z-1}y$.
Considering the RLE possibilities of the other strings in $\bR_j$, as outlined in Lemma~\ref{lem:RLEs},  clearly
%
\[  \sigma_j = \PRR^{x}(\hat \alpha_i)  \mbox{ begins with 0 and has RLE } 1x^{z-1}(y{+}x), \]
and $\bR_j$ is aperiodic.
Suppose $\beta = \gamma_i$; it will have RLE  $1yx^z$ and begin with 0.
Observe that $\PRR^{y} (\hat \beta)$ has the same RLE as $\sigma_j$, but begins with 1. 
Thus, since $\bR_j$ is CCR-related and applying Observation~\ref{obs:PRR}, $\PRR^{y+ (n-1)}(\hat \beta) = \sigma_j$, 
and thus $\PRR^{(n-1)+x-y}(\hat \alpha_i)  = \hat \beta$.
By definition, $\PRR^{y+x}(\gamma_j) = \sigma_j$, which means that $\PRR^{y}(\gamma_j)  = \hat \alpha_i$.   
Since $\bR_j$ is CCR-related, $\alpha_j = \gamma_j$.  Thus $a = (n{-}1)+x-y$, $b = (2n{-}2)-y$, and clearly $a<b$.
For all other cases such that $\beta \neq \sigma_i$ (the opp-rep), it is a simple exercise to see that $R_j < R_k$, and hence $j<k$. 

\begin{exam} \small
Consider $\bR_i$ where $\alpha_i = \sigma_i = 10011001111  \in \SpecialO(11)$ and has RLE 12224.  The corresponding LC2-rep
 $\beta = \gamma_i = 01111001100$ has RLE 14222.   Below are the strings from $\bR_j$
including $\hat \sigma_i$ and $\hat \gamma_i$ in the order that they 
 appear in  $\mathcal{Y}'_{11}$.   Note that $\hat \beta$ appears after $\hat \alpha_i$ ($a=8, b=16$).

\medskip

\begin{tabular} {l}
00000011001 \\
00000110011 \\
00001100111 \\ 
00011001111  \  \ $\leftarrow$  \ \   \red{$\hat \sigma_i = \hat \alpha_i$} \\ 
00110011111 \\
01100111111  \ \ $\leftarrow$   \ \    $\sigma_j$, the RL2-rep, with RLE 1226 \\
11001111110 \\
10011111100 \\
00111111001 \\
01111110011 \\
11111100110  \\
11111001100   \ \ $\leftarrow$  \ \   \red{$\hat \gamma_i = \hat \beta$} \\
11110011000 \\
11100110000  \\
11001100000 \\
10011000000  \ \ $\leftarrow$   \ \    $\overline{\sigma_j}$ \\
00110000001 \\
01100000011 \\
11000000110 \\
10000001100 \ \ $\leftarrow   \ \ \alpha_j = \gamma_j$, the opp-rep and LC2-rep for this cycle \\
\end{tabular}

\end{exam}

\smallskip

\noindent
{\bf CASE 2: $\sigma_i  \notin \SpecialO(n)$}.  ~
If $m=1$ then $\bR_i$ is CCR-related and $R_i = 1r_1$.  
Thus $\beta = 01^{n-1} = \sigma_i$ since it is is not an opp-rep.  
However, $\hat \beta = 1^n$ is an opp-rep.  Contradiction.   
Thus, assume $m>1$.
By definition $\alpha_i = \gamma_i$. 
Assume $R_i = 1r_1r_2\cdots r_m$.  Then,
\begin{itemize}
\item $\alpha_i = \gamma_i$ has RLE  $1r_mr_1r_2\cdots r_{m-1}$ where  $\PRR^{r_m}(\alpha_i) = \sigma_i$, and
\item $\hat \alpha_i$ has RLE $(r_m{+}1)r_1r_2\cdots r_{m-1}$ and is in $\bR_j$.
\end{itemize}
Consider the RLE possibilities of the other strings in $\bR_j$ as outlined in Lemma~\ref{lem:RLEs}.  Given that $\sigma_i$ is an RL-rep, clearly
\begin{center}
$\sigma_j = \left\{ \begin{array}{ll}
        \PRR^{r_m}(\hat \alpha_i) &\ \  \mbox{if $\sigma_i$ begins with 0;}\\
        \PRR^{(n-1) + r_m}(\hat \alpha_i)  &\ \  \mbox{if $\sigma_i$ begins with 1 ($\bR_i$ is PCR-related).}\end{array} \right.$
\end{center}
In both cases $\sigma_j$ begins with 0 (implying $\alpha_j = \gamma_j$) and  $R_j =  1r_1r_2\cdots r_{m-2}(r_{m-1}{+}r_m)$.
\begin{claim} \label{claim:parent2}
If $\bR_j$ is the parent of $\bR_i$ then $\sigma_j$ begins with 0 and $\bR_j$ is aperiodic.  
\end{claim} 
Note this claim also held for the case when $\sigma_i  \in \SpecialO(n)$.  Observe that $\bR_j$ is indeed aperiodic, since if we assume otherwise, it implies that $\sigma_i$ is not an RL2-rep.  
By definition of an LC2-rep, $\sigma_j = \PRR^{r_{m-1} + r_m}(\gamma_j).$  
Thus: 
\begin{center}
\begin{equation} \label{eq:oppb}
\alpha_j = \gamma_j = \left\{ \begin{array}{ll}
        \PRR^{(n-1) - r_{m-1}}(\hat \alpha_i) &\ \  \mbox{if $\bR_i$ is CCR-related;}\\
         \PRR^{(2n-2)  - r_{m-1}}(\hat \alpha_i) &\ \  \mbox{if $\bR_i$ is PCR-related and $\sigma_i$ begins with 0;} \\
        \PRR^{(n-1) - r_{m-1}}(\hat \alpha_i) &\ \  \mbox{if $\bR_i$ is PCR-related and $\sigma_i$ begins with 1.}
        \end{array} \right.
 \end{equation}
\end{center}

Suppose $\beta = \overline{\gamma_i}$. 
 If $\bR_i$ is CCR-related then $\sigma_j$ begins with 0, but $\sigma_k$ begins with 1, and hence $j<k$.
Otherwise, $\bR_i$ is PCR-related and  $\bR_j$ is CCR-related and hence $\alpha_j = \gamma_j$.   Since both $\gamma_i$ and $\overline{\gamma_i}$ belong to $\bR_i$,  both $\sigma_i$ and $\overline{\sigma_i}$  belong to $\bR_i$.  Thus $\sigma_i$ begins with 0 and from (\ref{eq:oppb}), $b = (2n{-}2) - r_{m-1}$.
 Since $\alpha_i = \gamma_i$,  we have
$\PRR^{n-1}(\hat \alpha_i) =  \overline{\hat \alpha_i} = \hat \beta$.   
Thus, $a=n{-}1$ and clearly $a<b$.

Since the RLE of $\beta$ begins with 1,  from Lemma~\ref{lem:RLEs}, the RLE for $\beta$ must be of
the form $1r_s\cdots r_mr_1\cdots r_{s-1}$ for some $1 \leq s \leq m$.  Similar to our analysis for $R_j$,
$R_k$ must begin with 1 followed by a rotation of $r_{s+1}\cdots r_mr_1\cdots r_{s-2}(r_{s-1}{+}r_s)$.
Suppose $1 < s \leq m$.  Let $r_1\cdots r_m = (r_1\cdots r_p)^q$ for some largest $q \geq 1$.  
Then since $\sigma_i$ is an RL2-rep, $R_j < R_k$ unless $s$ is a multiple of $p$, in which 
case  $\beta = \gamma_i$ (opp-rep) or $\beta = \overline{\gamma_i}$ (already handled).
Suppose $s = 1$, which means $\beta = \sigma_i$ or $\beta = \overline{\sigma_i}$.   Since $\sigma_i$ is an RL2-rep,
for each $1 < s' \leq m$, the string $r_{s'}\cdots r_{m}$ is less than or equal to the prefix of $\sigma_i$ of the same length.
Thus, for $s' \neq 2$, $R_j < R_k$.  If $s'=2$, $R_j < R_k$ unless $r_1\cdots r_{m-2} = r_2\cdots r_{m-1}$ and $r_1 = r_{m-1}$, in which case $R_j = R_k$.  Since $\sigma_i$ is an RL2-rep, $r_m \geq r_1$.
Thus, $\beta$ has RLE of the form $1x^zy$ where $y \leq x$.  If $y=x$, then $\beta= \gamma_i$ (opp-rep) or $\beta = \overline{\gamma_i}$ (already handled).  Thus consider $y > x$.  We now consider whether or not $\bR_i$ is CCR-related or PCR-related.  Note $r_{m-1} = x$ and $r_m = y$.

Suppose $\bR_i$ is CCR-related.  If $\beta = \sigma_i$, then $\sigma_j$ begins with 0, but $\sigma_k$ begins with~1 and thus $j<k$.
Otherwise, if $\beta = \overline{\sigma_i}$ then $j=k$.  From (\ref{eq:oppb}), $b = (n{-}1) -x$.   Note that
$\PRR^{x}(\hat \beta) = \sigma_j$ and previously we observed that $\PRR^{y}(\hat \alpha_i) = \sigma_j$.
Thus, $a = y-x$ and clearly $a<b$.

Suppose $\bR_i$ is PCR-related. By its RLE, clearly $\overline{\sigma_i}$ is not in $\bR_i$.  Thus $\beta = \sigma_i$.
If $\beta$ begins with 1, then since $z$ must be odd, $\beta  \in \SpecialO(n)$ -- contradiction.
If $\beta$ begins with 0, then observe that $\PRR^x(\hat \beta) = \overline{\sigma_j}$ and hence
$\PRR^{(n-1) - x}(\sigma_j) = \hat \beta$.   Thus $a = (n-1)-x+y$.  From (\ref{eq:oppb}), $b=(2n-2)-x$ and
clearly $a<b$.

%

%

\smallskip


\subsection{ $j<k$ } \label{sec:jk2}

This section applies the same arguments as Section~\ref{sec:jk}.

\begin{claim} \label{claim:jk2}
If $\bR_j$ and $\bR_k$ have the same run length where $j<k$ such that $\sigma_j$ and $\sigma_k$  both begin with~0, then every string from $\bR_j$ appears in $\mathcal{Y}'_n$ before any string from $\bR_k$.
\end{claim}
\begin{proof}
The proof is  by induction on the levels of the related tree of cycles rooted by $\bR_1$.  The base case trivially holds for cycles with run length $1$, as there are not two cycles that meet the conditions.  Assume that the result holds for all cycles at levels with run length less than $\ell > 1$.
Consider two cycles $\bR_j$ and $\bR_k$ with run length $\ell$ such that $\sigma_j$ and $\sigma_k$  both begin with~0; neither $\sigma_j$ nor $\sigma_k$ are opp-special and since $j < k$,   $R_j < R_k$.
 Let $\bR_x$ and $\bR_y$ denote the parents of $\bR_j$ and $\bR_k$, respectively.  By Claim~\ref{claim:parent2}, both $\sigma_x$ and $\sigma_y$ begin with 0.
 Given $R_j > R_k$,  our earlier analysis (just before Claim~\ref{claim:parent2}) implies that the RLE of $\sigma_x$ is less than the RLE of $\sigma_y$.
Thus, by the ordering of the cycles,  $x < y$. 
By induction, every string from $\bR_x$ appears before  every string from $\bR_y$ in $\mathcal{Y}'_n$, and hence by Observation~\ref{obs:tree} (item 4), we have our result.  
\end{proof}

\begin{claim} \label{claim:same2}
Let $\bR_k$ and $\bR_{k'}$ be cycles with $k'< k$ such that $\sigma_k$ and $\sigma_{k'}$ have the same RLE  $1r_1r_2\cdots r_m$ where $m \geq 1$.  Then every string from $\bR_{k'}$ appears in $\mathcal{Y}'_n$ before any string from $\bR_k$.
\end{claim}
%
\begin{proof}
By the ordering of the cycles, $\sigma_{k'}$ begins with 0 and $\sigma_k$ begins with 1; they belong to PCR-related cycles.
Note that $\sigma_k = \overline{\sigma_{k'}}$ and similarly $\gamma_k = \overline{\gamma_{k'}}$.  Thus, $\hat \sigma_k = \overline{\hat \sigma_{k'}}$   and $\hat \gamma_k = \overline{\hat \gamma_{k'}}$ and each pair, respectively, will belong to the same CCR-related cycle.  
If $\sigma_k  \in \SpecialO(n)$ we previously observed that $\hat \gamma_k$ and $\hat \sigma_k$ belong to the same cycle, and thus $\bR_k$ and $\bR_{k'}$ have the same parent.
If $\sigma_k  \notin \SpecialO(n)$, then $\bR_k$ and $\bR_{k'}$ also have the same parent containing both $\hat \gamma_k$ and $\hat \gamma_{k'}$.
Let $\bR_{\ell}$ be the shared parent of $\bR_k$ and $\bR_{k'}$.    
Since $\sigma_{k'}$ begins with 0,  $\alpha_{k'} = \gamma_{k'}$.   
If $m=1$ there is only one cycle and it is CCR-related.
If $m>1$, then $\alpha_{\ell} =  \PRR^{(2n-2) - r_{m-1}}(\hat \alpha_{k'})$ from (\ref{eq:oppb}).
If $\sigma_k  \in \SpecialO(n)$, then $\alpha_k = \sigma_k$ and from earlier analysis $\alpha_{\ell} =  \PRR^{(2n-2) - r_m}(\hat \alpha_k)$, where $r_m = y$; otherwise $\alpha_k = \gamma_k$, 
 and $\alpha_{\ell} =  \PRR^{(n-1) - r_{m-1}}(\hat \alpha_k)$.  
In both cases, applying Observation~\ref{obs:tree},  $\hat \alpha_{k'}$ appears before $\hat \alpha_k$ in $\mathcal{Y}'_n$,
and every string in $\bR_{k'}$ appears in $\mathcal{Y}'_n$ before any string in $\bR_k$.
\end{proof}
  
Recall that $\sigma_j$ begins with 0 from Claim~\ref{claim:parent2}.  Thus, if $\sigma_k$ begins with 0, then Claim~\ref{claim:jk2} implies that all strings from $\bR_j$ appear  in $\mathcal{Y}'_n$ before all strings from $\bR_k$.  
Otherwise, if $\sigma_k$ begins with 1, then it must correspond to a PCR-related cycle.  
Consider $\bR_{k'}$ containing RL2-rep $\overline{\sigma_k}$ which begins with 0; it has the same RLE as $\sigma_k$. 
By Claim~\ref{claim:same2}, all strings from $\bR_{k'}$ appear in $\mathcal{Y}'_n$  before all strings from $\bR_{k}$.  If $j=k'$, we are done; 
otherwise $j<k'< k$ and  Claim~\ref{claim:jk2} implies that all strings from $\bR_j$ appear in $\mathcal{Y}'_n$ before all strings from $\bR_{k'}$.
Finally, by applying Observation~\ref{obs:tree} (item 4), all strings from $\bR_i$ including $\beta$ will appear in $\mathcal{Y}'_n$ before all strings from $\bR_k$ including $\hat \beta$.  

\section{Future work} \label{sec:fut}

The following questions provide avenues for future research.   \\ 

\noindent
{\bf P1.}  Can  $\same{n}$, $\opposite{n}$, or $\LC{n}$ be generated via a concatenation approach, and if so, can they be generated in $O(1)$ time per symbol using polynomial space?  \medskip

\noindent
{\bf P2.} The (greedy) prefer-same and prefer-opposite de Bruijn sequences for alphabets of size $k >2$ are described at \url{http://debruijnsequence.org}.
Are there simple de Bruijn successors for these generalized sequences? \medskip

\noindent
{\bf P3.} Does there exist an efficient decoding algorithm for the sequences $\same{n}$, $\opposite{n}$, or $\LC{n}$?   That is, without generating the sequence,
at what position $r$ do we find a given string $\omega$ (unranking)?  And, given a string $\omega$, at what position $r$ does it appear (ranking)?

\medskip

\noindent
{\bf P4.} Answer Conjecture~\ref{conj:lc}.  \medskip

\noindent
{\bf P5.} Can Fredricksen and Kessler's de Bruijn sequence construction $\LC{n}$~\cite{lexcomp} be generalized to larger alphabets?\\

\bigskip


\bibliographystyle{abbrv}

\bibliography{refs.bib}


\newpage
\appendix

\section{\large Implementation of the de Bruijn successors  $RL(\omega)$, $LC(\omega)$,  and $S(\omega)$ }

\scriptsize
\begin{code}
#include<stdio.h>
#include<math.h>
#define N_MAX 50
int n;

// =============================================================================
// Compute the RLE of a[1..m] in run[1..r], returning r = ruh length
// =============================================================================
int RLE(int a[], int run[], int m) {
    int i,j,r,old;
    
    old = a[m+1];
    a[m+1] = 1 - a[m];
    r = j = 0;
    for (i=1; i<=m; i++) {
        if (a[i] == a[i+1]) j++;
        else {  run[++r] = j+1;  j = 0;  }
    }
    a[m+1] = old;
    return r;
}
// ===============================================================================
// Check if a[1..n] is a "special" RL representative.  It must be that a[1] = a[n]
// and the RLE of a[1..n] is of the form (21^j)^s1^t where j is even, s >=2, t>=2
// ===============================================================================
int Special(int a[]) {
    int i,j,r,s,t,run[N_MAX];
    
    if (a[1] != 0  || a[n] != 0) return 0;    
    r = RLE(a,run,n);
    
    // Compute j of prefix 21^j
    if (run[1] != 2) return 0;
    j = 0;
    while (run[j+2] == 1 && j+2 <= r) j++;
    
    // Compute s of prefix (21^j)^s
    s = 1;
    while (s <= r/(1+j) -1 && run[s*(j+1)+1] == 2) {
        for (i=1; i<=j; i++) if (run[s*(j+1)+1+i] != 1) return 0;
        s++;
    }
    
    // Test remainder of string is (21^j)^s is 1^t
    for (i=s*(j+1)+1; i<=r; i++) if (run[i] != 1) return 0;
    t = r - s*(1+j);
    
    if (s >= 2 && t >= 2 && j
    return 0;
}
// =============================================================================
// Apply PRR^{t+1} to a[1..n] to get b[1..n], where t is the length of the
// prefix before the first 00 or 11 in a[2..n] up to n-2
// =============================================================================
int Shift(int a[], int b[]) {
    int i,t = 0;  
    while (a[t+2] != a[t+3] && t < n-2) t++;
    for (i=1; i<=n; i++) b[i] = a[i];
    for (i=1; i<=n; i++) b[i+n] = (b[i] + b[i+1] + b[n+i-1]) 
    for (i=1; i<=n; i++) b[i] = b[i+t+1];
    return t;
}
// =============================================================================
// Test if b[1..len] is the lex largest rep (under rotation), if so, return the
// period p; otherwise return 0. Eg. (411411, p=3)(44211, p=5) (411412, p=0).
// =============================================================================
int IsLargest(int b[], int len) {
    int i, p=1;
    for (i=2; i<=len; i++) {
        if (b[i-p] < b[i]) return 0;
        if (b[i-p] > b[i]) p = i;
    }
    if (len 
    return p;
}
// =============================================================================
// Membership testers not including the cycle containing 0101010...
// =============================================================================
int RLrep(int a[]) {
    int p,r,rle[N_MAX];
    
    r = RLE(a,rle,n-1);
    p = IsLargest(rle,r);

    // PCR-related cycle
    if (a[1] == a[n]) {
        if (r == n-1 && a[1] == 1) return 0;  // Ignore root a[1..n] = 1010101..
        if (r == 1) return 1;  // Special case: a[1..n] = 000..0 or 111..1
        if (p > 0 && a[1] != a[n-1] && (p == r || a[1] == 1 || p
    }
    // CCR-related cycle
    if (a[1] != a[n]) {
        if (p > 0 && a[1] == 1 && (a[n-1] == 1)) return 1;
    }
    return 0;
}
// =============================================================================
int LCrep(int a[]) {
    int b[N_MAX];
    
    if (a[1] != a[2]) return 0;
    Shift(a,b);
    return RLrep(b);
}
// =============================================================================
int SameRep(int a[]) {
    int b[N_MAX];
    
    Shift(a,b);
    if (Special(a) || (LCrep(a) && !Special(b))) return 1;
    return 0;
}
// =============================================================================
// Repeatedly apply the Prefer-Same or LC or RL successor rule starting with 1^n
// =============================================================================
void DB(int type) {
    int i,j,v,a[N_MAX],REP;

    for (i=1; i<=n; i++) a[i] = 1;  // Initial string
    
    for (j=1; j<=pow(2,n); j++) {
        printf("
        
        v = (a[1] + a[2] + a[n]) 
        REP = 0;
        // Membership testing of a[1..n]
        if (type == 1 && SameRep(a)) REP = 1;
        if (type == 2 && LCrep(a)) REP = 1;
        if (type == 3 && RLrep(a)) REP = 1;

        // Membership testing of conjugate of a[1..n]
        a[1] = 1 - a[1];
        if (type == 1 && SameRep(a)) REP = 1;
        if (type == 2 && LCrep(a)) REP = 1;
        if (type == 3 && RLrep(a)) REP = 1;

        // Shift String and add next bit
        for (i=1; i<n; i++) a[i] = a[i+1];
        if (REP) a[n] = 1 - v;
        else a[n] = v;
    }
}
//------------------------------------------------------
int main() {
    int type;
    
    printf("Enter (1) Prefer-same (2) LC (3) RL: ");  scanf("
    printf("Enter n: ");   scanf("
    
    DB(type);
}
\end{code}

\newpage

\section{\large Implementation of the de Bruijn successors  $RL2(\omega)$, $LC2(\omega)$,  and $O(\omega)$ }

\scriptsize
\begin{code}
#include<stdio.h>
#include<math.h>
#define N_MAX 50
int n;

// =============================================================================
// Compute the RLE of a[s..m] in run[1..r], returning r = run length
// =============================================================================
int RLE(int a[], int run[], int s, int m) {
    int i,j,r,old;
    
    old = a[m+1];
    a[m+1] = 1 - a[m];
    r = j = 0;
    for (i=s; i<=m; i++) {
        if (a[i] == a[i+1]) j++;
        else {  run[++r] = j+1;  j = 0;  }
    }
    a[m+1] = old;
    return r;
}
// ===============================================================================
// Check if a[1..n] is a "special" RL representative: the RLE of a[1..n] is of
// the form 1 x^j y where y > x and j is odd. Eg. 12224, 1111113 (PCR-related)
// ===============================================================================
int Special(int a[]) {
    int i,r,rle[N_MAX];
    
    r = RLE(a,rle,1,n);
    if (r
    for (i=3; i<r; i++) if (rle[i] != rle[2]) return 0;
    if  (a[1] == 1 && a[2] == 0 && i == r && rle[r] > rle[2]) return 1;
    return 0;
}
// =============================================================================
// Apply PRR^{t} to a[1..n] to get b[1..n], where t is the length of the
// prefix in a[1..n] before the first 01 or 10 in a[2..n]
// =============================================================================
int Shift(int a[], int b[]) {
    int i,t=1;
    
    while (a[t+1] == a[t+2] && t < n-1) t++;
    for (i=1; i<=n; i++) b[i] = a[i];
    for (i=1; i<=n; i++) b[i+n] = (b[i] + b[i+1] + b[n+i-1]) 
    for (i=1; i<=n; i++) b[i] = b[i+t];
    return t;
}
// =============================================================================
// Test if b[1..len] is the lex smallest rep (under rotation), if so, return the
// period p; otherwise return 0. Eg. (114114, p=3)(11244, p=5)(124114, p=0).
// =============================================================================
int IsSmallest(int b[], int len) {
    int i, p=1;
    for (i=2; i<=len; i++) {
        if (b[i-p] > b[i]) return 0;
        if (b[i-p] < b[i]) p = i;
    }
    if (len 
    return p;
}
// =============================================================================
// Membership testers with case for 111111...1  (run length for a[2..n])
// =============================================================================
int RL2rep(int a[]) {
    int p,r,rle[N_MAX];
    
    r = RLE(a,rle,2,n);
    if (r == 1) return 1;       // Special case: a[1..n] = 000..0 or 111..1
    if (a[1] == a[2]) return 0;
    p = IsSmallest(rle,r);

    if (a[1] == a[n] && p > 0 && (p == r || a[1] == 0 || p
    if (a[1] != a[n] && p > 0 && a[1] == 0) return 1;  // CCR-related
    return 0;
}
// =============================================================================
int LC2rep(int a[]) {
    int t,b[N_MAX];
    
    if (a[1] == a[2]) return 0;
    t = Shift(a,b);
    return RL2rep(b);
}
// =============================================================================
int OppRep(int a[]) {
    int b[N_MAX];
    
    Shift(a,b);
    if (Special(a) || (LC2rep(a) && !Special(b))) return 1;
    return 0;
}
// =============================================================================
// Repeatedly apply the Prefer Opp or LC or RL successor rule starting with 1^n
// =============================================================================
void DB(int type) {
    int i,j,v,a[N_MAX],REP;

    // Initial string
    for (i=1; i<=n; i+=2) a[i] = 0;
    for (i=2; i<=n; i+=2) a[i] = 1;
    
    for (j=1; j<=pow(2,n); j++) {
        printf("
        
        v = (a[1] + a[2] + a[n]) 
        REP = 0;
        // Membership testing of a[1..n]
        if (type == 1 && OppRep(a)) REP = 1;
        if (type == 2 && LC2rep(a)) REP = 1;
        if (type == 3 && RL2rep(a)) REP = 1;
        
        // Membership testing of conjugate of a[1..n]
        a[1] = 1 - a[1];
        if (type == 1 && OppRep(a)) REP = 1;
        if (type == 2 && LC2rep(a)) REP = 1;
        if (type == 3 && RL2rep(a)) REP = 1;

        // Shift String and add next bit
        for (i=1; i<n; i++) a[i] = a[i+1];
        if (REP) a[n] = 1 - v;
        else a[n] = v;
    }
}
//------------------------------------------------------
int main() {
    int type;
    
    printf("Enter (1) Prefer-opposite (2) LC2 (3) RL2: ");  scanf("
    printf("Enter n: ");   scanf("

    DB(type);
}
\end{code}

\end{document}